\documentclass[11pt,a4paper]{article}

\usepackage{fullpage}
\usepackage{amsmath,amsthm,amsfonts,dsfont}
\usepackage{amssymb,latexsym,graphicx}
\usepackage{enumitem}
\usepackage{palatino}
\usepackage{mathpazo}
\usepackage{stmaryrd}
\usepackage{mathtools}
\usepackage{hyperref}
\usepackage[lined,boxed]{algorithm2e}
\usepackage{subfigure}

\renewcommand{\cal}[1]{\mathcal{#1}}
\renewcommand{\vec}[1]{\mathbf{#1}}
\newcommand{\pr}[2]{\left\langle{#1, #2}\right\rangle}
\newcommand{\lat}{\mathcal{L}}
\newcommand{\set}[1]{\left\{#1\right\}}
\newcommand{\ceil}[1]{\left\lceil #1 \right\rceil}

\newcommand{\round}[1]{\left\lceil #1 \right\rfloor}

\newcommand{\N}{\ensuremath{\mathbb{N}}}

\newcommand{\R}{\ensuremath{\mathbb{R}}}
\newcommand{\Z}{\ensuremath{\mathbb{Z}}}
\newcommand{\Q}{\ensuremath{\mathbb{Q}}}
\newcommand{\eqdef}{\mathbin{\stackrel{\rm def}{=}}}

\def\V{\cal{V}}
\def\vol{{\rm vol}}
\def\conv{{\rm conv}}
\def\VR{{\rm VR}}
\def\unif{{\rm Uniform}}
\def\laplace{{\rm Laplace}}
\def\eps{\varepsilon}

\def\d{{\rm d}}

\makeatletter
\def\imod#1{\allowbreak\mkern8mu({\operator@font mod}\,\,#1)}
\makeatother

\DeclareMathOperator{\E}{\mathbb{E}}
\DeclareMathOperator{\VAR}{\mathbb{VAR}}

\DeclareMathOperator*{\argmin}{arg\,min}

\newcommand{\class}[1]{\ensuremath{\mathsf{#1}}}

\newcommand\otilde{\ensuremath{\widetilde{O}}}


\newcommand{\NP}{\class{NP}}

\newcommand{\problem}[1]{\ensuremath{\mathsf{#1}}}

\newcommand{\CVP}{\ensuremath{\problem{CVP}}}
\newcommand{\SVP}{\ensuremath{\problem{SVP}}}
\newcommand{\CVPP}{\ensuremath{\problem{CVPP}}}

\newcommand{\poly}{\ensuremath{{\rm poly}}}
\newcommand{\polylog}{\ensuremath{{\rm polylog}}}
\newcommand{\enc}[1]{\ensuremath{{\rm enc}\left(#1\right)}}

\theoremstyle{plain}
\newtheorem{theorem}{Theorem}
\newtheorem{lemma}[theorem]{Lemma}

\newtheorem{claim}[theorem]{Claim}
\newtheorem{definition}[theorem]{Definition}


\newif\ifnotes\notestrue

\ifnotes
\usepackage{color}
\definecolor{mygrey}{gray}{0.50}
\newcommand{\notename}[2]{{\textcolor{mygrey}{\footnotesize{\bf (#1:} {#2}{\bf ) }}}}

\else

\newcommand{\notename}[2]{{}}

\fi

\newcommand{\dnote}[1]{}
\newcommand{\nnote}[1]{}

\title{Short Paths on the Voronoi Graph and Closest Vector Problem with Preprocessing}
\author{
  Daniel Dadush\thanks{Department of Computer Science, New York University, New York (USA). \texttt{dadush@cs.nyu.edu}}
  \and
  Nicolas Bonifas\thanks{LIX, \'Ecole Polytechnique, Palaiseau and IBM, Gentilly (France). \texttt{nicolas.bonifas@polytechnique.edu}}
}
\date{\today }

\begin{document}
\maketitle

\begin{abstract}
\small\baselineskip=9pt
Improving on the Voronoi cell based techniques
of~\cite{journal/siamjdm/SommerFS09,journal/siamjc/MV13}, we give a \emph{Las Vegas} $\otilde(2^n)$
expected time and space algorithm for $\CVPP$ (the preprocessing version of the Closest Vector
Problem, $\CVP$). This improves on the $\otilde(4^n)$ deterministic runtime of the Micciancio
Voulgaris algorithm~\cite{journal/siamjc/MV13} (henceforth MV) for $\CVPP$~\footnote{The MV
algorithm also solves $\CVP$, as the preprocessing can be computed in the same time bound.} at the
cost of a polynomial amount of randomness (which only affects runtime, not correctness). 

As in MV, our algorithm proceeds by computing a short path on the Voronoi graph of the lattice,
where lattice points are adjacent if their Voronoi cells share a common facet, from the origin to a closest lattice
vector. Our main technical contribution is a randomized procedure that given the Voronoi relevant
vectors of a lattice -- the lattice vectors inducing facets of the Voronoi cell -- as preprocessing
and any ``close enough'' lattice point to the target, computes a path to a closest lattice vector
of expected polynomial size. This improves on the $\otilde(2^n)$ path length given by the MV
algorithm. Furthermore, as in MV, each edge of the path can be computed using a single iteration
over the Voronoi relevant vectors. 

As a byproduct of our work, we also give an optimal relationship between geometric and path distance
on the Voronoi graph, which we believe to be of independent interest. 

%
%
\end{abstract}

\textbf{Keywords.}  Closest Vector Problem, Lattice Problems, Convex Geometry.

\thispagestyle{empty}

\newpage

\setcounter{page}{1}

\section{Introduction}

An $n$ dimensional lattice $\lat$ in $\R^n$ is defined as all integer combinations of some basis
$B=(\vec{b}_1,\dots,\vec{b}_n)$ of $\R^n$. The most fundamental computational problems on lattices
are the Shortest and Closest Vector Problems, which we denote by $\SVP$ and $\CVP$ respectively.
Given a basis $B \in \R^{n \times n}$ of $\lat$, the $\SVP$ is to compute $\vec{y} \in \lat
\setminus \set{\vec{0}}$ minimizing $\|\vec{y}\|_2$, and the $\CVP$ is, given an additional target
$\vec{t} \in \R^n$, to compute a vector $\vec{y} \in \lat$ minimizing $\|\vec{t}-\vec{y}\|_2$
~\footnote{The $\SVP$ and $\CVP$ can be defined over any norm, though we restrict our attention here
to the Euclidean norm.}. 

The study of the algorithms and complexity of lattice problems has yielded many
fundamental results in Computer Science and other fields over the last three
decades. Lattice techniques were introduced to factor polynomials with rational
coefficients \cite{lenstra82:_factor} and to show the polynomial solvability of
integer programs with a fixed number of integer variables
\cite{lenstra82:_factor, lenstra83:_integ_progr_with_fixed_number_of_variab}. It
has been used as a cryptanalytic tool for breaking the security of knapsack
crypto schemes \cite{DBLP:journals/jacm/LagariasO85}, and in coding theory for
developing structured codes~\cite{DBLP:journals/jsac/Buda89} and asymptotically
optimal codes for power-constrained additive white Gaussian noise (AWGN)
channels~\cite{DBLP:journals/tit/ErezZ04}. Most recently, the security of
powerful cryptographic primitives such as fully homomorphic encryption
\cite{DBLP:conf/stoc/Gentry09, DBLP:conf/crypto/Gentry10, conf/itcs/BV14} have
been based on the worst case hardness of lattice problems.

\paragraph{{\bf The Closest Vector Problem with Preprocessing.}} In $\CVP$ applications, a common
setup is the need to solve many $\CVP$ queries over the same lattice but with varying targets. This
is the case in the context of coding over a Gaussian noise channel, a fundamental channel model in
wireless communication theory. Lattice codes, where the codewords correspond to a
subset of lattice points, asymptotically achieve the AWGN channel capacity (for fixed transmission
power), and maximum likelihood decoding for a noisy codeword corresponds (almost) exactly to a
$\CVP$ query on the coding lattice. In the context of lattice based public key encryption, in most
cases the decryption routine can be interpreted as solving an approximate (decisional) $\CVP$ over a
public lattice, where the encrypted bit is $0$ if the point is close and $1$ if it is far.

$\CVP$ algorithms in this setting (and in general), often naturally break into a preprocessing
phase, where useful information about the lattice is computed (i.e. short lattice vectors, a short
basis, important sublattices, etc.), and a query~/~search phase, where the computed advice is used
to answer $\CVP$ queries quickly. Since the advice computed during preprocessing is used across all
$\CVP$ queries, if the number of $\CVP$ queries is large the work done in the preprocessing phase
can be effectively ``amortized out''. This motivates the definition of the Closest Vector Problem
with Preprocessing ($\CVPP$), where we fix an $n$ dimensional lattice $\lat$ and measure only the
complexity of answering $\CVP$ queries on $\lat$ after the preprocessing phase has been completed
(crucially, the preprocessing is done before the $\CVP$ queries are known). To avoid trivial
solutions to this problem, i.e.~not allowing the preprocessing phase to compute a table containing
all $\CVP$ solutions, we restrict the amount of space (as a function of the encoding size of the
input lattice basis) needed to store the preprocessing advice.  

\paragraph{{\bf Complexity.}} While the ability to preprocess the lattice is very powerful, it was
shown in~\cite{Micciancio01} that $\CVPP$ is \NP-hard when the size of the preprocessing advice is
polynomial. Subsequently, approximation hardness for the gap version of $\CVPP$ (i.e.~approximately
deciding the distance of the target) was shown in
\cite{FeigeMicciancio04,Regev03B,DBLP:journals/cc/AlekhnovichKKV11}, culminating in a hardness
factor of $2^{\log^{1-\eps} n}$ for any $\eps>0$~\cite{DBLP:conf/stoc/KhotPV12} under the assumption
that $\NP$ is not in randomized quasi-polynomial time. On the positive side,
polynomial time algorithms for the approximate search version of $\CVPP$ were studied (implicitly)
in~\cite{DBLP:journals/combinatorica/Babai86, DBLP:journals/combinatorica/LagariasLS90}, where the
current best approximation factor $O(n/\sqrt{\log n})$ was recently achieved in~\cite{DadushSR14}.
For the gap decisional version of $\CVPP$, the results are better, where the current best
approximation factor is $O(\sqrt{n/\log n})$~\cite{DBLP:journals/jacm/AharonovR05}. 

\paragraph{{\bf Exact $\CVPP$ algorithms}.} Given the hardness results for polynomial sized
preprocessing, we do not expect efficient algorithms for solving exact $\CVPP$ for general lattices.
For applications in wireless coding however, one has control over the coding lattice, though
constructing coding lattices with good error correcting properties (i.e.~large minimum distance) for
which decoding is ``easy'' remains an outstanding open problem. In this context, the study of fast
algorithms for exact $\CVPP$ in general lattices can yield new tools in the context of lattice
design, as well as new insights for solving $\CVP$ without preprocessing.

The extant algorithms for exact $\CVPP$ are in fact also algorithms for $\CVP$, that is, the time to
compute the preprocessing is bounded by query / search time. There are currently two classes of
$\CVP$ algorithms which fit the preprocessing / search model (this excludes only the randomized sieving
approaches~\cite{DBLP:conf/stoc/AjtaiKS01,DBLP:conf/coco/AjtaiKS02}). 

The first class is based on lattice basis reduction~\cite{lenstra82:_factor}, which use a ``short''
lattice basis as preprocessing to solve lattice problems, that is polynomial sized preprocessing.
The fastest such algorithm is due to
Kannan~\cite{kannan87:_minkow_convex_body_theor_and_integ_progr}, with subsequent refinements
in~\cite{journal/tcs/Helfrich85, DBLP:journals/combinatorica/Babai86,
DBLP:conf/crypto/HanrotStehle07, conf/soda/MW14}, which computes a Hermite-Korkine-Zolatoreff basis (HKZ) during the
preprocessing phase in $\otilde(n^{\frac{n}{2e}})$\footnote{The $\otilde$ notation suppresses
polylogarithmic factors.} time and $\poly(n)$ space, and in the query phase uses a search tree to compute
the coefficients of the closest vector under the HKZ basis in $\otilde(n^{\frac{n}{2}})$ time and
$\poly(n)$ space. 

The second class, which are the most relevant to this work, use the Voronoi cell (see
Section~\ref{sec:prelims-vor} for precise definitions) of the lattice -- the centrally symmetric
polytope corresponding to the points closer to the origin than to other lattice points -- as
preprocessing, and were first introduced by Sommer, Feder and
Shalvi~\cite{journal/siamjdm/SommerFS09}. In~\cite{journal/siamjdm/SommerFS09}, they give an
iterative procedure that uses the facet inducing lattice vectors of the Voronoi cell (known as the
\emph{Voronoi relevant vectors}) to move closer and closer to the target, and show that this
procedure converges to a closest lattice vector in a finite number of steps. The number of Voronoi
relevant vectors is $2(2^n-1)$ in the worst-case (this holds for almost all lattices), and
hence Voronoi cell based algorithms often require exponential size preprocessing. Subsequently,
Micciancio and Voulgaris~\cite{journal/siamjc/MV13} (henceforth MV), showed how to compute the
Voronoi relevant vectors during preprocessing and how to implement the search phase such that each
phase uses $\otilde(4^n)$ time and $\otilde(2^n)$ space (yielding the first $2^{O(n)}$ time
algorithm for exact $\CVP$!).  

While Voronoi cell based $\CVPP$ algorithms require exponential time and space on general lattices,
it was recently shown in~\cite{arxiv/MGC14} that a variant of~\cite{journal/siamjdm/SommerFS09} can
be implemented in polynomial time for lattices of Voronoi's first kind -- lattices which admit a set
of $n+1$ generators whose Gram matrix is the Laplacian of a non-negatively weighted graph -- using
these generators as the preprocessing advice. Hence, it is sometimes possible to ``scale down'' the
complexity of exact solvers for interesting classes of lattices.   

\paragraph{{\bf Main Result.}} Our main result is a randomized $\otilde(2^n)$ expected time and
space algorithm for exact $\CVPP$, improving the $\otilde(4^n)$ (deterministic) running time of ${\rm
  MV}$. Our preprocessing is the same as ${\rm MV}$, that is we use the facet inducing lattice vectors of
the Voronoi cell, known as the \emph{Voronoi relevant} vectors (see Figure~\ref{fig:VRvectors}), as the
preprocessing advice, which in the worst case consists of $2(2^n-1)$ lattice vectors. Our main
contribution, is a new search algorithm that requires only an expected polynomial number of iterations
over the set of Voronoi relevant vectors to converge to a closest lattice vector, compared to
$\otilde(2^n)$ in ${\rm MV}$.

\begin{figure}
  \centering
  \def\svgwidth{0.6\textwidth}
  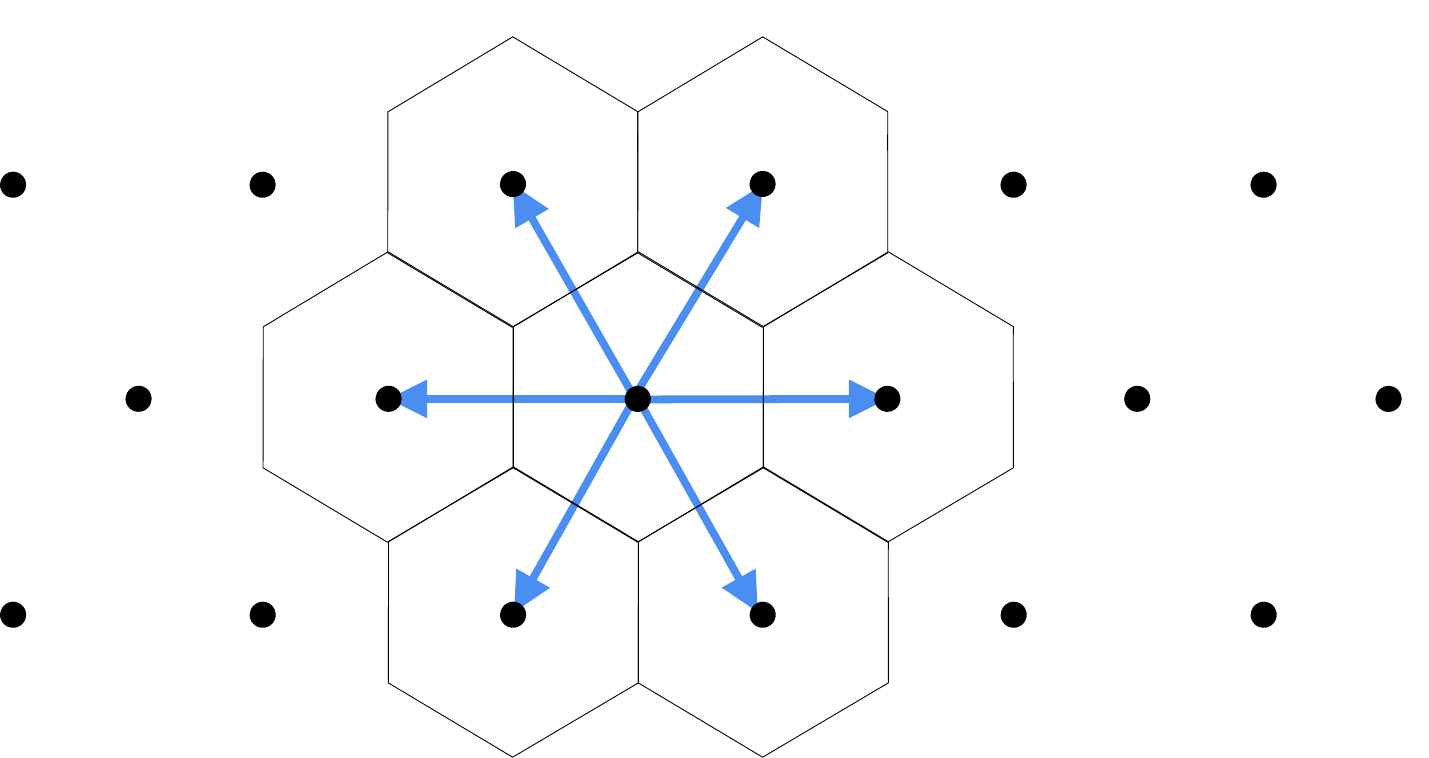
  \caption{Voronoi relevant vectors}
  \label{fig:VRvectors}
\end{figure}


One minor caveat to our iteration bound is that unlike that of ${\rm MV}$, which only depends on
$n$, ours also depends (at worst linearly) on the binary encoding length of the input lattice basis
and target (though the $\otilde(2^n)$ bound also holds for our procedure). Hence, while the bound is
polynomial, it is only ``weakly'' so. In applications however, it is rather anomalous to encounter $n$
dimensional lattice bases and targets whose individual coefficients require more than say $\poly(n)$
bits to represent, and hence the iteration bound will be $\poly(n)$ in almost all settings of
relevance. Furthermore, it is unclear if this dependence of our algorithm is inherent, or whether it
is just an artifact of the analysis.
 
While our algorithm is randomized, it is Las Vegas, and hence the randomness is in the runtime and
not the correctness. Furthermore, the amount of randomness we require is polynomial: it corresponds
to the randomness needed to generate a nearly-uniform sample from the Voronoi cell, which can be
achieved using Monte Carlo Markov Chain (MCMC) methods over convex bodies~\cite{DyerFK91,LV06}. This 
requires a polynomial number of calls to a membership oracle. Each membership oracle test requires an
enumeration over the $\otilde(2^n)$ Voronoi-relevant vectors, resulting in a total complexity of 
$\otilde(2^n)$.

Unfortunately, we do not know how to convert our $\CVPP$ improvement to one for $\CVP$. The
technical difficulty lies in the fact that computing the Voronoi relevant vectors, using the current
approach, is reduced to solving $\otilde(2^n)$ related lower dimensional $\CVP$s on an $n-1$
dimensional lattice (for which the Voronoi cell has already been computed). While the ${\rm MV}$
$\CVPP$ algorithm requires $\otilde(4^{n})$ for worst case targets (which we improve to
$\otilde(2^n)$), they are able to use the relations between the preprocessing $\CVP$s to solve each
of them in amortized $\otilde(2^n)$ time per instance. Hence, with the current approach, reducing
the running time of $\CVP$ to $\otilde(2^n)$ would require reducing the amortized per instance
complexity to polynomial, which seems very challenging.

\paragraph{{\bf Organization.}} In the next section,
section~\ref{sec:voronoi-navigation}, we explain how to solve $\CVPP$ by finding
short paths over the Voronoi graph. In particular, we review the iterative
slicer~\cite{journal/siamjdm/SommerFS09} and ${\rm
MV}$~\cite{journal/siamjc/MV13} algorithms for navigating the Voronoi graph, and
describe our new \emph{randomized straight line procedure} for this task. In
section~\ref{sec:path-length}, we state the guarantees for the randomized
straight line procedure and use it to give our expected $\otilde(2^n)$ time
$\CVPP$ algorithm (Theorem~\ref{thm:cvpp}), as well as an optimal relationship
between geometric and path distance on the Voronoi
graph~(Theorem~\ref{thm:geom-path-dist}). The main geometric estimates
underlying the analysis of the randomized straight path algorithm are proved in
section~\ref{sec:voronoi-proofs}. 

Definitions and references for the concepts and prior algorithms used in the
paper can be found in section~\ref{sec:prelims}. In particular, see
subsections~\ref{sec:lat-prelims} for basic lattice definitions, and
subsection~\ref{sec:prelims-vor} for precise definitions and fundamental facts
about the Voronoi cell and related concepts.


\section{Navigating the Voronoi graph}
\label{sec:voronoi-navigation}

In this section, we explain how one can solve $\CVP$ using an efficient navigation algorithm over
the Voronoi graph of a lattice. We first describe the techniques used
by~\cite{journal/siamjdm/SommerFS09,journal/siamjc/MV13} for finding short paths on this graph, and
then give our new (randomized) approach. 

\paragraph{{\bf Paths on the Voronoi graph.}} Following the strategy of
\cite{journal/siamjdm/SommerFS09,journal/siamjc/MV13}, our search algorithm
works on the Voronoi graph $\cal{G}$ of an $n$ dimensional lattice $\lat$. 

\begin{definition}[Voronoi Cell]
The Voronoi cell $\V(\lat)$ of $\lat$ is defined as
\begin{align*}
\V(\lat) &= 
\set{\vec{x} \in \R^n: \|\vec{x}\|_2 \leq \|\vec{x}-\vec{y}\|_2,~\forall \vec{y}
\in \lat \setminus \set{\vec{0}}} \\
&= \set{\vec{x} \in \R^n: \pr{\vec{x}}{\vec{y}} \leq 
\pr{\vec{y}}{\vec{y}}/2,~\forall \vec{y} \in \lat \setminus \set{\vec{0}}} \text{,}
\end{align*}
the set of points closer to $\vec{0}$ than any other lattice point. When
the lattice in question is clear, we simply write $\V$ for $\V(\lat)$. It was
shown by Voronoi that $\V$ is a centrally symmetric polytope with at most
$2(2^n-1)$ facets. We define $\VR$, the set of \emph{Voronoi relevant vectors}
of $\lat$, to be the lattice vectors inducing facets of $\V$. 

The Voronoi graph $\cal{G}$ is the contact graph induced by the tiling of space
by Voronoi cells, that is, two lattice vectors $\vec{x},\vec{y} \in \lat$ are
adjacent if their associated Voronoi cells $\vec{x}+\V$ and $\vec{y}+\V$ touch
in a shared facet (equivalently $\vec{x}-\vec{y} \in \VR$). We denote the
shortest path distance between $\vec{x},\vec{y} \in \lat$ on $\cal{G}$ by
$d_{\cal{G}}(\vec{x},\vec{y})$. 

See Section~\ref{sec:prelims-vor} for more basic facts about the Voronoi cell.
\end{definition}

To solve $\CVP$ on a target $\vec{t}$, the idea of Voronoi cell based methods is to compute a short
path on the Voronoi graph $\cal{G}$ from a ``close enough'' starting vertex $\vec{x} \in \lat$ to
$\vec{t}$ (usually, a rounded version of $\vec{t}$ under some basis), to the center $\vec{y} \in
\lat$ of a Voronoi cell containing $\vec{t}$, which we note is a closest lattice vector by
definition. (see Figure~\ref{fig:VoronoiCVP}).

\begin{figure}
  \centering
  \def\svgwidth{0.5\textwidth}
  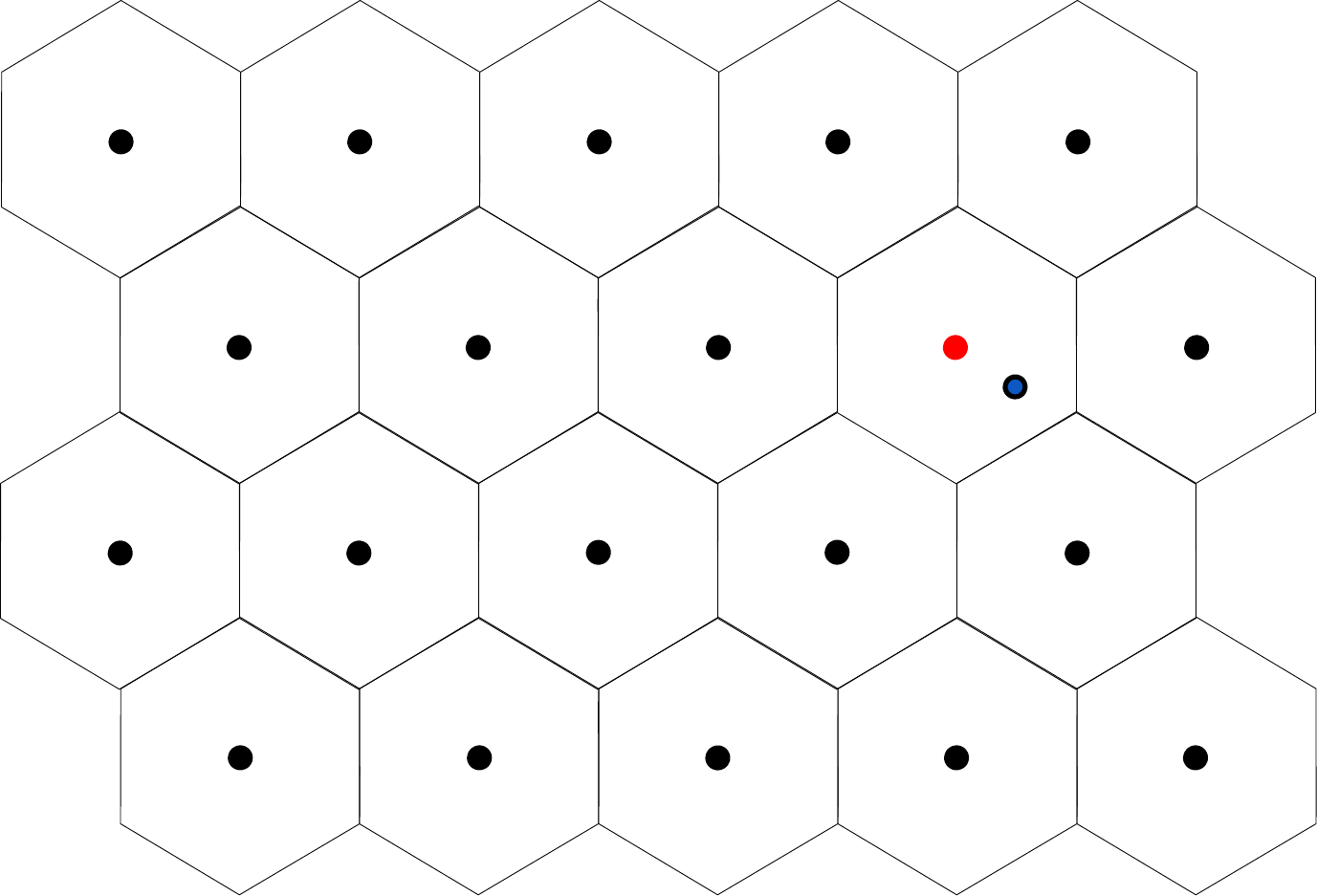
  \caption{CVP solution is the center of target-containing Voronoi cell}
  \label{fig:VoronoiCVP}
\end{figure}

\paragraph{{\bf Iterative slicer.}} The iterative slicer~\cite{journal/siamjdm/SommerFS09} was the first
$\CVP$ algorithm to make use of an explicit description of the Voronoi cell, in the form of the VR vectors.

The path steps of the iterative slicer are computed by greedily choosing any Voronoi relevant vector
that brings the current iterate $\vec{z} \in \lat$ closer to the target $\vec{t}$. That is, if there exists
a VR vector $\vec{v}$ such that $\|\vec{z}+\vec{v}-\vec{t}\|_2 < \|\vec{z}-\vec{t}\|_2$, then we
move to $\vec{z}+\vec{v}$. This procedure is iterated until there is no improving VR vector, at
which point we have reached a closest lattice vector to $\vec{t}$. This procedure was shown to
terminate in a finite number of steps, and currently, no good quantitative bound is known on its
convergence time.   

\paragraph{{\bf The Voronoi norm.}} We now make precise which notion of closeness to the target we
use (as well as MV) for the starting lattice vector $\vec{x}$ to the target $\vec{t}$. Notice that
for the path finding approach to make sense from the perspective of $\CVP$, we need to start the
process from a point $\vec{x} \in \lat$ that we know is apriori close in graph distance to
a closest lattice vector $\vec{y}$ to $\vec{t}$. Given the complexity of $\cal{G}$ and the fact that
we do not know $\vec{y}$, we will need a robust proxy for graph distance that we can estimate
knowing only $\vec{x}$ and $\vec{t}$. From this perspective, it was shown
in~\cite{journal/siamjc/MV13} that the Voronoi norm
\[
\|\vec{t}-\vec{x}\|_\V = \inf \set{s \geq 0: \vec{t}-\vec{x} \in s\V} = \sup_{\vec{v} \in \VR}
2\frac{\pr{\vec{v}}{\vec{t}-\vec{x}}}{\pr{\vec{v}}{\vec{v}}}
\]
of $\vec{t}-\vec{x}$ (i.e. the smallest scaling of $\V$ containing $\vec{t}-\vec{x}$) can be used to
bound the shortest path distance between $\vec{x}$ and $\vec{y}$. Here the quantity
$\|\vec{t}-\vec{x}\|_\V$ is robust in the sense that $\|\vec{y}-\vec{x}\|_\V \leq
\|\vec{t}-\vec{x}\|_\V + \|\vec{y}-\vec{t}\|_\V \leq \|\vec{t}-\vec{x}\|_\V + 1$ by the triangle
inequality. Hence from the perspective of the Voronoi norm, $\vec{t}$ is simply a ``noisy'' version
of $\vec{y}$. Furthermore, given that each Voronoi relevant vector has Voronoi norm $2$, one can
construct a lattice vector $\vec{x}$ such that $\|\vec{t}-\vec{x}\|_\V \leq n$, by simply expressing
$\vec{t} = \sum_{i=1}^n a_i \vec{v}_i$, for some linearly independent $\vec{v}_1,\dots,\vec{v}_n \in
\VR$, and letting $\vec{x} = \sum_{i=1}^n \round{a_i} \vec{v}_i$.

\paragraph{{\bf The MV Path.}} We now present the MV path finding approach, and give the
relationship they obtain between $\|\vec{t}-\vec{x}\|_\V$ and the path distance to a closest lattice
vector $\vec{y}$ to $\vec{t}$. 

The base principle of ${\rm MV}$~\cite{journal/siamjc/MV13} is similar to that of the iterative
slicer, but it uses a different strategy to select the next VR vector to follow, resulting in a
provably single exponential path length.

In ${\rm MV}$, a path step consists of tracing the straight line from the current path vertex
$\vec{z} \in \lat$ to the target $\vec{t}$, and moving to $\vec{z}+\vec{v}$ where $\vec{v} \in \VR$
induces a facet (generically unique) of $\vec{z}+\V$ crossed by the line segment
$[\vec{z},\vec{t}]$. It is not hard to check that each step can be computed using $O(n|\VR|) =
\otilde(2^n)$ arithmetic operations, and hence the complexity of computing the path is $O(n|\VR|
\times \text{ path length})$. 

The main bound they give on the path length, is that if the start vertex $\vec{x} \in 2\V+\vec{t}$
(i.e.~Voronoi distance less than $2$), then the path length is bounded by $2^n$. To prove the bound,
they show that the path always stays inside $\vec{t}+2\V$, that the $\ell_2$ distance to the target
monotonically decreases along the path (and hence it is acyclic), and that the number of lattice
vectors in the interior of $\vec{t}+2\V$ is at most $2^n$.

To build the full path, they run this procedure on the Voronoi graph for decreasing exponential
scalings of $\lat$~\footnote{The MV path is in fact built on a supergraph of the Voronoi graph,
which has edges corresponding to $2^i\VR$, $i \geq 0$.}, and build a path (on a supergraph of
$\cal{G}$) of length $O(2^n\log_2\|\vec{t}-\vec{x}\|_\V)$. One can also straightforwardly
adapt the MV procedure to stay on $\cal{G}$, by essentially breaking up the line segment
$[\vec{x},\vec{t}]$ in pieces of length at most $2$, yielding a path length of
$O(2^n\|\vec{x}-\vec{t}\|_\V)$. Since we can always achieve a starting distance of
$\|\vec{x}-\vec{t}\|_\V \leq n$ by straightforward basis rounding, note that the distance term is
lower order compared to the proportionality factor $2^n$. 

\paragraph{{\bf Randomized Straight Line.}} Given the $2^n$ proportionality factor between geometric
and path distance achieved by the MV algorithm, the main focus of our work will be to reduce the
proportionality factor to polynomial. In fact, will show the existence of paths of length
$(n/2)(\|\vec{t}-\vec{x}\|_\V+1)$, however the paths we are able to construct will be 
longer.

For our path finding procedure, the base idea is rather straightforward, we simply attempt to follow
the sequence of Voronoi cells on the straight line from the start point $\vec{x}$ to the target
$\vec{t}$. We dub this procedure the straight line algorithm. As we will show, the complexity of
computing this path follows the same pattern as ${\rm MV}$ (under certain genericity assumptions),
and hence the challenge is proving that the number of Voronoi cells the path crosses is polynomial.
Unfortunately, we do not know how to analyze this procedure directly. In particular, we are unable
to rule out the possibility that a ``short'' line segment (say of Voronoi length $O(1)$) may pass
through exponentially many Voronoi cells in the worst case (though we do not have any examples of
this). 


To get around the problem of having ``unexpectedly many'' crossings, we will make use of
randomization to perturb the starting point of the line segment. Specifically, we will use a
\emph{randomized straight line} path from $\vec{x} \in \lat$ to $\vec{t}$ which proceeds as follows
(see Figure~\ref{fig:RndStraightLine}):
\begin{enumerate}[label=(\Alph*)]
\item Move to $\vec{x}+Z$, where $Z \sim \unif(\V)$ is sampled uniformly from the Voronoi cell.
\item Follow the line from $\vec{x}+Z$ to $\vec{t}+Z$.   
\item Follow the line from $\vec{t}+Z$ to $\vec{t}$.
\end{enumerate}

\begin{figure}
  \centering
  \def\svgwidth{0.5\textwidth}
  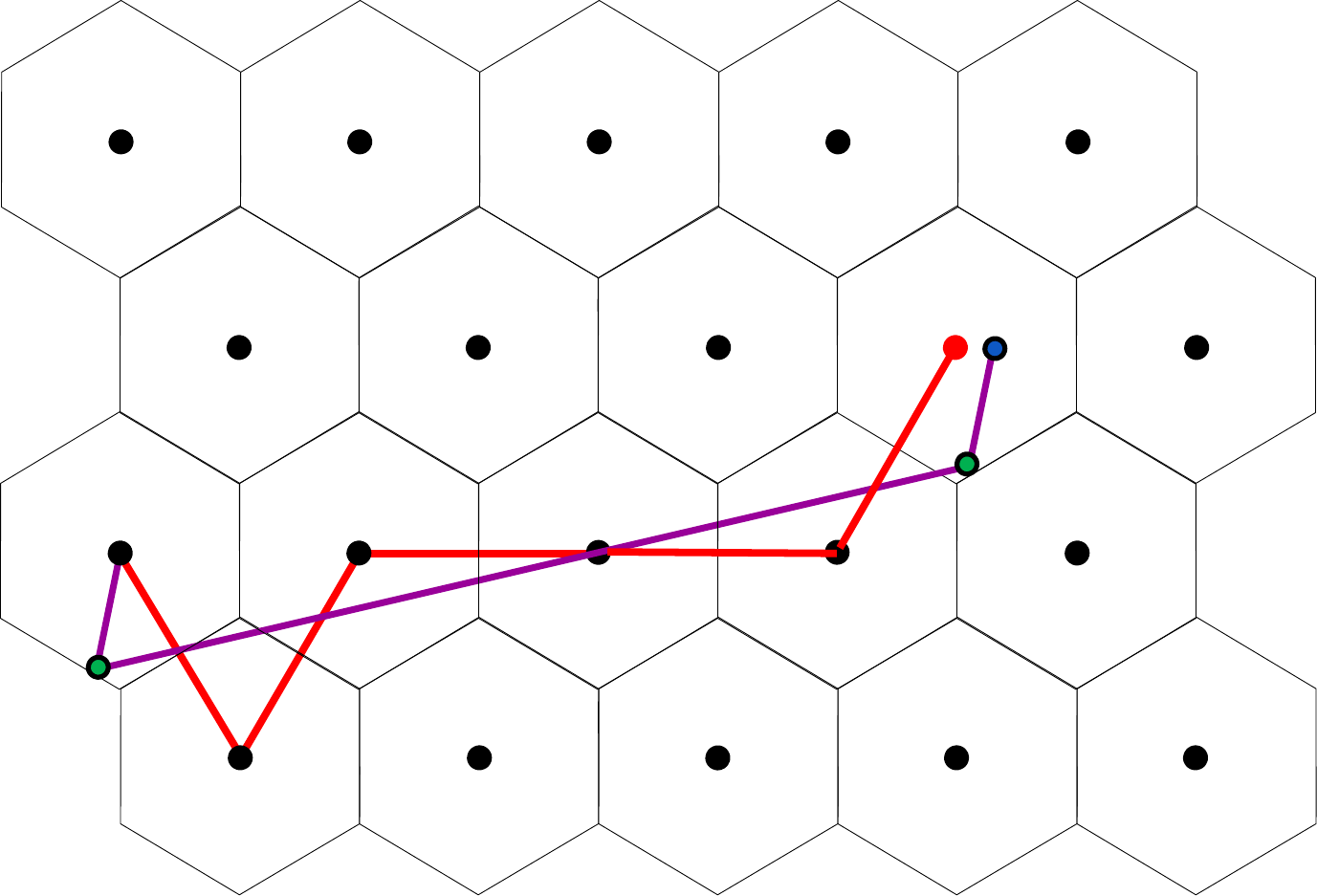
  \caption{Randomized Straight Line algorithm}
  \label{fig:RndStraightLine}
\end{figure}

We briefly outline the analysis bounding the expected number of Voronoi cells this path crosses,
which we claim achieves a polynomial proportionality factor with respect to $\|\vec{t}-\vec{x}\|_\V$. 

To begin, note that in phase A, we stay entirely within $\vec{x}+Z$, and hence do not cross any
Voronoi cells.

In phase B, at every time $\alpha \in [0,1]$, the point $(1-\alpha)\vec{x}+\alpha\vec{t}+Z$
is in a uniformly random coset of $\R^n / \lat$ since $Z$ is uniform. Hence the probability that we
we cross a boundary between time $\alpha$ and $\alpha+\eps$ is identical to the probability that we
cross a boundary going from $Z$ to $Z+\eps(\vec{t}-\vec{x})$. Taking the limit as $\eps \rightarrow
0$ and using linearity of expectation, we use the above invariance to show that the expected number
of boundaries we cross is bounded by $(n/2)\|\vec{t}-\vec{x}\|_\V$, the Voronoi distance between
$\vec{x}$ and $\vec{t}$. In essence, we relate the number of crossings to the probability that a
uniform sample from $\V$ (equivalently, a uniform coset) is close under the Voronoi norm to the
boundary $\partial\V$, which is a certain surface area to volume ratio.

Interestingly, as a consequence of our bound for phase B, we are able to give an optimal
relationship between the Voronoi distance between two lattice points and their shortest path
distance on $\cal{G}$, which we believe to be independent interest.  In particular, for two lattice
points $\vec{x},\vec{y} \in \lat$, we show in Theorem~\ref{thm:geom-path-dist} that the shortest
path distance on $\cal{G}$ is at least $\|\vec{x}-\vec{y}\|_\V/2$ and at most
$(n/2)\|\vec{x}-\vec{y}\|_\V$, which is tight for certain pairs of lattice points on $\Z^n$.

It remains now to bound the expected number of crossings in phase C. Here, the analysis is
more difficult than the second step, because the random shift is only on one side of the line
segment from $\vec{t}+Z$ to $\vec{t}$. We will still be able to relate the expected number of crossings to
``generalized'' surface area to volume ratios, however the probability distributions at each time
step will no longer be invariant modulo the lattice. In particular, the distributions become more
concentrated as we move closer to $\vec{t}$, and hence we slowly lose the benefits of the randomness
as we get closer to $\vec{t}$. Unfortunately, because of this phenomenon, we are unable to show in
general that the number of crossings from $\vec{t}+Z$ to $\vec{t}$ is polynomial. However, we will
be able to bound the number of crossings from $\vec{t}+Z$ to $\vec{t}+\alpha Z$ by
$O(n\ln(1/\alpha))$, that is, a very slow growing function of $\alpha$ as $\alpha \rightarrow 0$.
Fortunately, for rational lattices and targets, we can show that for $\alpha$ not too small, in
particular~$\ln(1/\alpha)$ linear in the size of binary encoding of the basis and target suffices,
that $\vec{t}+\alpha Z$ and $\vec{t}$ lie in the same Voronoi cell. This yields the claimed (weakly)
polynomial bound.


\section{Analysis and Applications of Randomized Straight Line}
\label{sec:path-length}

In this section, we give the formal guarantees for the randomized straight line algorithm and its
applications. The analysis here will rely on geometric estimates for the number of crossings, whose
proofs are found in Section~\ref{sec:voronoi-proofs}.  

To begin, we make formal the connection between Voronoi cells crossings, the length of the randomized
straight line path, and the complexity of computing it.

\begin{lemma}[Randomized Straight Line Complexity] Let $\vec{x} \in \lat$ be the starting point and let
$\vec{t} \in \R^n$ be the target. Then using perturbation $Z \sim \unif(\V)$, the expected edge length of
the path from $\vec{x}$ to a closest lattice vector $\vec{y}$ to $\vec{t}$ on $\cal{G}$ induced by the
randomized straight line procedure is
\[
\E[|(\lat + \partial\V) \cap [\vec{x}+Z,\vec{t}+Z]|] + 
\E[|(\lat + \partial\V) \cap [\vec{t}+Z,\vec{t})|] \text{ .}
\]
Furthermore, with probability $1$, each edge of the path can be computed using $O(n|\VR|)$
arithmetic operations.
\label{lem:rsl-complexity}
\end{lemma}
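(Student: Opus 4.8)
The randomized straight line path is the concatenation of the three segments $[\vec{x},\vec{x}+Z]$ (phase A), $[\vec{x}+Z,\vec{t}+Z]$ (phase B) and $[\vec{t}+Z,\vec{t}]$ (phase C). To read off a walk on $\cal{G}$ from a continuous path, one records at each time the lattice point $\vec{z}$ with the current position in $\mathrm{int}(\vec{z}+\V)$, and charges one edge to each time the path crosses $\lat+\partial\V$ transversally at a relative-interior point of a facet. The plan is first to establish a genericity statement: with probability $1$ over $Z\sim\unif(\V)$, (i) $\vec{x}+Z$ and $\vec{t}+Z$ lie in the interior of some Voronoi cell, and (ii) along each of the three segments every point of $(\lat+\partial\V)\cap(\text{segment})$ lies in the relative interior of a unique facet and is crossed transversally, and no sub-segment lies inside a facet hyperplane. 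Both follow because the Voronoi tiling is locally finite, so only finitely many facet hyperplanes are relevant along any compact segment or near any point; each ``bad'' configuration (a point landing on one of the countably many cosets of the finitely many facet hyperplanes or their lower-dimensional intersections, or a translated segment direction lying in such a hyperplane) is a measure-zero constraint on $Z$, and a countable union of null sets is null. Under genericity, the number of edges of the induced walk is exactly the number of crossing points, counted segment by segment.

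Second, I would count the contribution of each phase. Phase A contributes no edges: since $\vec{0}\in\mathrm{int}(\V)$ we have $\vec{x}\in\mathrm{int}(\vec{x}+\V)$, and w.p.\ $1$ also $\vec{x}+Z\in\mathrm{int}(\vec{x}+\V)$, so by convexity the whole segment stays in $\mathrm{int}(\vec{x}+\V)$ and meets no facet. Phase B contributes $|(\lat+\partial\V)\cap[\vec{x}+Z,\vec{t}+Z]|$ edges; since both endpoints lie in cell interiors w.p.\ $1$, taking the segment open or closed is immaterial. Phase C contributes $|(\lat+\partial\V)\cap[\vec{t}+Z,\vec{t})|$ edges; here the segment is kept half-open because $\vec{t}$ itself may lie on $\lat+\partial\V$ (when $\vec{t}$ has several closest lattice vectors). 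Local finiteness of the tiling shows the tail of this segment sufficiently close to $\vec{t}$ lies in a single closed cell $\vec{y}+\V$, and by the definition of $\V$ any such $\vec{y}$ is a closest lattice vector to $\vec{t}$; hence the walk indeed runs from $\vec{x}$ to a closest lattice vector $\vec{y}$, and its edge count equals the sum of the three (finite, again by local finiteness) counts above. Summing and applying linearity of expectation gives the claimed formula.

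For the ``furthermore'' part, I would reuse the MV edge computation. When the walk is at a lattice vertex $\vec{z}$ with current position $\vec{z}'$ heading toward a point $\vec{p}\in\{\vec{t}+Z,\vec{t}\}$, the facets of $\vec{z}+\V$ are $\{\vec{w}:\pr{\vec{w}-\vec{z}}{\vec{v}}=\pr{\vec{v}}{\vec{v}}/2\}$ for $\vec{v}\in\VR$; for each $\vec{v}$ one computes in $O(n)$ arithmetic operations the parameter at which the ray from $\vec{z}'$ toward $\vec{p}$ meets that hyperplane, discards the non-forward ones, and takes the minimum over $\VR$. By the genericity above the minimizer $\vec{v}^*$ is unique w.p.\ $1$, and the next vertex is $\vec{z}+\vec{v}^*$; the whole step costs $O(n|\VR|)$. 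Phase A is a single move producing no edge and costs $O(n)$.

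The main obstacle is the genericity argument: making fully rigorous that, with probability one over the single sample $Z$, all crossings along all three segments are ``clean'' — transversal, through relative interiors of facets, with a unique crossed facet — and that $\vec{x}+Z,\vec{t}+Z$ avoid the bad set, so that ``number of edges of the induced walk'' genuinely coincides with $|(\lat+\partial\V)\cap(\text{segment})|$. Everything else — the phase-A triviality, the half-open bookkeeping and termination at a closest vector in phase C, and the per-edge cost — is routine given local finiteness of the Voronoi tiling and the description of the facets of $\V$ via $\VR$.
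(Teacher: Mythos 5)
Your proposal is correct and follows essentially the same route as the paper's proof: a probability-one genericity claim (endpoints off the tiling boundary, crossings isolated, transversal, and through relative interiors of facets, proved via countable unions of measure-zero constraints on $Z$), the observation that phase A contributes no edges, reading off the walk on $\cal{G}$ from the cell sequence so that the edge count equals the two crossing counts, and an MV-style minimum-over-$\VR$ exit-facet computation costing $O(n|\VR|)$ per edge. The ``main obstacle'' you flag is handled in the paper exactly as you sketch it, so there is no substantive gap.
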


While rather intuitive, the proof of this Lemma is somewhat tedious, and so we
defer it to section~\ref{sec:appendix-path-length}. Note that $(\lat + \partial\V) \cap
[\vec{x}+Z,\vec{t}+Z]$ corresponds to the phase B crossings, and that $(\lat +
\partial\V) \cap [\vec{t}+Z,\vec{t})$ corresponds to the phase C crossings. 

Our bound for the phase B crossings, which is proved in Section~\ref{sec:phase-b}, is as
follows.

\begin{theorem}[Phase B crossing bound] Let $\lat$ be an $n$ dimensional lattice. Then for
$\vec{x},\vec{y} \in \R^n$ and $Z \sim {\rm uniform}(\V)$, we have that
\[
\E_Z[|(\lat + \partial\V) \cap [\vec{x}+Z,\vec{y}+Z]|] \leq (n/2)\|\vec{y}-\vec{x}\|_\V \text{ .}
\]
\label{thm:phase-b}
\vspace{-2em}
\end{theorem}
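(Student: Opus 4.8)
The plan is to realize the quantity $\E_Z[|(\lat+\partial\V)\cap[\vec{x}+Z,\vec{y}+Z]|]$ as an integral over the segment of a ``crossing density'' and then bound that density pointwise. Parametrize the segment as $\gamma(\alpha)=(1-\alpha)\vec{x}+\alpha\vec{y}$ for $\alpha\in[0,1]$, with velocity $\vec{u}=\vec{y}-\vec{x}$. For a fixed shift $Z$, the number of times the path $\alpha\mapsto\gamma(\alpha)+Z$ crosses $\lat+\partial\V$ equals (generically, ignoring measure-zero bad events such as hitting a lower-dimensional face) the number of $\alpha$ at which $\gamma(\alpha)+Z$ lies on some translate of $\partial\V$. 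The key structural observation, already flagged in the paper's outline, is the \emph{translation invariance}: since $Z\sim\unif(\V)$ and $\V$ is a fundamental domain for $\lat$, the point $\gamma(\alpha)+Z \bmod \lat$ is uniform on $\R^n/\lat$ for every fixed $\alpha$. Hence the instantaneous rate of crossing at ``time'' $\alpha$ depends only on the direction $\vec{u}$ and not on $\alpha$.

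Concretely, I would write
\[
\E_Z[|(\lat+\partial\V)\cap[\vec{x}+Z,\vec{y}+Z]|] = \int_0^1 \rho(\vec{u})\, \d\alpha = \rho(\vec{u}),
\]
where $\rho(\vec{u})$ is a limiting crossing rate that I define as $\rho(\vec{u}) = \lim_{\eps\to 0}\frac{1}{\eps}\Pr_Z[\,[Z, Z+\eps\vec{u}] \text{ meets } \lat+\partial\V\,]$. By the coset-uniformity of $Z$, this probability is, to first order in $\eps$, the normalized $(n-1)$-volume swept out: a point $z$ in the cell produces a crossing with an adjacent cell across the facet with outer normal $\vec{v}/\|\vec{v}\|$ (for $\vec{v}\in\VR$) precisely when $z$ is within $\eps$ of that facet \emph{on the inner side} and the velocity has positive component against the normal. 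Summing the facet contributions and using that the facet of $\V$ induced by $\vec{v}\in\VR$ lies in the hyperplane $\{\pr{\vec{v}}{\vec{x}} = \pr{\vec{v}}{\vec{v}}/2\}$, a standard Cauchy-type surface-area computation gives
\[
\rho(\vec{u}) = \frac{1}{\vol(\V)}\sum_{\vec{v}\in\VR} \vol_{n-1}(F_{\vec{v}})\cdot \frac{|\pr{\vec{v}}{\vec{u}}|}{\|\vec{v}\|_2}\cdot\frac{1}{2},
\]
or more cleanly, since opposite facets pair up, $\rho(\vec{u}) = \frac{1}{2\vol(\V)}\sum_{\vec{v}\in\VR}\vol_{n-1}(F_{\vec{v}})\frac{|\pr{\vec{v}}{\vec{u}}|}{\|\vec{v}\|_2}$ — essentially the mixed-volume / surface-area functional of $\V$ in direction $\vec{u}$. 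The task then reduces to the clean geometric inequality $\rho(\vec{u}) \le (n/2)\|\vec{u}\|_\V$.

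For the final inequality I would argue as follows. By homogeneity it suffices to treat $\|\vec{u}\|_\V = 1$, i.e. $\vec{u}\in\V$, and show $\rho(\vec{u})\le n/2$. Using the facet-hyperplane description, $\pr{\vec{v}}{\vec{u}}\le \pr{\vec{v}}{\vec{v}}/2 = \|\vec{v}\|_2^2/2$ for all $\vec{v}\in\VR$ (and symmetrically $\ge -\|\vec{v}\|_2^2/2$), so $|\pr{\vec{v}}{\vec{u}}|/\|\vec{v}\|_2 \le \|\vec{v}\|_2/2$, which is exactly the distance from the origin to the facet hyperplane $F_{\vec{v}}$, call it $h_{\vec{v}}$. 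Therefore $\rho(\vec{u}) \le \frac{1}{2\vol(\V)}\sum_{\vec{v}\in\VR}\vol_{n-1}(F_{\vec{v}})\,h_{\vec{v}} = \frac{n}{2}\cdot\frac{(1/n)\sum_{\vec{v}}\vol_{n-1}(F_{\vec{v}})h_{\vec{v}}}{\vol(\V)} = \frac{n}{2}$, since coning each facet to the origin partitions $\V$ and $\vol(\V) = \frac1n\sum_{\vec{v}\in\VR}\vol_{n-1}(F_{\vec{v}})\,h_{\vec{v}}$ (the pyramid volume formula, valid because $\vec{0}\in\mathrm{int}\,\V$). This gives the bound with the sharp constant $n/2$.

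The step I expect to be the main obstacle is making the passage from the discrete expected crossing count to the integral $\int_0^1\rho(\vec{u})\,\d\alpha$ fully rigorous: one must justify interchanging expectation, the limit $\eps\to 0$, and the sum over the infinitely many lattice translates of facets, and must show that degenerate crossings (the segment grazing a ridge of codimension $\ge 2$, or running along a facet, or $Z$ landing exactly on $\partial\V$) occur with probability zero and hence do not affect the count — this is where a careful Fubini / coarea argument is needed, and it is presumably why the paper isolates the measure-theoretic bookkeeping into Lemma~\ref{lem:rsl-complexity} and Section~\ref{sec:phase-b}. The geometric inequality itself, by contrast, is short once the crossing rate is identified. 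I would also double-check the edge case where $\vec{u}$ is parallel to a facet hyperplane (so that facet contributes a full-dimensional set of grazing points) and confirm it is still negligible or handled by the $|\pr{\vec{v}}{\vec{u}}|$ weighting going to zero.
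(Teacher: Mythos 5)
Your proposal is correct in outline, and its first half coincides with the paper's argument: both exploit that $Z \imod{\lat}$ is uniform on $\R^n/\lat$, so every point of the segment is uniform modulo the lattice, and the expected crossing count reduces (via a dyadic chunking / monotone convergence argument in the paper, which is exactly the bookkeeping you flag as the delicate step) to an infinitesimal escape rate $\lim_{\eps\to 0}\Pr[Z+\eps\vec{u}\notin\V]/\eps$ times the segment, applied per unit of $\|\vec{u}\|_\V$. Where you genuinely diverge is in how that rate is bounded. The paper proves a general lemma (Lemma~\ref{lem:exit-bnd}): for \emph{any} centrally symmetric convex body $K$ the escape rate in direction $\vec{u}$ equals $\vol_{n-1}(\pi(K))/\vol_n(K)$, the projection of $K$ along $\vec{u}$ normalized by volume, and this ratio is at most $(n/2)\|\vec{u}\|_K$ via a Schwarz-type rearrangement and the Brunn--Minkowski inequality. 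You instead use the polyhedral structure of $\V$ explicitly: the escape rate is the Cauchy-type facet sum $\frac{1}{2\vol_n(\V)}\sum_{\vec{v}\in\VR}\vol_{n-1}(F_{\vec{v}})\,|\pr{\vec{v}}{\vec{u}}|/\|\vec{v}\|_2$, and for $\|\vec{u}\|_\V\le 1$ the facet inequalities give $|\pr{\vec{v}}{\vec{u}}|/\|\vec{v}\|_2\le \|\vec{v}\|_2/2$, i.e.\ the facet distance, so the cone decomposition $\vol_n(\V)=\frac{1}{n}\sum_{\vec{v}\in\VR}\vol_{n-1}(F_{\vec{v}})\,\|\vec{v}\|_2/2$ (stated in Section~\ref{sec:prelims-vor}) yields the bound $n/2$ directly. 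Your route is more elementary -- no Brunn--Minkowski -- and makes the sharpness of the constant transparent (equality forces $|\pr{\vec{v}}{\vec{u}}|=\pr{\vec{v}}{\vec{v}}/2$ on all contributing facets, matching the $\Z^n$ tightness example in Theorem~\ref{thm:geom-path-dist}); the paper's lemma buys generality, holding for arbitrary symmetric convex bodies rather than just polytopes, though that generality is not needed here. The gaps you yourself identify (justifying the rate-times-length identity, discarding grazing and codimension-$\ge 2$ incidences, and the first-order prism-volume identification of the facet sum) are real but routine, and the paper resolves the first two exactly along the lines you sketch; note also that for the theorem only the inequality $\E[\cdot]\le\int_0^1\rho\,\d\alpha$-type bound is needed, so you need not establish your claimed equality.
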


For phase C, we give a bound on the number crossings for a truncation of the phase C path. That is,
instead of going all the way from $\vec{t}+Z$ to $\vec{t}$, we stop at $\vec{t}+\alpha Z$, for
$\alpha \in (0,1]$. Its proof is given in Section~\ref{sec:phase-c}.

\begin{theorem}[Phase C crossing bound] For $\alpha \in (0,1]$, $Z \sim \unif(\V)$, $n \geq 2$, we have that
\[
\E[|(\lat + \partial\V) \cap [Z+\vec{t},\alpha Z+\vec{t}]|] 
\leq \frac{e^2}{\sqrt{2}-1} n (2 + \ln(4/\alpha)) \text{ .}
\]
\label{thm:phase-c}
\end{theorem}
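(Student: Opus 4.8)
The plan is to mimic the proof of Theorem~\ref{thm:phase-b}: express the expected number of phase~C crossings as the integral over the path parameter of an expected \emph{instantaneous crossing rate}, recognize that rate as a (generalized) surface-area-to-volume ratio, bound it uniformly by $O(n)$, and integrate. Parametrize the phase~C path by $s\in[\alpha,1]$, so that at parameter $s$ we are at $\vec{t}+sZ$ and move with velocity $Z$. Since $\vec{t}+sZ$ is distributed as $\unif(\vec{t}+s\V)$, the same Federer/coarea-type change of variables used to prove Lemma~\ref{lem:rsl-complexity} should give
\[
\E\bigl[|(\lat+\partial\V)\cap[Z+\vec{t},\alpha Z+\vec{t}]|\bigr]
\;=\;\int_\alpha^1 \frac{1}{s}\,R(s)\,\d s,
\qquad
R(s):=\frac{1}{\vol(\V)}\int_{\Sigma_s}\bigl|\pr{\vec{z}}{\hat{\vec{n}}(\vec{z})}\bigr|\,\d\sigma(\vec{z}),
\]
where $\Sigma_s=\V\cap\tfrac1s\bigl(\lat-\vec{t}+\partial\V\bigr)$ is the boundary of the shifted, dilated Voronoi tiling of $\tfrac1s\lat$ restricted to the fixed cell $\V$, $\hat{\vec{n}}$ is its unit normal, and $\sigma$ is $(n-1)$-dimensional area. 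The factor $s^{-n}$ from the density $\unif(\vec{t}+s\V)$ and the factor $s^{n-1}$ from area scaling almost cancel, leaving the harmless $1/s$ — this is what produces the logarithmic dependence on $\alpha$.

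The heart of the matter is the uniform estimate $R(s)=O(n)$ for all $s\in(0,1]$. For $s=1$ this is immediate and clean: $\V$ is a \emph{fundamental domain} of $\lat$, so it contains exactly $\sigma(F_{\vec{v}})$ of the facet area of the tiling in each normal direction $\hat{\vec{n}}_{\vec{v}}$ ($\vec{v}\in\VR$); combining with $|\pr{\vec{z}}{\hat{\vec{n}}_{\vec{v}}}|\le\|\vec{v}\|/2$ for $\vec{z}\in\V$ (the width of $\V$ in direction $\hat{\vec{n}}_{\vec{v}}$ is exactly $\|\vec{v}\|$) and the pyramid identity $\sum_{\vec{v}}\|\vec{v}\|\,\sigma(F_{\vec{v}})=2n\vol(\V)$ gives $R(1)\le n/2$. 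For $s<1$ the difficulty is that $\V$ is only a \emph{subset} of a fundamental domain of $\tfrac1s\lat$, and the $\hat{\vec{n}}_{\vec{v}}$-facet area of the dilated tiling inside $\V$ lies between the ``full-cell'' value $s^{1-n}\sigma(F_{\vec{v}})$ (too large to integrate for small $s$) and the central cross-section $\vol_{n-1}(\V\cap\hat{\vec{n}}_{\vec{v}}^{\perp})$ of $\V$ (whose $\|\vec{v}\|$-weighted sum over the up to $2^n-1$ directions is not $O(n)$). One must interpolate these carefully: for a fixed shift $\vec{t}$, at parameter $s$ only the single $\hat{\vec{n}}_{\vec{v}}$-hyperplane of the dilated tiling nearest $\vec 0$ meets $\V$ (since dilation only widens facet spacings while the width of $\V$ in that direction stays $\|\vec{v}\|$), and its height and the slice area it cuts from $\V$ vary with $s$ in a way tied to the fractional position of $\pr{\vec{t}}{\hat{\vec{n}}_{\vec{v}}}/\|\vec{v}\|$; substituting this into $\int_\alpha^1 R(s)\,\d s/s$ and using unimodality of the slice-area function (Brunn) should, after reorganizing the sum over $\vec{v}$ against the pyramid identity, yield the bound $O(n)$ uniformly. \emph{I expect this interpolation to be the main obstacle} — it is exactly here that the special geometry of the Voronoi cell, and in particular the $\|\vec{v}\|$-weighting that makes the large relevant vectors contribute little (as in Theorem~\ref{thm:phase-b}), must be exploited to avoid an exponential-in-$n$ blow-up coming from the many facets.

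Granting $R(s)\le Cn$, integration gives $\int_\alpha^1 Cn\,\d s/s=Cn\ln(1/\alpha)$; the extra additive $O(n)$ term and the explicit constant $\tfrac{e^2}{\sqrt2-1}$ in the statement come from executing the above more quantitatively — I would cover $[\alpha,1]$ by intervals on which $s$ changes by a factor $\sqrt2$, bound $R$ on each interval by its value at the right endpoint while controlling the resulting distortion of the $s^{-n}$-type factors (this is where the $e^2$ slack enters, e.g.\ from $(1+\tfrac1n)^{n}$-style losses), and sum the geometric series in the interval lengths (whence $\tfrac{1}{\sqrt2-1}$), with the ``$2+\ln(4/\alpha)$'' shape recording the mild enlargement of the effective integration range at the endpoints.
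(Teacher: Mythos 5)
Your high-level skeleton (write the expectation as an integral over scales of a generalized surface-area-to-volume ratio, bound that ratio by $O(n)$ per scale, and integrate to get $n\ln(1/\alpha)$) is indeed the shape of the paper's argument, and your first display is essentially Lemma~\ref{lem:cross-expr} specialized to the uniform density. But there is a genuine gap exactly where you flag ``the main obstacle'': you never establish the uniform-in-$s$ bound $R(s)=O(n)$ for the \emph{uniform} perturbation, and the one concrete geometric claim you offer in its support --- that at parameter $s$ only the single hyperplane with normal $\vec{v}/\|\vec{v}\|_2$ nearest the origin meets $\V$, ``since dilation only widens facet spacings while the width of $\V$ in that direction stays $\|\vec{v}\|_2$'' --- is unjustified and in general false: the parallel facet-carrying hyperplanes of the tiling with a given normal are spaced by $\min\set{|\pr{\vec{y}}{\vec{v}}| : \vec{y}\in\lat,\ \pr{\vec{y}}{\vec{v}}\neq 0}/\|\vec{v}\|_2$, which can be far smaller than $\|\vec{v}\|_2$, so for $s$ not small many translates of the same facet direction can cross $\V$. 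More fundamentally, the obstruction the paper identifies is that for the uniform density the local ``charge the facet area to adjacent volume'' argument breaks: the tile (or cone) adjacent to a facet piece inside $\V$ may lie mostly outside $\V$, where the uniform density is zero, so the per-scale surface integral cannot be controlled by a volume integral of the same density. The paper does not prove your $R(s)=O(n)$ claim for uniform $Z$; it sidesteps it.

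What the paper actually does is replace $Z\sim\unif(\V)$ by a Laplace perturbation $X\sim\laplace(\V,\theta_n)$ via the representation $X=rZ$, $r\sim\Gamma(n+1,\theta_n)$ concentrated just above $1$ (Lemmas~\ref{lem:gamma-conc} and~\ref{lem:unif-to-lap}); this costs a factor $2$ and a slight shrinking of the truncation point by $\gamma_n$. Because the Laplace density is positive everywhere and varies only by a factor $e^{\pm\|\cdot\|_\V/\theta}$, one can compare the surface integral over each facet of $s(\lat-\vec{t}+\partial\V)$ with the integral of $\|\vec{x}\|_\V f_\V^\theta(\vec{x})$ over a thin slab of the cone $s(\vec{y}-\vec{t})+\conv(\vec{0},sF_{\vec{v}})$ adjacent to that facet (Claim~\ref{cl:sab-inner}, using the elementary Lemma~\ref{lem:simple-int-bnd} to handle sign changes of the linear weight), and then sum over all facets against the identity $\E[\|X\|_\V]=n\theta$ (Lemma~\ref{lem:surface-area-bnd}). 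This yields the per-scale bound $c\max\set{n^2\theta/s^2,\, n/s}$, whose integration gives Theorem~\ref{thm:laplace-cross-bnd} and then Theorem~\ref{thm:phase-c}; the constants $e^2$ and $\sqrt{2}-1$ arise from this local comparison, not from a dyadic covering of the scale interval as you suggest. So to complete your route you would either have to prove the uniform-density per-scale $O(n)$ bound (which the authors explicitly could not do, and which your interpolation sketch does not deliver), or import the paper's smoothing step.
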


 


Using the crossing estimate for phase B, we now show that from the perspective of existence, one can
improve the MV proportionality factor between geometric and path distance from exponential to linear
in dimension.  

\begin{theorem}\label{thm:geom-path-dist} For $\vec{x},\vec{y} \in \lat$, we have that
\[
(1/2) \|\vec{x}-\vec{y}\|_\V \leq d_{\cal{G}}(\vec{x},\vec{y}) \leq (n/2) \|\vec{x}-\vec{y}\|_\V \text{.}
\]
Furthermore, the above is best possible, even when restricted to $\lat = \Z^n$.
\end{theorem}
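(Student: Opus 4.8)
The upper bound $d_{\cal{G}}(\vec{x},\vec{y}) \leq (n/2)\|\vec{x}-\vec{y}\|_\V$ should follow directly from the Phase B crossing estimate (Theorem~\ref{thm:phase-b}) by the probabilistic method. For any fixed $\vec{x},\vec{y} \in \lat$ and $Z \sim \unif(\V)$, the randomized straight-line segment from $\vec{x}+Z$ to $\vec{y}+Z$ crosses exactly $|(\lat+\partial\V)\cap[\vec{x}+Z,\vec{y}+Z]|$ facet-boundaries; since $\vec{y}-\vec{x} \in \lat$, the endpoints $\vec{x}+Z$ and $\vec{y}+Z$ lie in the cells $\vec{x}+\V$ and $\vec{y}+\V$ (for $Z$ in the interior of $\V$, which happens with probability $1$), so the sequence of cells visited gives a walk on $\cal{G}$ from $\vec{x}$ to $\vec{y}$ of that length. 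By Theorem~\ref{thm:phase-b} the expected number of crossings is at most $(n/2)\|\vec{y}-\vec{x}\|_\V$, so there exists a realization of $Z$ achieving at most that many crossings, hence a path of that length. (One should remark that with probability $1$ the segment avoids lower-dimensional faces of the tiling, so consecutive visited cells genuinely share a facet.) I would also note that $\|\vec{x}-\vec{y}\|_\V$ is an integer-free real here but $d_{\cal G}$ is an integer, so the bound is really $\lfloor (n/2)\|\vec{x}-\vec{y}\|_\V\rfloor$ — a cosmetic point.

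For the lower bound $d_{\cal{G}}(\vec{x},\vec{y}) \geq (1/2)\|\vec{x}-\vec{y}\|_\V$: consider any path $\vec{x}=\vec{z}_0,\vec{z}_1,\dots,\vec{z}_k=\vec{y}$ on $\cal{G}$, so each step satisfies $\vec{z}_{i}-\vec{z}_{i-1}=\vec{v}_i \in \VR$. Every Voronoi relevant vector has Voronoi norm exactly $2$ (stated in the excerpt), i.e.\ $\|\vec{v}_i\|_\V = 2$. By the triangle inequality for the Voronoi norm (it is a genuine norm since $\V$ is a centrally symmetric convex body), $\|\vec{x}-\vec{y}\|_\V = \|\sum_{i=1}^k \vec{v}_i\|_\V \leq \sum_{i=1}^k \|\vec{v}_i\|_\V = 2k$, so $k \geq (1/2)\|\vec{x}-\vec{y}\|_\V$. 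Taking the minimum over paths gives the claim.

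For tightness: I would exhibit explicit pairs in $\Z^n$ with matching bounds. For the lower bound, take $\vec{x}=\vec{0}$ and $\vec{y}=2\vec{e}_1$ (or any $\vec{y}$ that is a short sum of standard basis vectors realized edge-by-edge): here $\|\vec{y}\|_\V$ and $d_{\cal G}$ should agree up to the factor $1/2$. The genuinely interesting direction is tightness of the $(n/2)$ factor, which the figure/discussion earlier hints holds on $\Z^n$: I would take $\vec{y}=\vec{e}_1+\cdots+\vec{e}_n$ (the all-ones vector), compute $\|\vec{y}\|_\V$ — the relevant vectors of $\Z^n$ are $\pm\vec{e}_i$ with $\|\cdot\|_\V$ given by $\sup_i 2|\langle \vec{e}_i, \vec{y}\rangle| = 2$ — so $\|\vec{y}\|_\V = 2$, while any path must change each coordinate by $1$ using relevant vectors $\pm\vec{e}_i$, forcing at least $n$ steps; hence $d_{\cal G}(\vec{0},\vec{y}) = n = (n/2)\cdot 2 = (n/2)\|\vec{y}\|_\V$, matching the upper bound exactly. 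The main obstacle, and the only place requiring care, is verifying this last claim that no shorter path exists — that is, that the monotone $\ell_\infty$-type argument really is forced: I would argue that $\langle \vec{e}_i, \cdot\rangle$ must advance from $0$ to $1$ along the path, and each relevant vector $\pm\vec{e}_j$ changes exactly one such coordinate by $\pm1$, so at least $n$ steps are needed. Everything else is routine.
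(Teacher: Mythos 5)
Your proposal is correct and takes essentially the same route as the paper: the lower bound via the triangle inequality together with $\|\vec{v}\|_\V = 2$ for all $\vec{v} \in \VR$, the upper bound by applying the probabilistic method to the phase B crossing bound of Theorem~\ref{thm:phase-b} (the paper packages the cell-sequence-to-path step as Lemma~\ref{lem:rsl-complexity}, with phase C contributing zero steps since $\vec{y} \in \lat$ and $Z \in {\rm int}(\V)$ almost surely), and the all-ones vector in $\Z^n$ for tightness of the factor $n/2$. One small slip in an aside: a general ``short sum of standard basis vectors'' such as $\vec{e}_1+\vec{e}_2$ does not witness tightness of the lower bound (there $d_{\cal{G}}=2$ while $(1/2)\|\vec{e}_1+\vec{e}_2\|_\V = 1$); your explicit choice $2\vec{e}_1$, or the paper's choice of any pair of adjacent lattice vectors, does.
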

\begin{proof}
For the lower bound, note that $d_{\cal{G}}(\vec{x},\vec{y})$ is the minimum $k \in \Z_+$ such that
there exists $\vec{v}_1,\dots,\vec{v}_k \in \VR$ satisfying $\vec{y} = \vec{x} + \sum_{i=1}^k
\vec{v}_i$. Since $\forall \vec{v} \in \VR$, $\|\vec{v}\|_\V = 2$, by the triangle inequality
\[
\|\vec{y}-\vec{x}\|_\V = \|\sum_{i=1}^k \vec{v}_i\|_\V \leq \sum_{i=1}^k \|\vec{v}_i\|_\V = 2k \text{,}
\]
as needed.

For the upper bound, we run the randomized straight line procedure from $\vec{x}$ to $\vec{y}$,
i.e.~setting $\vec{t}=\vec{y}$. By Lemma~\ref{lem:rsl-complexity}, the expected path length on
$\cal{G}$ is
\[
\E[|(\lat+\partial\V) \cap [\vec{x}+Z,\vec{y}+Z]|] + \E[|(\lat+\partial\V) \cap [\vec{y}+Z,\vec{y})|] 
\]
where $Z \sim \unif(\V)$. Since $\vec{y} \in \lat$ and $Z \in {\rm int}(\V)$ with probability $1$,
note that
\[
\E[|(\lat+\partial\V) \cap [\vec{y}+Z,\vec{y})|] = 0\text{ ,}
\]
i.e.~the number of steps in phase C is $0$. It therefore suffices to bound the number phase B steps. By
Theorem~\ref{thm:phase-b}, we have that
\[
\E[|(\lat+\partial\V) \cap [\vec{x}+Z,\vec{y}+Z]|] \leq (n/2)\|\vec{x}-\vec{y}\|_\V\text{ ,}
\]
as needed. This shows the desired upper bound on the path length.

We now show that the above bounds are sharp. For the lower bound, note that it is tight for any two
adjacent lattice vectors, since $\forall \vec{v} \in \VR$, $\|\vec{v}\|_\V=2$. For the upper bound,
letting $\lat = \Z^n$, $\V = [-1/2,1/2]^n$, $\VR = \set{\pm \vec{e}_1,\dots, \pm \vec{e}_n}$, the
shortest path between $\vec{x} = \vec{0}$ and $\vec{y} = (1,\dots,1)$ has length $n$, while
$\|\vec{x}-\vec{y}\|_\V = 2 \|\vec{x}-\vec{y}\|_\infty = 2$.
\end{proof}

Since the Voronoi distance changes by at most $1$ when switching from $\vec{y}$ to $\vec{t} \in
\vec{y} + \V$, we note that the above bound immediately yields a corresponding bound on the path
length to a closest lattice vector to any target. 

As the phase C bound in Theorem~\ref{thm:phase-c} only holds for the truncated path, it does
yield a bound on the randomized straight line path length for general lattices. However, 
for rational lattices and targets, we now show that for $\alpha$ small enough, the truncated path
in fact suffices. 

We will derive this result from the following simple Lemmas.

\begin{lemma}[Rational Lattice Bound] Let $\lat \subseteq \Q^n$, and $\vec{t} \in \Q^n$. Let
$\bar{q} \in \N$ be the smallest number such that $\bar{q} \lat \subseteq \Z^n$ and $\bar{q} \vec{t}
\in \Z^n$, and let $\mu = \mu(\lat)$ denote the covering radius of $\lat$. For $\vec{y} \in \lat$,
if $\vec{t} \notin \vec{y}+\V$, then 
\[
\|\vec{t}-\vec{y}\|_\V \geq 1 + 1/(2\bar{q}\mu)^2 \text{ .}
\]
\label{lem:rat-lat-tar-bnd}
\end{lemma}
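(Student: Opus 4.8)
The plan is to show that if $\vec{t} \notin \vec{y}+\V$ then the Voronoi norm of $\vec{w} := \vec{t}-\vec{y}$ overshoots $1$ by a quantitatively bounded amount, via an integrality argument on squared \emph{Euclidean} norms. Set $\delta := \|\vec{w}\|_\V - 1$; since $\vec{t} \notin \vec{y}+\V$ we have $\vec{w} \notin \V$, hence $\delta > 0$. Because $\VR$ is a finite set, the supremum defining $\|\vec{w}\|_\V = \sup_{\vec{v}\in\VR} 2\pr{\vec{v}}{\vec{w}}/\pr{\vec{v}}{\vec{v}}$ is attained: fix $\vec{v}\in\VR$ with $2\pr{\vec{v}}{\vec{w}} = (1+\delta)\pr{\vec{v}}{\vec{v}}$.

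First I would expand $\|\vec{w}-\vec{v}\|_2^2 = \|\vec{w}\|_2^2 - 2\pr{\vec{w}}{\vec{v}} + \|\vec{v}\|_2^2$ and substitute the identity above to obtain $\|\vec{w}-\vec{v}\|_2^2 = \|\vec{w}\|_2^2 - \delta\|\vec{v}\|_2^2$, i.e. $\delta\|\vec{v}\|_2^2 = \|\vec{w}\|_2^2 - \|\vec{w}-\vec{v}\|_2^2$. The key observation is that scaling by $\bar{q}^2$ makes the right-hand side an integer: $\bar{q}\vec{w} = \bar{q}\vec{t} - \bar{q}\vec{y} \in \Z^n$ since $\bar{q}\vec{t}\in\Z^n$ and $\bar{q}\vec{y}\in\bar{q}\lat\subseteq\Z^n$, and likewise $\bar{q}(\vec{w}-\vec{v}) = \bar{q}\vec{w} - \bar{q}\vec{v} \in \Z^n$ because $\vec{v}\in\VR\subseteq\lat$. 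Hence $\bar{q}^2\delta\|\vec{v}\|_2^2 = \|\bar{q}\vec{w}\|_2^2 - \|\bar{q}(\vec{w}-\vec{v})\|_2^2 \in \Z$, and it is strictly positive (as $\delta>0$ and $\vec{v}\neq\vec{0}$), so it is at least $1$. This yields $\delta \geq 1/(\bar{q}^2\|\vec{v}\|_2^2)$.

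It then remains only to bound $\|\vec{v}\|_2 \leq 2\mu$ for a Voronoi relevant vector $\vec{v}$, which I would record as a small preliminary fact (it also follows from the basic facts in Section~\ref{sec:prelims-vor}): the facet of $\V$ induced by $\vec{v}$ is a nonempty $(n-1)$-face contained in $\V \cap (\vec{v}+\V)$, so picking any point $\vec{p}$ in it gives $\|\vec{p}\|_2 \leq \mu$ (the circumradius of $\V$ equals $\mu$) and $\|\vec{p}-\vec{v}\|_2 \leq \mu$, whence $\|\vec{v}\|_2 \leq \|\vec{p}\|_2 + \|\vec{p}-\vec{v}\|_2 \leq 2\mu$ by the triangle inequality. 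Combining, $\delta \geq 1/(4\bar{q}^2\mu^2) = 1/(2\bar{q}\mu)^2$, so $\|\vec{t}-\vec{y}\|_\V = 1+\delta \geq 1 + 1/(2\bar{q}\mu)^2$, as claimed.

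There is no serious obstacle here; the proof is short. The only points requiring care are (i) that the supremum in the definition of $\|\cdot\|_\V$ is genuinely a maximum, so that the clean identity $\delta\|\vec{v}\|_2^2 = \|\vec{w}\|_2^2 - \|\vec{w}-\vec{v}\|_2^2$ holds for an honest lattice vector $\vec{v}$, and (ii) that the integrality argument is applied to squared Euclidean norms (which lie in $\tfrac{1}{\bar q^2}\Z$) rather than to Voronoi norms, together with the translation-by-$\vec v\in\lat$ step that keeps everything in $\tfrac1{\bar q}\Z^n$.
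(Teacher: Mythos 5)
Your proof is correct and takes essentially the same route as the paper's: both fix the maximizing $\vec{v}\in\VR$, observe that the overshoot $\|\vec{t}-\vec{y}\|_\V-1$ is a positive rational with denominator $\bar{q}^2\pr{\vec{v}}{\vec{v}}\leq(2\bar{q}\mu)^2$ (you package the integrality of the numerator as $\|\bar{q}\vec{w}\|_2^2-\|\bar{q}(\vec{w}-\vec{v})\|_2^2\in\Z_{>0}$, the paper as $2a>b$ for $a=\bar{q}^2\pr{\vec{v}}{\vec{w}}$, $b=\bar{q}^2\pr{\vec{v}}{\vec{v}}$, which is the same integer), and both close with the bound $\|\vec{v}\|_2\leq 2\mu$ for Voronoi relevant vectors. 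The differences are purely cosmetic.
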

\begin{proof}
Note that $\vec{t} \notin \vec{y}+\V$ iff $\|\vec{t}-\vec{y}\|_\V > 1$. From here, we have that
\[
1 < \|\vec{t}-\vec{y}\|_\V = 2\frac{\pr{\vec{v}}{\vec{t}-\vec{y}}}{\pr{\vec{v}}{\vec{v}}} \text{,}
\]
for some $\vec{v} \in \VR$. By our assumptions, we note that $\pr{\vec{v}}{\vec{t}-\vec{y}} =
a/q^2$, for $a \in \N$. Next, $\|\vec{v}\|_2 \leq 2\mu$ (see the end of
section~\ref{sec:prelims-vor} for details) and $\vec{v} \in \Z^n/q$, and hence we can write
$\pr{\vec{v}}{\vec{v}} = b/q^2$, $b \in \N$, for $b \leq (2\bar{q}\mu)^2$. Therefore $1 <
\|\vec{t}-\vec{y}\|_\V = \frac{2a}{b}$ implies that $2a > b$. Since $a,b\in \N$, we must have that
\[
\|\vec{t}-\vec{y}\|_\V = \frac{2a}{b} \geq \frac{b+1}{b} = 1 + \frac{1}{b} \geq 1 + \frac{1}{(2\bar{q}\mu)^2}
\] 
as needed.
\end{proof}

The following shows that the relevant quantities in
Lemma~\ref{lem:rat-lat-tar-bnd} can be bounded by the binary encoding length of
the lattice basis and target. Since it is rather standard, we defer the proof to
section~\ref{sec:appendix-path-length}.

\begin{lemma}[Bit Length Bound]
Let $B \in \Q^{n \times n}$ be a lattice basis matrix for an $n$ dimensional lattice $\lat$, with
$B_{ij}
= \frac{p^B_{ij}}{q^B_{ij}}$ where $p^B_{ij} \in \Z$ and $q^B_{ij} \in \N$. Let $\vec{t} \in \Q^n$,
with
$\vec{t}_i = \frac{p^{\vec{t}}_i}{q^{\vec{t}}_i}$, $p^{\vec{t}}_i \in \Z$, $q^{\vec{t}}_i \in \N$.
Then
for $\bar{q} \in \N$, the smallest number such that $\bar{q} \lat \subseteq \Z^n$ and $\bar{q}
\vec{t}
\in \Z^n$, we have that $\log_2(\bar{q}\mu(\lat)) \leq \enc{B} + \enc{\vec{t}}$
and $\log_2(\mu(\lat)/\lambda_1(\lat)) \leq \enc{B}$.
\label{lem:bit-bnd}
\end{lemma}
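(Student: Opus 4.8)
The plan is to prove the two inequalities of Lemma~\ref{lem:bit-bnd} separately, in each case reducing the geometric quantity $\mu(\lat)$ to lattice invariants (a determinant and $\lambda_1$) that are controlled by standard bit-complexity bounds on integer matrices and their determinants.

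\textbf{Step 1: clearing denominators.} First I would let $\bar q$ be the least common multiple of all the $q^B_{ij}$ and $q^{\vec t}_i$, so that $\bar q B$ has integer entries and $\bar q \vec t \in \Z^n$; by definition this is the smallest number with $\bar q \lat \subseteq \Z^n$ and $\bar q \vec t \in \Z^n$. A routine estimate gives $\log_2 \bar q \leq \sum_{i,j} \log_2 q^B_{ij} + \sum_i \log_2 q^{\vec t}_i \leq \enc{B} + \enc{\vec t}$, since each denominator is at most $2$ raised to the number of bits used to encode it.

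\textbf{Step 2: bounding $\mu(\lat)$ and $\lambda_1(\lat)$ from above.} For the covering radius, the cleanest route is $\mu(\lat) \leq \frac{1}{2}\sum_{i=1}^n \|\vec b_i\|_2 \leq \frac{n}{2}\max_i \|\vec b_i\|_2$ (every point of $\R^n$ is within this distance of some lattice point by rounding coordinates in the basis $B$). Writing $\|\vec b_i\|_2 \leq \sqrt{n}\max_j |B_{ij}|$ and bounding $|B_{ij}| = |p^B_{ij}/q^B_{ij}| \leq |p^B_{ij}| \leq 2^{\enc{B}}$ gives $\log_2 \mu(\lat) \leq \enc{B} + O(\log n)$, which I would fold into $\enc{B}$ (the encoding length of $B$ dominates any $\poly\log n$ slack; alternatively one states the bound with the implicit constant). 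Combining with Step~1 yields $\log_2(\bar q \mu(\lat)) \leq \enc{B} + \enc{\vec t}$ as claimed.

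\textbf{Step 3: bounding $\lambda_1(\lat)$ from below via the determinant.} For the ratio $\mu(\lat)/\lambda_1(\lat)$, having the numerator already bounded, I need a lower bound on $\lambda_1(\lat)$. The key fact is that $\lambda_1(\bar q\lat) \geq 1$ since $\bar q\lat \subseteq \Z^n$ is an integer lattice, so $\lambda_1(\lat) \geq 1/\bar q$. Meanwhile $\mu(\lat) \geq \frac{1}{2}\lambda_1(\lat)$ trivially and $\mu(\lat) \leq \frac{n}{2}\max_i\|\vec b_i\|_2$ from Step~2, giving directly $\mu(\lat)/\lambda_1(\lat) \leq \frac{n}{2}\bar q \max_i\|\vec b_i\|_2$. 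Taking logs and using Steps~1--2, $\log_2(\mu(\lat)/\lambda_1(\lat)) \leq \enc{B} + \enc{\vec t} + O(\log n)$; but in fact we want the sharper statement $\log_2(\mu(\lat)/\lambda_1(\lat)) \leq \enc{B}$ with no $\vec t$ dependence, so instead of $\lambda_1(\lat) \geq 1/\bar q$ I would use the determinant: $\lambda_1(\lat) \geq \det(\lat)^{1/n}/\sqrt{n} \cdot (\text{const})$ is not quite what's needed either — the clean bound is $\lambda_1(\lat) \geq \det(\lat)/\prod_{i\neq i_0}\|\vec b_i\|$ for an appropriate index, or more simply $1/\lambda_1(\lat)$ is polynomially related to $\|B^{-1}\|$, whose entries are ratios of cofactors of $B$ over $\det B$, all of bounded bit length by Hadamard's inequality. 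Concretely, any nonzero $\vec y \in \lat$ satisfies $\|\vec y\|_2 \geq 1/\|B^{-1}\|_{\mathrm{op}} \geq 1/(n\max_{ij}|(B^{-1})_{ij}|)$, and Cramer's rule bounds $|(B^{-1})_{ij}|$ by a cofactor over $|\det B|$, each of bit length $O(\enc{B})$. This gives $\log_2(1/\lambda_1(\lat)) \leq O(\enc{B})$ and hence $\log_2(\mu(\lat)/\lambda_1(\lat)) \leq \enc{B}$ (again absorbing $\poly\log$ factors into the definition of $\enc{\cdot}$, or stating with an implied constant).

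\textbf{Main obstacle.} The only delicate point is the lower bound on $\lambda_1(\lat)$: bounding it directly in terms of the bit length of $B$ requires invoking Cramer's rule / Hadamard and being slightly careful that the resulting expression really is bounded by $\enc{B}$ (rather than, say, $n\,\enc{B}$) — which is why the statement is phrased with $\enc{B}$ understood up to the usual polynomial conventions. Everything else (Step~1, and the upper bounds on $\mu$) is the kind of denominator-clearing and Hadamard-inequality bookkeeping the authors explicitly call "rather standard," and I would present it tersely.
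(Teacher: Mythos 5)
Your Step 1 and the upper bound on $\mu(\lat)$ are essentially the paper's argument (the paper bounds $\mu(\lat)\le\frac12\sqrt{\sum_{ij}B_{ij}^2}\le\sqrt{\prod_{ij}((p^B_{ij})^2+1)}^{1/2}$, a product bound that maps term-by-term onto $\enc{B}$ with no leftover factors, whereas your $\frac n2\max_i\|\vec b_i\|_2\le\frac{n^{3/2}}2\max_{ij}|p^B_{ij}|$ leaves an additive $O(\log n)$ that you would still have to absorb explicitly -- doable since $\enc{B}\ge 2n^2$, but you only gesture at it). The real issue is Step 3. The paper's proof of $\log_2(\mu(\lat)/\lambda_1(\lat))\le\enc{B}$ is one line: set $\tilde q=\prod_{ij}q^B_{ij}$ (denominators of $B$ only, so no $\vec t$-dependence), note $\tilde q\lat\subseteq\Z^n$ hence $\lambda_1(\lat)\ge\lambda_1(\tilde q\lat)/\tilde q\ge 1/\tilde q$, and conclude $\mu/\lambda_1\le\tilde q\,\mu$, whose logarithm is at most (denominator bits of $B$) plus (numerator bits of $B$) $\le\enc{B}$. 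You had exactly this idea in hand ($\lambda_1\ge 1/\bar q$) and discarded it because $\bar q$ involves $\vec t$, instead of simply replacing $\bar q$ by its $B$-only analogue; that is the missing idea.

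Your substitute route via $\|B^{-1}\|$, Cramer and Hadamard is heavier and, as written, has a gap: Hadamard bounds the cofactors of $B$ from \emph{above}, but to bound $|(B^{-1})_{ij}|$ you also need a \emph{lower} bound on $|\det B|$, which you never supply -- and the natural way to get one ($\tilde q\det B\in\Z\setminus\{0\}$, so $|\det B|\ge 1/\tilde q$) is precisely the denominator-clearing trick you set aside. Moreover the chain you sketch only yields $\log_2(1/\lambda_1(\lat))=O(\enc{B})$, and the concluding ``hence $\log_2(\mu/\lambda_1)\le\enc{B}$'' does not follow: the lemma is stated and proved in the paper as an exact inequality, with no implicit constants or ``polylog absorbed into $\enc{\cdot}$'' conventions (for the downstream use in Theorem~\ref{thm:cvpp} constant factors would in fact be harmless, but they are not what the statement asserts). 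So the proposal needs both the determinant lower bound made explicit and a genuine accounting of constants, or -- much more simply -- the paper's direct bound $\lambda_1(\lat)\ge 1/\tilde q$.
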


%

We are now in a position to give our full $\CVPP$ algorithm. 

\begin{theorem}[$\CVPP$ Algorithm] Let $\lat$ be an $n$-dimensional lattice with basis $B \in \Q^{n
\times n}$, let $\VR$ denote the set of Voronoi relevant vectors of $\lat$. Given the set $\VR$ as
preprocessing, for any target $\vec{t} \in \Q^n$, a closest lattice vector to $\vec{t}$ can be
computed using an expected $\poly(n,\enc{B},\enc{t})|\VR|$ arithmetic operations.
\label{thm:cvpp}
\end{theorem}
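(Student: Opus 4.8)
The plan is to run the randomized straight line procedure from a cheap, provably close starting vertex to $\vec{t}$, with phase C truncated at a small parameter $\alpha$, and to assemble the running time from Lemma~\ref{lem:rsl-complexity}, Theorem~\ref{thm:phase-b}, Theorem~\ref{thm:phase-c}, Lemma~\ref{lem:rat-lat-tar-bnd} and Lemma~\ref{lem:bit-bnd}. First I would fix the starting point: pick $n$ linearly independent $\vec{v}_1,\dots,\vec{v}_n \in \VR$ (possible since $\V$ is full dimensional), write $\vec{t} = \sum_i a_i \vec{v}_i$, and set $\vec{x} = \sum_i \round{a_i}\vec{v}_i \in \lat$; since $\|\vec{v}_i\|_\V = 2$ and $|a_i - \round{a_i}| \leq 1/2$, the triangle inequality gives $\|\vec{t}-\vec{x}\|_\V \leq n$, and extracting the independent subset and solving for the $a_i$ costs $\poly(n,\enc{B},\enc{t})|\VR|$ operations. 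I would also fix $\alpha = \min\{1,(4\bar{q}\bar{\mu})^{-2}\}$, where $\bar{q} \in \N$ is smallest with $\bar{q}\lat \subseteq \Z^n$ and $\bar{q}\vec{t} \in \Z^n$ and $\bar{\mu} \geq \mu(\lat)$ is an easily computed upper bound on the covering radius; then $\alpha < (2\bar{q}\mu)^{-2}$ and, by Lemma~\ref{lem:bit-bnd}, $\ln(1/\alpha) = O(\enc{B}+\enc{t})$.

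The algorithm then draws $Z$ from a distribution within total variation distance $\eps$ of $\unif(\V)$ using the convex-body MCMC samplers of~\cite{DyerFK91,LV06}, which use $\poly(n,\log(1/\eps))$ membership queries to $\V$ (each query ``$\vec{u} \in \V$?'' being the test $\pr{\vec{u}}{\vec{v}} \leq \pr{\vec{v}}{\vec{v}}/2$ for all $\vec{v}\in\VR$, costing $O(n|\VR|)$), then runs the randomized straight line procedure, tracing Voronoi cells along $[\vec{x}+Z,\vec{t}+Z]$ and then along $[\vec{t}+Z,\vec{t}+\alpha Z]$, and outputs the center $\vec{y}$ of the cell containing $\vec{t}+\alpha Z$ (equivalently, the lattice vector at which this truncated path terminates). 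For correctness: since the Voronoi cells tile $\R^n$, $\vec{t}+\alpha Z \in \vec{y}+\V$ for some $\vec{y}\in\lat$, whence $\|\vec{t}-\vec{y}\|_\V \leq \|\vec{t}+\alpha Z - \vec{y}\|_\V + \alpha\|Z\|_\V \leq 1+\alpha < 1+(2\bar{q}\mu)^{-2}$; by the contrapositive of Lemma~\ref{lem:rat-lat-tar-bnd} this forces $\vec{t}\in\vec{y}+\V$, so $\vec{y}$ is a closest lattice vector to $\vec{t}$, and moreover $[\vec{t}+\alpha Z,\vec{t}]\subseteq\vec{y}+\V$ by convexity, so with probability $1$ over $Z$ (namely when $\vec{t}+\alpha Z \in {\rm int}(\vec{y}+\V)$) the segment $[\vec{t}+\alpha Z,\vec{t})$ crosses no cell boundary and the truncated path already ends at $\vec{y}$.

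It remains to bound the expected work. By Lemma~\ref{lem:rsl-complexity} each traced edge costs $O(n|\VR|)$ operations, so it suffices to bound the expected number of boundary crossings in $[\vec{x}+Z,\vec{t}+Z]$ (phase B) plus in $[\vec{t}+Z,\vec{t}+\alpha Z]$ (truncated phase C), the tail contributing $0$ as above. If $Z$ were exactly uniform, Theorem~\ref{thm:phase-b} bounds the first by $(n/2)\|\vec{t}-\vec{x}\|_\V \leq n^2/2$ and Theorem~\ref{thm:phase-c} bounds the second by $\frac{e^2}{\sqrt{2}-1}\,n\,(2+\ln(4/\alpha)) = O(n\,(\enc{B}+\enc{t}))$. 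To pass to the approximate sample, I would use that a straight segment meets each Voronoi cell in a single sub-segment, so the number of crossings along either segment is \emph{deterministically} at most the number of lattice points within distance $\mu$ of it, which is $2^{\poly(n,\enc{B},\enc{t})} =: M$; taking $\eps = 1/M$ then changes each expectation by at most $M\eps = 1$, while $\log(1/\eps) = \log M = \poly(n,\enc{B},\enc{t})$ keeps the sampling cost at $\poly(n,\enc{B},\enc{t})|\VR|$. Adding the $\poly(n,\enc{B},\enc{t})|\VR|$ cost of computing $\vec{x}$ and $\alpha$, and recalling that ill-defined configurations occur with probability $0$, yields the claimed expected bound.

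The main obstacle lies in the two bookkeeping steps the cited theorems do not directly supply: (i) choosing $\alpha$ and deducing from Lemma~\ref{lem:rat-lat-tar-bnd} that the cell of $\vec{t}+\alpha Z$ is a closest-vector cell whose closure already contains $\vec{t}$ --- the delicate case being a non-unique closest vector with $\vec{t}$ on a Voronoi facet, handled above by the convexity observation together with a probability-$1$ genericity argument on $Z$; and (ii) replacing the exact uniform sample by an approximate one without inflating the expected crossing count, which forces one to first establish the a priori (exponential but explicit) deterministic bound $M$ on the number of crossings so that a polynomially small total-variation error suffices.
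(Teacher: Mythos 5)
Your construction of the starting point, the choice of $\alpha \approx (4\bar{q}\mu)^{-2}$, the contrapositive use of Lemma~\ref{lem:rat-lat-tar-bnd} together with the triangle inequality $\|\vec{t}-\vec{y}\|_\V \leq 1+\alpha$, and the combination of Lemma~\ref{lem:rsl-complexity} with Theorems~\ref{thm:phase-b} and~\ref{thm:phase-c} all match the paper's proof. The genuine gap is in your step (ii), the passage from an exactly uniform $Z$ to the MCMC sample. Your transfer argument ``$|\E_P f - \E_Q f| \leq M\eps$'' requires the crossing count $f$ to be bounded by $M$ on the support of \emph{both} distributions, but your bound $M$ (twice the number of lattice points within distance $\mu$ of the segment) is not deterministic: it fails exactly on the degenerate configurations, e.g.\ when a segment lies inside the affine hull of a facet, where $|(\lat+\partial\V)\cap[\cdot,\cdot]|$ is infinite and the line-following procedure of Lemma~\ref{lem:rsl-complexity} is no longer guaranteed to advance. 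You dismiss these configurations as ``probability $0$,'' but that genericity statement (Claim~\ref{cl:gen-path}) is proved only for $Z \sim \unif(\V)$; a distribution that is merely within total variation distance $\eps$ of $\unif(\V)$ need not be absolutely continuous with respect to Lebesgue measure, so it can place mass up to $\eps$ on this Lebesgue-null degenerate set. On that event your algorithm has no length cap and no restart, so neither its termination nor the claimed expected running time follows.

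The paper avoids exactly this issue by a different mechanism: it caps the path length at $c\,n(n+\enc{B}+\enc{\vec{t}})$ edges and restarts whenever the cap is exceeded, so only the \emph{exact-uniform} expectation bound is needed. By Markov's inequality a truly uniform $Z$ stays under twice the expected length with probability at least $1/2$, hence a $1/4$-uniform sample succeeds with probability at least $1/4$, giving $O(1)$ expected restarts; degenerate or adversarial samples merely trigger a restart rather than breaking the analysis, and $\eps$ can stay a constant instead of the exponentially small $1/M$ you need. Your argument can be repaired the same way (cap the number of line-following iterations at your $M$, or at the paper's polynomial bound, and restart on failure), but as written the expectation-transfer step does not go through.
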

\begin{proof}
To start we pick linearly independent $\vec{v}_1,\dots,\vec{v}_n \in \VR$.  We then compute the
coefficent representation of $\vec{t}$ with respect to $\vec{v}_1,\dots,\vec{v}_n$, that is $\vec{t}
= \sum_{i=1}^n a_i \vec{v}_i$. From here we compute the lattice vector $\vec{x} = \sum_{i=1}^n
\round{a_i} \vec{v}_i$, i.e.~the rounding of $\vec{t}$.

Next, using the convex body sampler (Theorem~\ref{thm:dfk-sampler}), we compute a
$(1/4)$-uniform sample $Z$ over $\V$. Note that a membership oracle for $\V$ can be
implemented using $O(n|\VR|)$ arithmetic operations. Furthermore, letting $\lambda_1
= \lambda_1(\lat)$, $\mu = \mu(\lat)$, we have that 
\[
(\lambda_1/2)B_2^n \subseteq \V \subseteq \mu B_2^n \text{ ,}
\]
where $\lambda_1 = \min_{\vec{v} \in \VR} \|\vec{v}\|_2$, $(1/2) \max_{\vec{v} \in \VR} \|\vec{v}\|_2
\leq \mu \leq (\sqrt{n}/2)\max_{\vec{v} \in \VR} \|\vec{v}\|_2$ (see Lemma~\ref{lem:vor-props} in
the Appendix). Hence, nearly tight sandwiching estimates for $\V$ can be easily computed using the
set $\VR$.

We now run the randomized straight line algorithm starting at lattice point $\vec{x}$, perturbation
$Z$, and target $\vec{t}$. If the path produced by the algorithm becomes longer than $c
n(n+(\enc{B}+\enc{\vec{t}}))$ (for some $c \geq 1$ large enough), restart the algorithm, and
otherwise return the found closest lattice vector.   

The correctness of the algorithm follows directly from the correctness of the randomized straight
line algorithm (Lemma~\ref{lem:rsl-complexity}), and hence we need only show a bound on the expected
runtime. 

\paragraph{{\bf Runtime.}} We first bound the number of operations performed in a single iteration.
Computing $\vec{v}_1,\dots,\vec{v}_n$, $\vec{t}$, and the sandwiching estimates for $\V$, requires
at most $O(n^3|\VR|)$ arithmetic operations. By Lemma~\ref{lem:bit-bnd}, the convex body sampler
requires at most
\[
\poly(n,\log(\sqrt{n}\mu/\lambda_1))|\VR| = \poly(n,\enc{B})|\VR|
\]
arithmetic operations. For the randomized straight line algorithm, each step requires at most
$O(n|\VR|)$ arithmetic operations by Lemma~\ref{lem:rsl-complexity}. Since we truncate it at
$O(n(n+(\enc{B}+\enc{\vec{t}})))$ iterations, this requires at most
$O(n^2(n+(\enc{B}+\enc{\vec{t}}))|\VR|)$ arithmetic operations. Hence the total number of arithmetic
operations per iteration is bounded by 
\[
\poly(n,\enc{B},\enc{t})|\VR| \text{. }
\]

We now show that the algorithm performs at most $O(1)$ iterations on expectation. For this it
suffices to show that each iteration succeeds with constant probability. In particular, we will show
that with constant probability, the length of the randomized straight line path is
bounded by $O(n^2(\enc{B}+\enc{\vec{t}}))$. To do this we will simply show that the expected path
length is bounded by $O(n^2(\enc{B}+\enc{\vec{t}}))$ under the assumption that $Z$ is truly uniform.
By Markov's inequality, the probability that the length is less than twice the expectation is at
least $1/2$ for a truly uniform $Z$, and hence it will be a least $1/4$ for a $1/4$-uniform $Z$.

To begin, we note that by the triangle inequality 
\[
\|\vec{t}-\vec{x}\|_\V \leq \sum_{i=1}^n |a_i-\round{a_i}|\|\vec{v}_i\|_\V \leq 
\sum_{i=1}^n (1/2)(2) = n \text{ .}
\]
Let $\bar{q}$ be as in Lemma~\ref{lem:bit-bnd}, and let $\alpha = \frac{1}{(4\bar{q}\mu)^2}$, where
we have that $\ln(1/\alpha) = O(\enc{B}+\enc{\vec{t}})$. Let $\vec{y} \in \lat$ denote the center of
the first Voronoi cell containing $\vec{t}+\alpha Z$ found by the randomized straight line
algorithm. We claim that $\vec{y}$ is a closest lattice vector to $\vec{t}$, or equivalently that $\vec{t}
\in \vec{y}+\V$. Assume not, then by Lemma~\ref{lem:rat-lat-tar-bnd}, $\|\vec{t}-\vec{y}\|_\V \geq 1
+ \frac{1}{(2\bar{q}\mu)^2}$. On the other hand, since $\vec{t}+\alpha Z \in \vec{y} + \V$ and $Z
\in \V$, by the triangle inequality
\[
\|\vec{t}-\vec{y}\|_\V \leq \|\vec{t}-(\vec{t}+\alpha Z)\|_\V + \|(\vec{t}+\alpha Z)-\vec{y}\|_\V
                       \leq \alpha + 1 = 1 + \frac{1}{(4\bar{q}\mu)^2} \text{ ,}
\] 
a contradiction. Hence $\vec{y}$ is a closest lattice vector to $\vec{t}$. If $Z \sim \unif(\V)$,
then by Theorems~\ref{thm:phase-b} and~\ref{thm:phase-c} the expected length of the randomized
straight line path up till $\vec{t}+\alpha Z$ (i.e.~till we find $\vec{y}$) is bounded by
\begin{align*}
(n/2)\|\vec{t}-\vec{x}\|_\V + \frac{e^2}{\sqrt{2}-1} n (2 + \ln(4/\alpha)) 
&= n^2/2 + \frac{e^2}{\sqrt{2}-1} n (2 + 2\ln(8\bar{q}\mu)) \\
&= O(n(n + (\enc{B}+\enc{t}))) \text{ ,}
\end{align*}
as needed. The theorem thus follows.
\end{proof}


\section{Preliminaries}
\label{sec:prelims}

\paragraph{{\bf Basics.}} 
For $n \geq 1$, we denote $\R^n,\Q^n,\Z^n$ to be the set of $n$ dimensional real / rational /
integral vectors respectively. We let $\N$ denote the set of natural numbers, and $\Z_+$ denote the
set of non-negative integers. For two sets $A,B \subseteq \R^n$, we denote their Minkowski sum $A+B
= \set{\vec{a} + \vec{b}: \vec{a} \in A, \vec{b} \in B}$. We write $\partial A$ to denote the
topological boundary of $A$. For a set $A \subseteq \R^n$, its affine hull, ${\rm affhull}(A)$, is the
inclusion wise smallest linear affine space containing $A$. We denote the interior of $A$ in $\R^n$
as ${\rm int}(A)$. We denote the relative interior of $A$ by ${\rm relint}(A)$, which
is the interior of $A$ with the respect to the subspace topology on ${\rm affhull}(A)$.  

For two $n$ dimensional vectors $\vec{x},\vec{y} \in \R^n$, we denote their inner product
$\pr{\vec{x}}{\vec{y}} = \sum_{i=1}^n \vec{x}_i\vec{y}_i$. The $\ell_2$ (Euclidean) norm of a vector
$\vec{x}$ is denoted $\|\vec{x}\|_2 = \sqrt{\pr{\vec{x}}{\vec{x}}}$. We let $B_2^n = \set{\vec{x}
\in \R^n: \|\vec{x}\|_2 \leq 1}$ denote the unit Euclidean ball, and let $S^{n-1} = \partial B_2^n$
denote the unit sphere. For vectors $\vec{x},\vec{y} \in \R^n$, we denote the closed line segment
from $\vec{x}$ to $\vec{y}$ by $[\vec{x},\vec{y}] \eqdef \set{\alpha \vec{x} + (1-\alpha)\vec{y}:
\alpha \in [0,1]}$, and $[\vec{x},\vec{y})$ the half open line segment not containing $\vec{y}$. 

We denote $\vec{e}_1,\dots,\vec{e}_n$ the vectors of the standard basis of $\R^n$, that is the vectors
such that $\vec{e}_i$ has a $1$ in the $i^\text{th}$ coordinate and $0$'s elsewhere.

\paragraph{{\bf Binary encoding.}} For an integer $z \in \Z$, the standard binary encoding for $z$
requires $1+\ceil{\log_2(|z|+1)}$ bits, which we denote $\enc{z}$.  For a rational number
$\frac{p}{q} \in \Q$, $p \in \Z$, $q \in \N$, the encoding size of $\frac{p}{q}$ is
$\enc{\frac{p}{q}} = \enc{p}+\enc{q}$. For an $n \times m$ matrix $M \in \Q^{m \times n}$ or vector
$\vec{a} \in \Q^n$, $\enc{M}$, $\enc{\vec{a}}$ denotes the sum of encoding lengths of all the
entries.

\paragraph{{\bf Integration.}} We denote the $k$-dimensional Lebesgue measure in $\R^n$ by
$\vol_k(\cdot)$. Only $k=n$ and $k=n-1$ will be used in this paper. For $k=n-1$, we will 
only apply it
to sets which can be written as a disjoint countable union of $n-1$ dimensional flat pieces. When
integrating a function $f:\R^n \rightarrow \R$ over a set $A \subseteq \R^n$ using the $n$
dimensional Lebesgue measure, we use the notation $\int_A f(\vec{x}) \d\vec{x}$. When integrating
with respect to the $n-1$ dimensional Lebesgue measure in $\R^n$, we write $\int_A f(\vec{x})
\d\vol_{n-1}(\vec{x})$.    

\paragraph{{\bf Probability.}} For a random variable $X \in \R$, we define its expectation by
$\E[X]$ and its variance by $\VAR[X] = \E[X^2]-\E[X]^2$. 
For two random variables $X,Y \in \Omega$, we define their total variation distance to be  
\[
d_{TV}(X,Y) = \max_{A \subseteq \Omega} |\Pr[X \in A]-\Pr[Y \in A]| \text{ .}
\] 

\begin{definition}[Uniform Distribution]
\label{def:unif}
For a set $A \subseteq \R^n$, we define the uniform distribution on $A$, denoted $\unif(A)$, to have
probability density function $1/\vol_n(A)$ and $0$ elsewhere. That is, for a uniform random variable
$X \sim \unif(A)$, we have that
\[
\Pr[X \in B] = \vol_n(A \cap B)/\vol_n(A)
\] 
for any measurable set $B \subseteq \R^n$.
\end{definition}

\paragraph{{\bf Complexity.}} We use the notation $\otilde(T(n))$ to mean $O(T(n)\polylog(T(n)))$.

\subsection{Lattices}
\label{sec:lat-prelims}

An $n$ dimensional lattice $\lat \subseteq \R^n$ is a discrete subgroup of $\R^n$ whose linear span
is $\R^n$. Equivalently, $\lat$ is generated by all integer combinations of some basis
$B=(\vec{b}_1,\dots,\vec{b}_n)$ of $\R^n$, i.e. $\lat = B \Z^n$. For $k \in \N$, we define the
quotient group $\lat / k\lat = \set{\vec{y}+k\lat: \vec{y} \in \lat}$. It is easy to check that the
map $\vec{a} \rightarrow B\vec{a} + k\lat$ from $(\Z / k \Z)^n \eqdef \Z_k^n$ to $\lat / (k\lat)$ is
an isomorphism. In particular $|\lat / (k\lat)| = k^n$. 

A shift $\lat+\vec{t}$ of $\lat$ is called a coset of $\lat$. The set of cosets of $\lat$ form a
group $\R^n / \lat$ under addition, i.e.~the torus. We will use the notation $A \imod{\lat}$, for a
set $A \subseteq \R^n$, to denote the set of cosets $\lat+A$. Note that $\R^n / \lat$ is isomorphic
to $[0,1)^n$ under addition $\imod{1}$ (coordinate wise), under the map $\vec{x} \rightarrow
B\vec{x} + \lat$ for any basis $B$ of $\lat$. We will need to make use of the uniform distribution
over $\R^n / \lat$, which we denote $\unif(\R^n / \lat)$. To obtain a sample from $\unif(\R^n /
\lat)$, one can take $U \sim \unif([0,1)^n)$ and return $BU \imod{\lat}$. 

We denote the length of the shortest non-zero vector (or minimum distance) of $\lat$ as $\lambda_1(\lat)
= \min_{\vec{y} \in \lat\setminus \set{\vec{0}}} \|\vec{y}\|_2$.  We denote the covering radius of $\lat$
as $\mu(\lat) = \max_{\vec{t} \in \R^n} \min_{\vec{y} \in \lat} \|\vec{t}-\vec{y}\|_2$ to be the farthest
distance between any point in space and the lattice.

The following standard lemma (see for instance~\cite{DBLP:journals/combinatorica/Babai86}) allows us
to bound the covering radius:
\begin{lemma}
\label{lem:rounding}
Let $\lat$ be an $n$-dimensional lattice. If $\vec{v}_1,\dots,\vec{v}_n \in \lat$ are linearly
independent lattice vectors, then $\mu(\lat) \leq \frac{1}{2} \sqrt{\sum_{i=1}^n \|\vec{v}_i\|^2}$.
\end{lemma}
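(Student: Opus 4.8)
This is the standard ``rounding'' bound on the covering radius (essentially Babai's nearest-plane estimate), so the plan is to reduce to a sublattice and then round coefficient by coordinate. First I would pass to the full-rank sublattice $\lat' = \set{\sum_{i=1}^n c_i \vec{v}_i : c_i \in \Z} \subseteq \lat$ generated by the given vectors. Since $\lat' \subseteq \lat$, for every $\vec{t} \in \R^n$ the closest point of $\lat$ is no farther than the closest point of $\lat'$, so $\mu(\lat) \leq \mu(\lat')$ and it suffices to prove $\mu(\lat') \leq \frac12 \sqrt{\sum_{i=1}^n \|\vec{v}_i\|^2}$.

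To bound $\mu(\lat')$ I would use the Gram--Schmidt orthogonalization $\vec{v}_1^*,\dots,\vec{v}_n^*$ of $\vec{v}_1,\dots,\vec{v}_n$ and induct on $n$. Fix an arbitrary $\vec{t} \in \R^n$; since the $\vec{v}_i$ are a basis of $\R^n$ we may write $\vec{t} = \sum_{i=1}^n a_i \vec{v}_i$ with $a_i \in \R$. Let $c_n = \round{a_n}$, so $|a_n - c_n| \leq \tfrac12$, and consider $\vec{t} - c_n \vec{v}_n$. Writing $W = \mathrm{span}(\vec{v}_1,\dots,\vec{v}_{n-1})$, the component of $\vec{t} - c_n\vec{v}_n$ orthogonal to $W$ is exactly $(a_n - c_n)\vec{v}_n^*$, of norm at most $\tfrac12 \|\vec{v}_n^*\|$, while its component inside $W$ is some $\vec{t}'' \in W$. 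Applying the inductive hypothesis to the $(n-1)$-dimensional lattice generated by $\vec{v}_1,\dots,\vec{v}_{n-1}$ inside $W$ (whose Gram--Schmidt vectors are precisely $\vec{v}_1^*,\dots,\vec{v}_{n-1}^*$) yields an integer combination $\vec{y}'$ of $\vec{v}_1,\dots,\vec{v}_{n-1}$ with $\|\vec{t}'' - \vec{y}'\|^2 \leq \tfrac14 \sum_{i=1}^{n-1}\|\vec{v}_i^*\|^2$. Setting $\vec{y} = c_n\vec{v}_n + \vec{y}' \in \lat'$, the vector $\vec{t}-\vec{y}$ decomposes as $(\vec{t}'' - \vec{y}') + (a_n-c_n)\vec{v}_n^*$ with the first summand in $W$ and the second orthogonal to $W$, so by the Pythagorean theorem $\|\vec{t}-\vec{y}\|^2 \leq \tfrac14\sum_{i=1}^{n-1}\|\vec{v}_i^*\|^2 + \tfrac14\|\vec{v}_n^*\|^2 = \tfrac14\sum_{i=1}^n\|\vec{v}_i^*\|^2$; the base case $n=1$ is immediate.

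Finally, since Gram--Schmidt only shortens vectors, $\|\vec{v}_i^*\| \leq \|\vec{v}_i\|$ for each $i$, so $\|\vec{t}-\vec{y}\|^2 \leq \tfrac14\sum_{i=1}^n\|\vec{v}_i\|^2$, i.e. $\min_{\vec{y}\in\lat'}\|\vec{t}-\vec{y}\| \leq \tfrac12\sqrt{\sum_{i=1}^n\|\vec{v}_i\|^2}$; taking the supremum over $\vec{t}$ gives the bound on $\mu(\lat')$ and hence on $\mu(\lat)$. I expect the only mildly delicate point to be the bookkeeping in the nearest-plane recursion --- checking that after subtracting $c_n\vec{v}_n$ the orthogonal component is exactly $(a_n-c_n)\vec{v}_n^*$ and that the in-$W$ residual feeds correctly into the inductive step --- but this is routine. (One could instead round every $a_i$ at once and get only $\|\vec{t}-\vec{y}\| \leq \sum_i |a_i - \round{a_i}|\,\|\vec{v}_i\| \leq \tfrac12\sum_i\|\vec{v}_i\|$, which is weaker than the stated $\tfrac12\sqrt{\sum_i\|\vec{v}_i\|^2}$ by Cauchy--Schwarz; the orthogonalization is what buys the $\ell_2$-form of the bound.)
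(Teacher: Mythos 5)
Your proof is correct: the reduction to the full-rank sublattice generated by $\vec{v}_1,\dots,\vec{v}_n$, followed by the nearest-plane induction giving $\mu \leq \tfrac12\bigl(\sum_i \|\vec{v}_i^*\|^2\bigr)^{1/2} \leq \tfrac12\bigl(\sum_i \|\vec{v}_i\|^2\bigr)^{1/2}$, is exactly the standard argument this lemma refers to (the paper itself states it without proof, citing Babai). Nothing is missing; the bookkeeping step you flag — that the component of $\vec{t}-c_n\vec{v}_n$ orthogonal to $W$ is $(a_n-c_n)\vec{v}_n^*$ and the residual in $W$ feeds the inductive hypothesis — is indeed routine and checks out.
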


%

\subsubsection{Voronoi cell, tiling, and relevant vectors}
\label{sec:prelims-vor}

For a point $\vec{t} \in \R^n$, let $\CVP(\lat,\vec{t}) = \argmin_{\vec{x} \in \lat}
\|\vec{t}-\vec{x}\|_2$, denote the set of closest lattice vectors to $\vec{t}$. For $\vec{y} \in \lat$, let
\[
H_\vec{y} = \set{\vec{x} \in \R^n: \|\vec{x}\|_2 \leq \|\vec{x}-\vec{y}\|_2} = \set{\vec{x} \in \R^n:
  \pr{\vec{y}}{\vec{x}} \leq \frac{1}{2}\pr{\vec{y}}{\vec{y}}} \text{ ,}
\]
denote the halfspace defining the set of points closer to $\vec{0}$ than to $\vec{y}$.

\begin{definition}[Voronoi Cell]
\label{def:vor-cell}
The Voronoi cell $\V(\lat)$ of $\lat$ is defined as
\[
\V(\lat) = \cap_{\vec{y} \in \lat \setminus \set{\vec{0}}} H_{\vec{y}} \text{ ,}
\]
the set of all points in $\R^n$ closer or at equal distance to the origin than to any other lattice 
point. 
\end{definition}

Naturally, $\V(\lat)$ is the set of points of $\lat$ whose closest lattice vector is $\vec{0}$.  We
abbreviate $\V(\lat)$ to $\V$ when the context is clear. It is easy to check from the definitions
that a vector $\vec{y} \in \lat$ is a closest lattice vector to a target $\vec{t} \in \R^n$ iff
$\vec{t}-\vec{y} \in \V$. The $\CVP$ is then equivalent to finding a lattice shift of $\V$
containing the target. 

From this, we see that the Voronoi cell tiles space with respect to $\R^n$, that is, the set of
shifts $\lat + \V$ cover $\R^n$, and shifts $\vec{x} + \V$ and $\vec{y} + \V$, $\vec{x},\vec{y} \in
\lat$, are interior disjoint if $\vec{x} \neq \vec{y}$. From the tiling property, we have the useful
property that the distribution $\unif(\V) \imod{\lat}$ is identical to $\unif(\R^n / \lat)$. 

We note that the problem of separating over the Voronoi cell reduces directly to $\CVP$, since if
$\vec{y} \in \lat$ is closer to a target $\vec{t}$ than $\vec{0}$, then $H_{\vec{y}}$
separates $\vec{t}$ from $\V$. Also, if no such closer lattice vector exists, then $\vec{t} \in \V$.

\begin{definition}[Voronoi Relevant Vectors]
\label{def:vr-vectors}
We define $\VR(\lat)$, the Voronoi relevant vectors of $\lat$, to be the minimal set of lattice vectors
satisfying $\V(\lat) = \cap_{\vec{v} \in \VR(\lat)} H_{\vec{v}}$, which we abbreviate to $\VR$ when the
context is clear. 
\end{definition}

Since the Voronoi cell is a full dimensional centrally symmetric polytope, the set
$\VR$ corresponds exactly to the set of lattice vectors inducing facets of $\V$ (i.e. such that $\V \cap
\partial H_{\vec{v}}$ is $n-1$ dimensional).

\begin{definition}[Voronoi Cell Facet]
For each $\vec{v} \in \VR$, let 
\[
F_\vec{v} = \V \cap \set{\vec{x} \in \R^n: \pr{\vec{x}}{\vec{v}} = \frac{1}{2}
\pr{\vec{v}}{\vec{v}}} \text{,}
\]
denote the facet of $\V$ induced by $\vec{v}$. 
\label{def:vr-facet}
\end{definition}

Here we have that 
\[
\partial \V = \bigcup_{\vec{v} \in \VR} F_{\vec{v}} \quad \text{ and } \quad 
\vol_{n-1}(\partial \V) = \sum_{\vec{v} \in \VR} \vol_{n-1}(F_{\vec{v}})
\]
since the intersection of distinct facets has affine dimension at most $n-2$. Similarly, 
\[
\V = \bigcup_{\vec{v} \in \VR} \conv(\vec{0},F_{\vec{v}}) \quad \text{ and } \quad 
\vol_n(\V) = \sum_{\vec{v} \in \VR} \vol_n(\conv(\vec{0},F_{\vec{v}})) \text{.}
\]

A central object in this paper will be $\lat+\partial\V$, the boundary of the lattice tiling.  We
shall call $\vec{y}+F_{\vec{v}}$, for $\vec{y} \in \lat,\vec{v} \in \VR$, a facet fo
$\lat+\partial\V$. Here, we see that
\[
\lat+\partial\V = \bigcup_{\vec{y} \in \lat, \vec{v} \in \VR} \vec{y}+F_{\vec{v}} \text{.}
\]
Note that each facet is counted twice in the above union, i.e.~$\vec{y}+F_{\vec{v}} =
(\vec{y}+\vec{v})+F_{-\vec{v}}$.

An important theorem of Voronoi classifies the set of Voronoi relevant vectors:

\begin{theorem}[Voronoi] For an $n$ dimensional lattice $\lat$, $\vec{y} \in \lat \setminus \set{\vec{0}}$ is in $\VR(\lat)$ if and only if
\[
\set{\pm \vec{y}} = \argmin_{\vec{x} \in 2\lat+\vec{y}} \|\vec{x}\|_2 \text{.}
\]
In particular, $|\VR| \leq 2(2^n-1)$.
\label{thm:vor-bound}
\end{theorem}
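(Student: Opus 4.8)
The plan is to convert the facet condition into a statement about the bisecting hyperplanes $\partial H_{\vec v}$ and then reduce both implications to elementary inner‑product inequalities. Recall $\V = \cap_{\vec y \in \lat \setminus \set{\vec 0}} H_{\vec y}$ with $H_{\vec y} = \set{\vec x : \pr{\vec x}{\vec y} \leq \pr{\vec y}{\vec y}/2}$, and that since $\V$ is a full‑dimensional polytope, $\vec y \in \VR$ precisely when the exposed face $\V \cap \partial H_{\vec y}$ is $(n-1)$‑dimensional. Both directions hinge on analyzing points on the bisector $\partial H_{\vec y}$, i.e.\ points equidistant from $\vec 0$ and $\vec y$, and the coset $2\lat + \vec y$ enters through the identity $\|2(\vec x - \vec z)\| = \|\,(\vec y - 2\vec z) + (2\vec x - \vec y)\,\|$ together with $2\vec x - \vec y \perp$-type relations available when $\vec x \in \partial H_{\vec y}$; concretely it will be cleanest to work with the two linear inequalities ``$\vec x \in H_{\vec z}$'' and ``$\vec x \in H_{\vec y - \vec z}$''.

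For the direction $(\Leftarrow)$, assume $\set{\pm \vec y} = \argmin_{\vec x \in 2\lat + \vec y}\|\vec x\|_2$ and test the midpoint $\vec p = \vec y/2$. For every $\vec z \in \lat$ we have $\|\vec p - \vec z\|_2 = \tfrac12\|\vec y - 2\vec z\|_2$ with $\vec y - 2\vec z \in 2\lat + \vec y$, so by hypothesis $\|\vec p - \vec z\|_2 \geq \tfrac12\|\vec y\|_2 = \|\vec p\|_2$, with equality only if $\vec y - 2\vec z \in \set{\pm\vec y}$, i.e.\ $\vec z \in \set{\vec 0, \vec y}$. Hence $\vec 0$ and $\vec y$ are the unique closest lattice vectors to $\vec p$, so by local finiteness of the tiling there is a ball $B$ around $\vec p$ with $B \subseteq {\rm int}(H_{\vec z})$ for all $\vec z \in \lat \setminus \set{\vec 0, \vec y}$; then $B \cap \partial H_{\vec y} \subseteq \V$ is an $(n-1)$‑dimensional subset of $\partial H_{\vec y}$, so $\vec y \in \VR$.

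For $(\Rightarrow)$, pick $\vec x \in {\rm relint}(F_{\vec y})$; then $\pr{\vec x}{\vec y} = \pr{\vec y}{\vec y}/2$ and, crucially, $\pr{\vec x}{\vec z} < \pr{\vec z}{\vec z}/2$ strictly for every $\vec z \in \lat \setminus \set{\vec 0, \vec y}$. Now let $\vec w = \vec y - 2\vec z \in 2\lat + \vec y$ with $\vec z \in \lat$. Applying $\vec x \in H_{\vec z}$ gives $\pr{\vec x}{\vec z} \leq \pr{\vec z}{\vec z}/2$, while applying $\vec x \in H_{\vec y - \vec z}$ and substituting $\pr{\vec x}{\vec y} = \pr{\vec y}{\vec y}/2$ rearranges to $\pr{\vec x}{\vec z} \geq \pr{\vec y}{\vec z} - \pr{\vec z}{\vec z}/2$. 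Combining yields $\pr{\vec y}{\vec z} \leq \pr{\vec z}{\vec z}$, hence
\[
\|\vec w\|_2^2 - \|\vec y\|_2^2 \;=\; 4\bigl(\pr{\vec z}{\vec z} - \pr{\vec y}{\vec z}\bigr) \;\geq\; 0,
\]
so $\pm\vec y$ are shortest in the coset; and if equality holds then both displayed linear inequalities are tight, in particular $\pr{\vec x}{\vec z} = \pr{\vec z}{\vec z}/2$, so $\vec x \in \partial H_{\vec z}$, which by strictness forces $\vec z \in \set{\vec 0, \vec y}$, i.e.\ $\vec w = \pm\vec y$. This proves $\set{\pm\vec y}$ is the full argmin set. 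The cardinality bound is then immediate: $\vec y \mapsto \vec y + 2\lat$ maps $\VR$ into $(\lat/2\lat) \setminus \set{2\lat}$ (a set of size $2^n - 1$) at most two‑to‑one, since $\vec y$ and $-\vec y$ lie in the same coset and, by the characterization, no other element of $\VR$ does.

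The step I expect to be the main obstacle is the strictness claim in $(\Rightarrow)$: that a point in ${\rm relint}(F_{\vec y})$ satisfies $\pr{\vec x}{\vec z} < \pr{\vec z}{\vec z}/2$ for all $\vec z \neq \vec 0, \vec y$. This is where the polytopal structure of $\V$ is really used — one argues that $\pr{\vec x}{\vec z} = \pr{\vec z}{\vec z}/2$ would place $\vec x$ in a proper face of $\V$ other than $F_{\vec y}$, forcing $F_{\vec y} \subseteq \partial H_{\vec z}$ and hence $\partial H_{\vec z} = \partial H_{\vec y}$, which (comparing the two affine equations) gives $\vec z = \vec y$, a contradiction. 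Everything else is routine expansion of norms and the counting argument.
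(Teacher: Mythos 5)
The paper does not actually prove this theorem: it is quoted as a classical result of Voronoi, and the only argument in the text is the remark that $|\VR| \leq 2(2^n-1)$ follows because $\vec{y} \mapsto \vec{y}+2\lat$ maps $\VR$ at most two-to-one into $\lat/(2\lat) \setminus \set{2\lat}$ --- which is exactly your final counting step. Your proof itself is correct and is essentially the standard self-contained argument. In the $(\Leftarrow)$ direction the midpoint test is right; the one point worth spelling out is the existence of the ball $B$, since $\V$ is cut out by infinitely many halfspaces: handle the finitely many lattice vectors within distance roughly $\|\vec{y}\|_2$ of $\vec{y}/2$ by continuity of the strict inequalities at $\vec{p}$, and all farther lattice vectors by a crude triangle-inequality estimate, so the appeal to "local finiteness" works but deserves a sentence. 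In the $(\Rightarrow)$ direction, the two inequalities $\vec{x} \in H_{\vec{z}}$ and $\vec{x} \in H_{\vec{y}-\vec{z}}$ combined with $\pr{\vec{x}}{\vec{y}} = \pr{\vec{y}}{\vec{y}}/2$ do yield $\pr{\vec{y}}{\vec{z}} \leq \pr{\vec{z}}{\vec{z}}$ and hence $\|\vec{y}-2\vec{z}\|_2 \geq \|\vec{y}\|_2$, and your equality analysis correctly reduces to the strictness claim; your justification of that claim is sound (a supporting halfspace $H_{\vec{z}}$ whose boundary passes through a relative-interior point of $F_{\vec{y}}$ must contain all of $F_{\vec{y}}$ in $\partial H_{\vec{z}}$, forcing $\partial H_{\vec{z}} = \partial H_{\vec{y}}$ and then $\vec{z}=\vec{y}$ by comparing the two affine equations), and it only uses facts about exposed faces of closed convex sets, so no polytopality of $\V$ is needed. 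For the counting step, add one line noting that no $\vec{y} \in \VR$ can lie in $2\lat$ (otherwise $\vec{0}$ would be a strictly shorter element of its coset, contradicting the characterization), which is what places the image inside $\lat/(2\lat) \setminus \set{2\lat}$.
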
 

Here the bound on $|\VR|$ follows from the fact that the map $\vec{y} \mapsto \vec{y} + 2\lat$ from $\VR$
to $\lat / (2\lat) \setminus \set{2\lat}$ is $2$-to-$1$. Furthermore, note that each Voronoi relevant
vector can be recovered from solutions to CVPs over $2\lat$. More precisely, given a basis $B$ for
$\lat$, each vector in $\vec{v} \in \VR$ can be expressed as $B\vec{p}-\vec{x}$, for some $\vec{p} \in
\set{0,1}^n \setminus \set{\vec{0}}$, and $\vec{x} \in \CVP(2\lat,B\vec{p})$ (we get a Voronoi relevant
iff $\vec{x}$ is unique up to reflection about $B\vec{p}$). 

We now list several important and standard properties we will need about the Voronoi cell and
relevant vectors. We give a proof for completeness. 

\begin{lemma} For an $n$ dimensional lattice $\lat$:
\begin{enumerate}
\item $\frac{\lambda_1(\lat)}{2} B_2^n \subseteq \V \subseteq \mu(\lat) B_2^n$.
\item $\lambda_1(\lat) = \min_{\vec{v} \in \VR} \|\vec{v}\|_2$.
\item $2\mu(\lat)/\sqrt{n} \leq \max_{\vec{v} \in \VR} \|\vec{v}\|_2 \leq 2\mu(\lat)$ 
\end{enumerate}
\label{lem:vor-props}
\end{lemma}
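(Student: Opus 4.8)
The plan is to prove the three items in order, with item~1 doing most of the work and items~2 and~3 following quickly. First I would prove the lower sandwiching bound $\frac{\lambda_1(\lat)}{2}B_2^n \subseteq \V$. Take any $\vec{x}$ with $\|\vec{x}\|_2 < \lambda_1(\lat)/2$; then for every nonzero $\vec{y}\in\lat$ we have $\|\vec{y}\|_2 \geq \lambda_1(\lat) > 2\|\vec{x}\|_2$, so by the triangle inequality $\|\vec{x}-\vec{y}\|_2 \geq \|\vec{y}\|_2 - \|\vec{x}\|_2 > \|\vec{x}\|_2$, which is exactly the defining inequality of $H_{\vec{y}}$; taking the intersection over all nonzero $\vec{y}$ (and a limiting/closure argument for the boundary sphere) gives $\vec{x}\in\V$. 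For the upper bound $\V \subseteq \mu(\lat)B_2^n$, I would argue by contradiction: if $\vec{x}\in\V$ with $\|\vec{x}\|_2 > \mu(\lat)$, then since $\vec{0}$ is a closest lattice vector to every point of $\V$, the point $\vec{x}$ would be at distance $>\mu(\lat)$ from $\lat$, contradicting the definition of the covering radius as the \emph{maximum} over all points in space of the distance to $\lat$.

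For item~2, the inequality $\min_{\vec{v}\in\VR}\|\vec{v}\|_2 \geq \lambda_1(\lat)$ is immediate since $\VR\subseteq\lat\setminus\{\vec{0}\}$. For the reverse direction, I would take a shortest nonzero vector $\vec{y}\in\lat$ with $\|\vec{y}\|_2 = \lambda_1(\lat)$ and verify it satisfies Voronoi's characterization (Theorem~\ref{thm:vor-bound}): any $\vec{x}\in 2\lat+\vec{y}$ with $\vec{x}\neq\pm\vec{y}$ can be written $\vec{x}=\vec{y}+2\vec{w}$ for $\vec{w}\in\lat\setminus\{\vec{0},-\vec{y}\}$, and one checks $\|\vec{x}\|_2 \geq \lambda_1(\lat) = \|\vec{y}\|_2$ with the strictness/uniqueness needed — more carefully, among the coset $2\lat+\vec{y}$ a shortest element is necessarily a shortest nonzero lattice vector when $\|\vec{y}\|_2=\lambda_1$, and one argues the minimizer set is exactly $\{\pm\vec{y}\}$ by a parallelogram-type argument (if two non-antipodal vectors in the coset both had norm $\lambda_1$, their average or difference would produce a nonzero lattice vector shorter than $\lambda_1$). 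Hence $\vec{y}\in\VR$.

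For item~3, I would use item~1 together with Lemma~\ref{lem:rounding}. The upper bound $\max_{\vec{v}\in\VR}\|\vec{v}\|_2 \leq 2\mu(\lat)$: a Voronoi relevant vector $\vec{v}$ induces a facet $F_{\vec{v}}$ whose supporting hyperplane is $\{\pr{\vec{x}}{\vec{v}} = \frac12\pr{\vec{v}}{\vec{v}}\}$, so the midpoint $\vec{v}/2$ lies on this hyperplane and in fact in $\V$ (it is the point of the facet closest to the origin, equidistant from $\vec{0}$ and $\vec{v}$); by item~1, $\|\vec{v}/2\|_2 \leq \mu(\lat)$, giving $\|\vec{v}\|_2 \leq 2\mu(\lat)$. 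For the lower bound $\max_{\vec{v}\in\VR}\|\vec{v}\|_2 \geq 2\mu(\lat)/\sqrt{n}$: the Voronoi relevant vectors span $\R^n$ (since $\V = \cap_{\vec{v}\in\VR}H_{\vec{v}}$ is bounded), so pick $n$ linearly independent ones $\vec{v}_1,\dots,\vec{v}_n\in\VR$; by Lemma~\ref{lem:rounding}, $\mu(\lat) \leq \frac12\sqrt{\sum_{i=1}^n\|\vec{v}_i\|_2^2} \leq \frac12\sqrt{n}\max_{\vec{v}\in\VR}\|\vec{v}\|_2$, which rearranges to the claim.

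The main obstacle I expect is the careful argument in item~2 that the minimizer set in Voronoi's characterization is exactly $\{\pm\vec{y}\}$ rather than merely containing it — i.e., handling potential ties among shortest vectors in the coset $2\lat + \vec{y}$. One needs to observe that if $\vec{u}, \vec{u}' \in 2\lat+\vec{y}$ with $\|\vec{u}\|_2 = \|\vec{u}'\|_2 = \lambda_1(\lat)$ and $\vec{u}' \neq \pm\vec{u}$, then $\frac{\vec{u}-\vec{u}'}{2} \in \lat\setminus\{\vec{0}\}$ (it lies in $\lat$ since $\vec{u}-\vec{u}'\in 2\lat$, and is nonzero since $\vec{u}\neq\vec{u}'$) has norm $\leq\frac12(\|\vec{u}\|_2+\|\vec{u}'\|_2) = \lambda_1(\lat)$, with equality only if $\vec{u}' = -\vec{u}$ — contradiction. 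Everything else is routine and follows the outline above.
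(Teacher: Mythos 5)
Your proof is correct and follows essentially the same route as the paper's: item 1 via the halfspace containment and the covering-radius definition; item 2 via Voronoi's characterization (Theorem~\ref{thm:vor-bound}) plus a midpoint argument ruling out non-antipodal ties (the paper averages, using $(\vec{y}+\vec{z})/2$, where you take $(\vec{u}-\vec{u}')/2$; both lie in $\lat$ because the relevant sum/difference is in $2\lat$, and both give the strict shortening needed); and item 3's lower bound via $n$ linearly independent $\VR$ vectors and Lemma~\ref{lem:rounding}. The one place you diverge is the upper bound $\max_{\vec{v}\in\VR}\|\vec{v}\|_2\le 2\mu(\lat)$: the paper gets it directly from $\vec{v}=\argmin_{\vec{z}\in 2\lat+\vec{v}}\|\vec{z}\|_2$, which yields $\|\vec{v}\|_2\le\mu(2\lat)=2\mu(\lat)$, whereas you route through the claim $\vec{v}/2\in\V$ and then invoke item 1. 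That claim is true, but your parenthetical justification (``the point of the facet closest to the origin'') is not a proof as stated; the quickest fix is to note that $\vec{v}/2\in\V$ is equivalent to $\|\vec{v}-2\vec{y}\|_2\ge\|\vec{v}\|_2$ for all $\vec{y}\in\lat$, which is exactly the coset minimality furnished by Theorem~\ref{thm:vor-bound} --- so after unfolding, your argument and the paper's use the same fact in different clothing. (Alternatively, $F_{\vec{v}}=\V\cap(\vec{v}+\V)$ is invariant under $\vec{x}\mapsto\vec{v}-\vec{x}$, hence its centroid $\vec{v}/2$ lies in it.) With that one line added, everything checks out.
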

\begin{proof}
We prove each of the above in order:
\begin{enumerate}
\item Since each vector $\vec{y} \in \lat \setminus \set{\vec{0}}$ satisfies $\|\vec{y}\|_2 \geq
\lambda_1(\lat)$, we clearly have that $\lambda_1(\lat)/2B_2^n \subseteq H_{\vec{y}}$.  The
inner containment holds for $\V$ since $\V = \cap_{\vec{y} \in \lat \setminus \set{\vec{0}}} H_{\vec{y}}$.
For the outer containment, note that for any $\vec{t} \in \V$, that $\vec{0}$ is a closest lattice
vector to $\vec{t}$. Hence, by definition, $\|\vec{t}\|_2 = \|\vec{t}-\vec{0}\|_2 \leq \mu(\lat)$
as needed.
\item Since the set $\VR \subseteq \lat \setminus \set{\vec{0}}$, the vectors in $\VR$ clearly have
length greater than or equal to $\lambda_1(\lat)$. Next, let $\vec{y} \in \lat \setminus \set{\vec{0}}$ denote
a shortest non-zero vector of $\lat$. We wish to show that $\vec{y} \in \VR$. To do this, by
Theorem~\ref{thm:vor-bound}, we need only show that the only vectors of length $\lambda_1(\lat)$ in
$\vec{y}+2\lat$ are $\pm \vec{y}$. Assume not, then there exists $\vec{z} \in \vec{y} + 2\lat$, such
that $\vec{z}$ is not collinear with $\vec{y}$ having $\|\vec{z}\|_2 = \lambda_1(\lat)$.  But then
note that $(\vec{y}+\vec{z})/2 \in \lat \setminus\set{\vec{0}}$ and $\|(\vec{y}+\vec{z})/2\|_2
< \lambda_1(\lat)$, a contradition.
\item For $\vec{v} \in \VR$, we remember that $\vec{v} = \argmin_{\vec{z} \in 2\lat+\vec{v}}
\|\vec{z}\|_2$. In particular, this implies that $\|\vec{v}\|_2 \leq \mu(2\lat) = 2\mu(\lat)$ as
needed. Since the $\VR$ vectors span $\R^n$, we can find linearly independent
$\vec{v}_1,\dots,\vec{v}_n \in \VR$. By Lemma~\ref{lem:rounding}, we have that
\[
\mu(\lat) \leq (1/2)\sqrt{\sum_{i=1}^n \|\vec{v}\|_i^2} \leq \sqrt{n}/2 \max_{\vec{v} \in \VR}
\|\vec{v}\|_2 \text{ ,}
\]
as needed.
\end{enumerate}
\end{proof}

\subsection{Convex geometry}
\label{sec:prelims-convex}
A set $K \subseteq \R^n$ is a convex body, if it is convex (i.e.~$\vec{x},\vec{y} \in K \Rightarrow
[\vec{x},\vec{y}] \subseteq K$), compact and has non-empty interior. $K$ is symmetric if $K = -K$.
For a symmetric convex body $K \subseteq \R^n$, we define the norm (or gauge function) with respect
to $K$ by $\|\vec{x}\|_K = \inf \set{s \geq 0: \vec{x} \in sK}$, for any $\vec{x} \in \R^n$. A
function $f:K \rightarrow \R$ is convex (concave) if for all
$\vec{x},\vec{y} \in K$, $\alpha \in [0,1]$, 
\[
\alpha f(\vec{x}) + (1-\alpha)f(\vec{y}) \quad \geq (\leq) \quad 
f(\alpha \vec{x} + (1-\alpha)\vec{y}) \text{ .} 
\]
For a set $A \subseteq \R^n$, we define its convex hull $\conv(A)$ to be the
(inclusion wise) smallest convex set containing $A$. For two sets $A,B \subseteq \R^n$, we use the
notation $\conv(A,B) \eqdef \conv(A \cup B)$.   

For two non-empty measurable sets $A,B \subseteq \R^n$ such that $A+B$ is measurable, the
Brunn-Minkowski inequality gives the following fundamental lower bound
\begin{equation}
\label{thm:brunn-minkowski}
\vol_n(A+B)^{1/n} \geq \vol_n(A)^{1/n} + \vol_n(B)^{1/n} \text{.}
\end{equation}

\paragraph{{\bf Laplace Distributions.}} We define the Gamma function, $\Gamma(k) = \int_0^\infty
x^{k-1} e^{-x} dx$ for $k > 0$. For $k \in \N$, we note that $\Gamma(k) = (k-1)!$. We define the two
parameter distribution $\Gamma(k,\theta)$ on $\R$, $k,\theta \geq 0$, to have probability density
function $\frac{1}{\theta^k \Gamma(k)} x^{k-1} e^{-x/\theta}$, for $x \in \R$. For $r \sim
\Gamma(k,\theta)$, $k \in N$, the moments of $r$ are 
\[
\E[r^l] = \theta^l \frac{(k+l-1)!}{(k-1)!}\text{ , for } l \in \N \text{.} 
\]
In particular, $\E[r] = k\theta$ and $\VAR[r] = k\theta^2$. 

\begin{definition}[Laplace Distribution]
\label{def:laplace}
We define the probability distribution ${\rm Laplace}(K,\theta)$, with probability density function 
\[
f_K^\theta(\vec{x}) = \frac{\theta^n}{\vol_n(K)n!} e^{-\|x\|_K/\theta}, \quad \text{ for } \vec{x} \in \R^n.
\]
\end{definition}

Equivalently, a well known and useful fact (which we state without proof) is:
\begin{lemma} 
$X \sim {\rm Laplace}(K,\theta)$ is identically distributed to $rU$, 
where $r \sim \Gamma(n+1,\theta)$ and $U \sim \unif(K)$ are sampled independently.
\label{lem:lap-unif-gamma}
\end{lemma}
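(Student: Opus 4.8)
Proof proposal for Lemma~\ref{lem:lap-unif-gamma} (the statement that $X \sim \laplace(K,\theta)$ is identically distributed to $rU$ with $r \sim \Gamma(n+1,\theta)$, $U \sim \unif(K)$ independent).

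The plan is to compute the density of the product $Y = rU$ directly and check that it equals $f_K^\theta$. First I would reduce to $\theta = 1$ by scaling: if the claim holds for $\theta=1$, then multiplying both sides by $\theta$ gives the general case, since $\Gamma(n+1,\theta)$ is $\theta$ times $\Gamma(n+1,1)$ and $\|\theta \vec{x}\|_K = \theta\|\vec{x}\|_K$. So assume $\theta = 1$. The key computation is a polar-type change of variables adapted to the norm $\|\cdot\|_K$: any $\vec{y} \in \R^n \setminus \{\vec{0}\}$ can be written uniquely as $\vec{y} = s \vec{u}$ with $s = \|\vec{y}\|_K > 0$ and $\vec{u} \in \partial K$ (the boundary of $K$ in the gauge sense), and the Lebesgue measure factors as $\d\vec{y} = n\,\vol_n(K)\, s^{n-1}\, \d s\, \d\sigma_K(\vec{u})$, where $\sigma_K$ is the probability measure on $\partial K$ induced by pushing forward $\unif(K)$ along the radial map $\vec{x} \mapsto \vec{x}/\|\vec{x}\|_K$. (This is the standard "cone measure" identity; I would cite it or give the one-line Jacobian argument.)

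Next I would write the density of $Y = rU$ using this decomposition. Since $U \sim \unif(K)$, writing $U = R\,V$ with $R = \|U\|_K \in [0,1]$ and $V \in \partial K$ distributed as $\sigma_K$, one has that $R$ has density $n s^{n-1}$ on $[0,1]$ and is independent of $V$. Then $Y = rU = (rR)V$, so $\|Y\|_K = rR$ and the radial part $\|Y\|_K$ is the product of an independent $\Gamma(n+1,1)$ variable and a $\mathrm{Beta}(n,1)$ variable (density $n s^{n-1}$ on $[0,1]$), while the angular part is still $\sigma_K$ and independent of it. A direct convolution shows $rR$ has density $\frac{1}{n!}\int_0^\infty t^{n} e^{-t} \cdot n (w/t)^{n-1} \cdot \frac{1}{t}\,\mathbf{1}[w \le t]\,\d t = \frac{1}{(n-1)!} w^{n-1}\int_w^\infty e^{-t}\,\d t \cdot \frac{1}{\cdots}$ — carrying this out cleanly gives radial density $\frac{w^{n-1}e^{-w}}{(n-1)!} \cdot \frac{1}{?}$; after matching constants against the polar factorization $\d\vec{y} = n\vol_n(K) s^{n-1}\,\d s\,\d\sigma_K$, the $s^{n-1}$ and combinatorial factors cancel and one is left exactly with $f_K^1(\vec{y}) = \frac{1}{\vol_n(K)\,n!} e^{-\|\vec{y}\|_K}$.

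The main obstacle — really the only nonroutine point — is getting the polar/cone-measure change of variables and the normalizing constants exactly right, in particular verifying that the radial density of $\unif(K)$ is $n s^{n-1}$ on $[0,1]$ and that this is independent of the angular part (both follow from the dilation structure $\|\cdot\|_K$, i.e. $sK$ scales volume by $s^n$). Once that is in hand, the remaining steps are a one-dimensional Gamma/Beta convolution and bookkeeping of $n!$ versus $n \cdot (n-1)!$. Since the paper explicitly flags this as "well known" and states it "without proof," I would keep the write-up to the scaling reduction plus the polar-coordinates identity and a short constant-chasing paragraph, rather than belabor the measure theory.
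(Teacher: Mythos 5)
The paper states Lemma~\ref{lem:lap-unif-gamma} explicitly without proof, so there is no internal argument to compare yours against; judged on its own, your proof is correct and complete in outline. The polar factorization $\d\vec{y}=n\vol_n(K)\,s^{n-1}\,\d s\,\d\sigma_K(\vec{u})$, the fact that $\|U\|_K$ has density $ns^{n-1}$ on $[0,1]$ and is independent of the angular part, and the Gamma--Beta convolution are all valid, and the constants you left as question marks do work out: your integral evaluates to $\frac{w^{n-1}e^{-w}}{(n-1)!}$ with no spurious factor (i.e.\ $rR\sim\Gamma(n,1)$), and dividing by the polar Jacobian $n\vol_n(K)w^{n-1}$ gives exactly $e^{-w}/(n!\,\vol_n(K))=f_K^1(\vec{y})$. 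One simplification worth noting: you can bypass the cone-measure machinery entirely by conditioning on $r$. Given $r=t$, the vector $rU$ is uniform on $tK$, with density $t^{-n}\vol_n(K)^{-1}\mathbf{1}[\|\vec{y}\|_K\le t]$, so
\[
f_{rU}(\vec{y})=\int_0^\infty \frac{t^n e^{-t/\theta}}{\theta^{n+1}\,n!}\cdot\frac{\mathbf{1}[\|\vec{y}\|_K\le t]}{t^n\,\vol_n(K)}\,\d t
=\frac{1}{\theta^{n+1}\,n!\,\vol_n(K)}\int_{\|\vec{y}\|_K}^\infty e^{-t/\theta}\,\d t
=\frac{e^{-\|\vec{y}\|_K/\theta}}{\theta^{n}\,n!\,\vol_n(K)}\text{ ,}
\]
which is the Laplace density (incidentally, the exponent of $\theta$ here indicates that the density in Definition~\ref{def:laplace} should carry $\theta^{-n}$ rather than $\theta^{n}$ to integrate to one; at $\theta=1$, where your scaling reduction places you, the two coincide, so this does not affect your argument). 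This one-line version also makes transparent why the shape parameter must be $n+1$: the factor $t^{n}$ in the $\Gamma(n+1,\theta)$ density is precisely what cancels the $t^{-n}$ Jacobian of the dilation $K\mapsto tK$.
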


For our purposes, $\laplace(K,\theta)$ will serve as a ``smoothed'' out version of $\unif(K)$.  In
particular, letting $f$ denote the probability density function of $\laplace(K,\theta)$, for
$\vec{x},\vec{y} \in \R^n$, by the triangle inequality 
\begin{equation}
\label{lem:laplace-smooth}
\frac{f_K^\theta(\vec{x})}{f_K^\theta(\vec{y})} = \frac{e^{-\|\vec{x}\|_K/\theta}}{e^{-\|\vec{y}\|_K/\theta}}
\in [e^{-\|\vec{y}-\vec{x}\|_K/\theta},e^{\|\vec{y}-\vec{x}\|_K/\theta}] \text{.}
\end{equation}
Hence, the density varies smoothly as a function of $\|\cdot\|_K$ norm, avoiding the ``sharp''
boundaries of the uniform measure on $K$.

\paragraph{{\bf Algorithms.}}
A membership oracle $O_K$ for a convex body $K \subseteq \R^n$ is a function satisfying
$O_K(\vec{x}) = 1$ if $\vec{x} \in K$, and $O_K(\vec{x}) = 0$ otherwise. Most algorithms over convex
bodies can be implemented using only a membership oracle with some additional guarantees. 

In our $\CVPP$ algorithm, we will need to sample nearly uniformly from the Voronoi cell.  For this
purpose, we will utilize the classic geometric random walk method of Dyer, Frieze, and
Kannan~\cite{DyerFK91}, which allows for polynomial time near uniform sampling over any convex body.

\begin{theorem}[Convex Body Sampler~\cite{DyerFK91}]
\label{thm:dfk-sampler}
Let $K \subseteq \R^n$ be a convex body, given my a membership oracle $O_K$, satisfying $rB_2^n
\subseteq K \subseteq RB_2^n$. Then for $\eps > 0$, a realisation of a random variable $X \in K$, having 
total
variation distance at most $\eps$ from $\unif(K)$, can be computed using
$\poly(n,\log(R/r),\log(1/\eps))$ arithmetic operations and calls to the membership oracle.   
\end{theorem}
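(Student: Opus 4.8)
This is the classical convex-body sampling theorem underlying polynomial-time volume computation, so I would reproduce the standard geometric random walk argument. The plan is to run a lazy, reversible Markov chain on $K$ whose stationary distribution is \emph{exactly} $\unif(K)$, show that it reaches total-variation distance $\eps$ from stationarity in $\poly(n,\log(R/r),\log(1/\eps))$ steps from an easily sampled warm start, and output the state reached; since each step costs one membership-oracle call plus $\poly(n)$ arithmetic (to draw a point in a Euclidean ball), this gives the claimed complexity. Concretely, run the \emph{Metropolis ball walk} with step radius $\delta=\Theta(r/n)$: from the current $\vec{x}\in K$, propose $\vec{y}\sim\unif(\vec{x}+\delta B_2^n)$, move to $\vec{y}$ if the oracle reports $\vec{y}\in K$ and otherwise stay, and make the chain lazy by additionally staying with probability $1/2$. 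The proposal kernel is symmetric and the target is uniform, so the chain is reversible with stationary distribution exactly $\unif(K)$; hence the only error is the distance to stationarity after finitely many steps, with no discretization error to account for.

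\textbf{Mixing.} The technical heart is the isoperimetric inequality for convex bodies: for any measurable partition $K=K_1\cup K_2$ one has $\vol_n(\set{\vec{x}\in K:{\rm dist}(\vec{x},K_1)\le t}\setminus K_1)\ge\Omega\!\big(t/{\rm diam}(K)\big)\cdot\min(\vol_n(K_1),\vol_n(K_2))$ (Dyer--Frieze--Kannan; alternatively via the Lov\'asz--Simonovits localization lemma). Since a point at distance more than $\delta$ from $K_2$ cannot cross into $K_2$ in one step, a point well inside $K_1$ but near the interface crosses with constant probability, and only an $O(n\delta/r)$ fraction of the volume of $K$ lies within $\delta$ of $\partial K$ (by scaling $K$ about the center of an inscribed ball of radius $r$), the isoperimetric bound converts into a conductance lower bound $\phi\ge\Omega\!\big(\delta/(n\,{\rm diam}(K))\big)$ for the lazy chain. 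The standard conductance-to-mixing estimate then gives total-variation distance at most $\eps$ after $O(\phi^{-2}\log(M/\eps))$ steps, where $M$ bounds the ratio of the starting density to the stationary one; starting from $\unif(rB_2^n)$ gives $M\le\vol_n(RB_2^n)/\vol_n(rB_2^n)=(R/r)^n$, so $\log M=O(n\log(R/r))$ and the step count is $\poly(n,{\rm diam}(K)/\delta,\log(R/r),\log(1/\eps))$.

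\textbf{Rounding.} Since ${\rm diam}(K)/\delta=\Theta(nR/r)$, the step count obtained so far is \emph{polynomial} in $R/r$, not polylogarithmic. To repair this, bring $K$ to near-isotropic position first. Form a chain of bodies $K=K_0\supseteq K_1\supseteq\cdots\supseteq K_m$ with $m=O(n\log(R/r))$, where $K_{i+1}$ is $K_i$ intersected with a ball of half the circumradius, recentered at (an estimate of) the centroid; since $\vol_n(K_{i+1})\ge2^{-O(n)}\vol_n(K_i)$ (again by a scaling argument about an interior point), a near-uniform sample from $K_i$ is an $\exp(O(n))$-warm start for sampling from $K_{i+1}$, so $\log M=O(n)$ at every level. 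Bootstrapping up the chain, drawing $\poly(n)$ samples at each level to estimate the covariance matrix and applying its inverse square root before proceeding, one reaches after $O(n\log(R/r))$ levels a position in which $R/r=\poly(n)$; running the ball walk there costs $\poly(n,\log(1/\eps))$ steps per level and $\poly(n,\log(R/r),\log(1/\eps))$ overall, with all intermediate accuracies chosen polynomially small so that the accumulated error stays below $\eps$.

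\textbf{Main obstacle.} The crux is the mixing step: establishing the convex-body isoperimetric inequality and, more delicately, controlling the \emph{local} conductance near $\partial K$, where the ball walk repeatedly proposes points outside $K$ and stalls --- this is precisely what forces either the careful near-boundary accounting of Dyer--Frieze--Kannan (and the extra $\poly(n)$ slack in the conductance bound) or the switch to a boundary-insensitive walk such as hit-and-run. The isotropic-rounding bootstrap is the secondary difficulty, since it is the ingredient that upgrades the naive $\poly(R/r)$ running time to the claimed $\poly(\log(R/r))$.
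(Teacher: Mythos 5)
Theorem~\ref{thm:dfk-sampler} is not proved in this paper at all: it is imported as a black box from the convex-body sampling literature (cited to Dyer--Frieze--Kannan~\cite{DyerFK91}), and the paper only uses its statement inside Theorem~\ref{thm:cvpp}, where the membership oracle is implemented by one pass over $\VR$ and the sandwiching radii come from Lemma~\ref{lem:vor-props} and Lemma~\ref{lem:bit-bnd}. So there is no in-paper argument to compare yours against; the relevant comparison is with the cited literature. Judged on that basis, your sketch is the standard route and is essentially faithful to how the result is actually established: a lazy symmetric-proposal walk whose stationary law is exactly $\unif(K)$, an isoperimetry-to-conductance-to-mixing chain with a warm start of warmth $(R/r)^n$, and a rounding bootstrap to remove the polynomial dependence on $R/r$. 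Two caveats are worth recording. First, attribution: the 1991 Dyer--Frieze--Kannan algorithm itself ran a random walk on a fine lattice/cube discretization of (a smoothed version of) the body, not the continuous Metropolis ball walk; the clean ball-walk conductance analysis and the isotropic-position rounding you describe are due to Lov\'asz--Simonovits and Kannan--Lov\'asz--Simonovits. This does not matter for the paper, which only needs some $\poly(n,\log(R/r),\log(1/\eps))$ sampler, but your proof is really a proof of the later formulation. Second, the step you flag as the ``main obstacle'' is a genuine gap in the sketch as written: a bare conductance bound of the form $\phi \geq \Omega(\delta/(n\,\mathrm{diam}))$ is false for the ball walk because of points with tiny local conductance near sharp boundary pieces, and the known repairs (restricting to the high local-conductance core, $s$-conductance with a warm start, the speedy-walk device, or switching to hit-and-run) are exactly where most of the technical work lives. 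Since you name the issue but do not resolve it, your text is an accurate roadmap rather than a complete proof --- which is an acceptable level of detail here, given that the paper itself treats the theorem as a citation.
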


%
%
%


\section{Bounding the Number of Crossings}
\label{sec:voronoi-proofs}

In this section, we prove bounds on the number of crossings the randomized straight line algorithm
induces on the tiling boundary $\lat + \partial\V$. For a target $\vec{t}$, starting point $\vec{x}
\in \lat$, and perturbation $Z \sim \unif(V)$, we need to bound the expected number of crossings in
phases B and C, that is
\[
(B)~ \E[|(\lat + \partial\V) \cap [\vec{x}+Z,\vec{t}+Z]|] \quad \quad 
(C)~ \E[|(\lat + \partial\V) \cap [\vec{t}+Z,\vec{t})|] \text{ .}
\] 
The phase B bound is given in Section~\ref{sec:phase-b}, and the phase C is given in
Section~\ref{sec:phase-c}. 

\subsection{Phase B estimates}
\label{sec:phase-b}

The high level idea of the phase B bound is as follows. To count the number of crossings, we break
the segment $[\vec{x}+Z,\vec{y}+Z]$ into $k$ equal chunks (we will let $k \rightarrow \infty$), and
simply count the number of chunks which cross at least $1$ boundary. By our choice of perturbation,
we can show that each point on the segment $[\vec{x}+Z,\vec{y}+Z]$ is uniformly distributed modulo
the lattice, and hence the crossing probability will be indentical on each chunk. In particular, we
will get that each crossing probability is exactly the probability that $Z$ ``escapes'' from $\V$
after moving by $(\vec{y}-\vec{x})/k$. This measures how close $Z$ tends to be to the boundary of
$\V$, and hence corresponds to a certain ``directional'' surface area to volume ratio. 

In the next lemma, we show that the escape probability is reasonably small for any symmetric convex
body, when the size of the shift is measured using the norm induced by the body. We shall use this
to prove the full phase B crossing bound in Theorem~\ref{thm:phase-b}.

\begin{lemma} Let $K \subseteq \R^n$ be a centrally symmetric convex body. Then for $Z \sim \unif(K)$
and $\vec{y} \in \R^n$, we have that
\[
\lim_{\eps \rightarrow 0} \Pr[Z+ \eps \vec{y} \notin K]/\eps \leq (n/2)\|\vec{y}\|_K
\]  
\label{lem:exit-bnd}
\end{lemma}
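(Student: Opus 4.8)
The plan is to turn the exit probability into a volume ratio and then lower bound the volume of the overlap $K \cap (K - \eps\vec{y})$ by sliding a slightly shrunk copy of $K$ into it. Since $Z \sim \unif(K)$, we have
\[
\Pr[Z + \eps\vec{y} \notin K] = \Pr[Z \notin K - \eps\vec{y}]
= \frac{\vol_n\big(K \setminus (K - \eps\vec{y})\big)}{\vol_n(K)}
= 1 - \frac{\vol_n\big(K \cap (K - \eps\vec{y})\big)}{\vol_n(K)}\text{,}
\]
so it suffices to give a good lower bound on $\vol_n(K \cap (K - \eps\vec{y}))$. If $\vec{y} = \vec{0}$ the statement is trivial, so assume $t \eqdef \|\vec{y}\|_K > 0$.

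For the lower bound I would invoke the elementary fact that, for a symmetric convex body $K$, a scalar $\lambda \in [0,1]$ and a vector $\vec{c}$ with $\|\vec{c}\|_K \le \lambda$, one has $(1-\lambda)K + \vec{c} \subseteq K$ (write $(1-\lambda)\vec{x} + \vec{c}$ as the convex combination of $\vec{x} \in K$ and $\vec{c}/\lambda \in K$, using compactness of $K$; the case $\lambda = 0$ forces $\vec{c} = \vec{0}$). Now take $\vec{c} = -\tfrac{\eps}{2}\vec{y}$ and $\lambda = \tfrac{\eps t}{2}$, which is legal once $\eps\|\vec{y}\|_K \le 2$. Then $\|\vec{c}\|_K = \tfrac{\eps t}{2} = \lambda$, so $(1 - \tfrac{\eps t}{2})K - \tfrac{\eps}{2}\vec{y} \subseteq K$; and translating by $\eps\vec{y}$, the set $(1 - \tfrac{\eps t}{2})K + \tfrac{\eps}{2}\vec{y} \subseteq K$ as well, because $\|\tfrac{\eps}{2}\vec{y}\|_K = \tfrac{\eps t}{2} = \lambda$. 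Hence $(1 - \tfrac{\eps t}{2})K - \tfrac{\eps}{2}\vec{y}$ lies in both $K$ and $K - \eps\vec{y}$, giving $\vol_n(K \cap (K - \eps\vec{y})) \ge (1 - \tfrac{\eps t}{2})^n \vol_n(K)$.

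Plugging this into the displayed identity yields $\Pr[Z + \eps\vec{y} \notin K] \le 1 - (1 - \tfrac{\eps}{2}\|\vec{y}\|_K)^n$; dividing by $\eps$ and taking $\eps \to 0$ (the right derivative at $0$ of $\eps \mapsto (1 - \tfrac{\eps}{2}\|\vec{y}\|_K)^n$) gives exactly $(n/2)\|\vec{y}\|_K$, as claimed. There is no real obstacle here: the only geometric content is the inclusion $(1 - \tfrac{\eps t}{2})K - \tfrac{\eps}{2}\vec{y} \subseteq K \cap (K - \eps\vec{y})$, and everything else is bookkeeping and a one-line limit. An alternative to the explicit inner-copy argument would be to apply Brunn--Minkowski to the body $K \cap (K - \eps\vec{y})$ after recentering it at $-\tfrac{\eps}{2}\vec{y}$ (which makes it symmetric), but the direct inclusion is cleaner and produces the constant $1/2$ without any loss.
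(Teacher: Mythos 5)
Your proof is correct, and it takes a genuinely different route from the paper's. The paper first normalizes $\vec{y}=\vec{e}_n$ by a linear transformation, computes the limit \emph{exactly} as the projection-to-volume ratio $\vol_{n-1}(\pi_{n-1}(K))/\vol_n(K)$ via a chord-length computation, and then bounds that ratio using a concavity/Brunn--Minkowski argument on an auxiliary body $K'$ with the same chord lengths. You instead never compute the limit: you write the exit probability as $1-\vol_n(K\cap(K-\eps\vec{y}))/\vol_n(K)$ and lower bound the overlap by the shrunk translate $(1-\tfrac{\eps}{2}\|\vec{y}\|_K)K-\tfrac{\eps}{2}\vec{y}$, whose containment in both $K$ and $K-\eps\vec{y}$ uses only convexity, homogeneity of the gauge, and central symmetry (the symmetry enters exactly where you evaluate $\|\pm\tfrac{\eps}{2}\vec{y}\|_K$); the bound $1-(1-\tfrac{\eps}{2}\|\vec{y}\|_K)^n$ then differentiates to the same constant $(n/2)\|\vec{y}\|_K$ with no loss. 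Your argument is more elementary (no Brunn--Minkowski, no change of variables), while the paper's buys the exact value of the limit -- the directional surface-area-to-volume ratio that motivates the whole phase~B analysis -- and in particular establishes that the limit exists. Your argument only bounds the $\limsup$; that is all that is needed here (and in the application in Theorem~\ref{thm:phase-b} the relevant limit exists by monotone convergence), but it is worth a sentence acknowledging that you are bounding $\limsup_{\eps\rightarrow 0}\Pr[Z+\eps\vec{y}\notin K]/\eps$ rather than proving the limit exists.
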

\begin{proof}
By applying a linear transformation to $K$ and $\vec{y}$, we may assume that $\vec{y} = \vec{e}_n$. Let $\pi_{n-1}: \R^n \rightarrow \R^{n-1}$ denote the
projection onto the first $n-1$ coordinates. Define $l:\pi_{n-1}(K) \rightarrow \R_+$ as $l(\vec{x}) =
\vol_1(\set{(\vec{x},x_n): x_n \in \R, (\vec{x},x_n) \in K})$, i.e.~the length of the chord of $K$ passing through $(\vec{x},0)$ in direction $\vec{e}_n$.

For $\vec{x} \in \pi_{n-1}(K)$, let $\set{(\vec{x},x_n): x_n \in \R, (\vec{x},x_n) \in K} = [(\vec{x},a),(\vec{x},b)]$, $a \leq b$, denote its associated
chord, where we note that $|b-a| = l(\vec{x})$. From here, conditioned on $Z$ landing on this chord, 
note that $Z+\eps \vec{e}_n \notin K$ if and only
if $Z$ lies in the half open line segment $((\vec{x},b-\eps),(\vec{x},b)]$. Given this, we have that
\begin{align}
\label{eq:eb-1}
\begin{split}
\lim_{\eps \rightarrow 0} \Pr[Z + \eps \vec{e}_n \notin K]/\eps 
													&= \lim_{\eps \rightarrow 0} (1/\eps) \int_{\pi_{n-1}(K)} \min \set{\eps, l(\vec{x})} \frac{\d\vec{x}}{\vol_n(K)} \\
													&= \lim_{\eps \rightarrow 0} \int_{\pi_{n-1}(K)} \min \set{1, l(\vec{x})/\eps} \frac{\d\vec{x}}{\vol_n(K)} \\
													&= \int_{\pi_{n-1}(K)} \frac{\d\vec{x}}{\vol_n(K)} = \frac{\vol_{n-1}(\pi_{n-1}(K))}{\vol_n(K)} \text{ .}
\end{split}
\end{align}
Let $s = 1/\|\vec{e}_n\|_K$. Since $K$ is centrally symmetric, $\R \vec{e}_n \cap K = [-s\vec{e}_n,s\vec{e}_n]$ and hence $l(\vec{0}) = 2s$.  Note
that by central symmetry of $K$, for all $\vec{x} \in \pi_{n-1}(K)$, $l(\vec{x}) = l(-\vec{x})$. Since $K$ is convex, the function $l$ is concave on
$\pi_{n-1}(K)$, and hence
\[
\max_{\vec{x} \in \pi_{n-1}(K)} l(\vec{x}) = \max_{\vec{x} \in \pi_{n-1}(K)} \frac{1}{2}l(\vec{x}) + \frac{1}{2}l(-\vec{x}) 
                                           \leq \max_{\vec{x} \in \pi_{n-1}(K)} l(\vec{0}) = 2s \text{.}
\]

Let $K' = \set{(\vec{x},x_n): \vec{x} \in \pi_{n-1}(K), 0 \leq x_n \leq l(\vec{x})}$. By concavity
of $l$, it is easy to see that $K'$ is also a convex set. Furthermore, note that $K'$ has exactly
that same chord lengths as $K$ in direction $\vec{e}_n$, and hence $\vol_n(K') = \vol_n(K)$. For $a
\in \R$, let $K'_a = K' \cap \set{(\vec{x},a): \vec{x} \in \R^{n-1}}$. Here $\R \vec{e}_n \cap K' =
[\vec{0},2s\vec{e}_n]$, and hence $K_a \neq \emptyset$, $\forall a \in [0,2s]$. Therefore $K_a =
\emptyset$ for $a > 2s$, since the maximum chord length is $l(\vec{0}) = 2s$, as well as for $a <
0$. By construction of $K'$, we see that $K'_0 = \pi_{n-1}(K) \times \set{0}$, and hence
$\vol_{n-1}(K'_0) = \vol_{n-1}(\pi_{n-1}(K))$.

Given~(\ref{eq:eb-1}), to prove the Lemma, it now suffices to show that 
\[
\frac{\vol_{n-1}(\pi_{n-1}(K))}{\vol_n(K)} = \frac{\vol_{n-1}(\pi_{n-1}(K))}{\vol_n(K')} \leq (n/2)\|\vec{e}_n\|_K \text{ .}
\]
For $a \in [0,2s]$, by convexity of $K'$ and the Brunn-Minkowski inequality on $\R^{n-1}$, we have that
\begin{align*}
\vol_{n-1}(K'_a)^{\frac{1}{n-1}} &\geq \vol_{n-1}((1-\frac{a}{2s})K'_0 + \frac{a}{2s}K'_{2s})^{\frac{1}{n-1}} \\
                                 &\geq (1-\frac{a}{2s})\vol_{n-1}(K'_0)^{\frac{1}{n-1}} + \frac{a}{2s}\vol_{n-1}(K'_{2s})^{\frac{1}{n-1}} \\
                                 &\geq (1-\frac{a}{2s})\vol_{n-1}(\pi_{n-1}(K))^{\frac{1}{n-1}} \text{ .}
\end{align*}

Therefore, we have that
\begin{align*}
\vol_n(K') &= \int_0^{2s} \vol_{n-1}(K'_a) {\rm da} \geq \vol_{n-1}(\pi_{n-1}(K)) \int_0^{2s} \left(1-\frac{a}{2s}\right)^{n-1} {\rm da} \\
           &= \vol_{n-1}(\pi_{n-1}(K))(2s) \int_0^1 (1-a)^{n-1}{\rm da} = \vol_{n-1}(\pi_{n-1}(K))(2s)/n \\
           &= \frac{2\vol_{n-1}(\pi_{n-1}(K))}{n\|\vec{e}\|_n} \text{, }
\end{align*}
as needed.
\end{proof}

\subsubsection{Proof of Theorem~\ref{thm:phase-b} (Phase B crossing bound)}

\begin{proof}
Note first that the sets $(\lat + \partial\V) \cap [\vec{x}+Z,\vec{y}+Z]$ and $(\lat + \partial\V) \cap
[\vec{x}+Z,\vec{y}+Z)$ agree unless $\vec{y}+Z \in \lat + \partial\V$.  Given that this event happens with probability
$0$ (as $\lat+\partial\V$ has $n$ dimensional Lebesgue measure $0$), we get that
\[
\E_Z[|(\lat + \partial\V) \cap [\vec{x}+Z,\vec{y}+Z]|] 
           = \E_Z[|(\lat + \partial\V) \cap [\vec{x}+Z,\vec{y}+Z)|] \text{ .}
\]
We now bound the expectation on the right hand side. For $s \in [0,1]$, define the random variable $\ell(s) = (1-s)
\vec{x} + s \vec{y} + Z$.  Let $A^k_j$, $0 \leq j < 2^k$, denote the event that 
\[
|(\lat + \partial\V) \cap [\ell(j/2^k),\ell((j+1)/2^k))| \geq 1 \quad \Leftrightarrow \quad 
|(\lat + \partial\V) \cap [\ell(j/2^k),\ell(j/2^k)+(\vec{y}-\vec{x})/2^k)| \geq 1 \text{ .}
\]
Clearly, we have that
\[
|(\lat + \partial\V) \cap [\ell(0),\ell(1))| = \lim_{k \rightarrow \infty} \sum_{j=0}^{2^k-1} A^k_j \text{ .}
\]
By the monotone convergence theorem, we get that
\begin{equation}
\label{eq:phaseb-1}
\E_Z[|(\lat + \partial\V) \cap [\ell(0),\ell(1))|] = \lim_{k \rightarrow \infty} 
\sum_{j=0}^{2^k-1} \Pr[A^k_j=1]\text{ .} 
\end{equation}
Since $\lat + \partial\V$ is by definition invariant under lattice shifts, we see that
$\Pr[A^k_j=1]$ depends only on the distribution of $\ell(j/2^k) \imod{\lat}$. Given that $Z
\imod{\lat} \sim {\rm uniform}(\R^n / \lat)$ and that $\R^n / \lat$ is shift invariant, we have that
$\ell(j/2^k) \imod{\lat} \sim {\rm uniform}(\R^n / \lat)$. In particular, this implies that
$\Pr[A^k_0] = \cdots = \Pr[A^k_{2^k}]$, and hence by Lemma~\ref{lem:exit-bnd}
\begin{align}
\begin{split}
\label{eq:phaseb-2}
\lim_{k \rightarrow \infty} \sum_{j=0}^{2^k-1} \Pr[A^k_j=1] &= \lim_{k \rightarrow \infty} 2^k \Pr[A^k_0=1] \\
                  &= \lim_{k \rightarrow \infty} 2^k \Pr[Z+(\vec{y}-\vec{x})/2^k \notin \V] \leq (n/2)\|\vec{y}-\vec{x}\|_\V \text{ ,}
\end{split}
\end{align}
as needed. The result follows by combining \eqref{eq:phaseb-1} and \eqref{eq:phaseb-2}.
\end{proof}

\subsection{Phase C estimates}
\label{sec:phase-c}

As mentioned previously in the paper, our techniques will not be sufficient to fully bound
the number of phase C crossings. However, we will use be able to give bounds for a truncation
of the phase C path, that is for $\alpha \in (0,1]$, we will bound
\[
\E[|(\lat+\partial\V) \cap [\vec{t}+Z,\vec{t}+\alpha Z]|] \text{ .}
\] 
We will give a bound of $O(n \ln 1/\alpha)$ for the above crossings in Theorem~\ref{thm:phase-c}.

For the proof strategy, we follow the approach as phase B in terms of bounding the crossing
probability on infinitessimal chunks of $[\vec{t}+Z,\vec{t}+\alpha Z]$. However, the implementation
of this strategy will be completely different here, since the points along the segment no longer have
the same distribution modulo the lattice. In particular, as $\alpha \rightarrow 0$, the
distributions get more concentrated, and hence we lose the effects of the randomness. This loss
will be surprisingly mild however, as evidenced by the claimed $\ln(1/\alpha)$ dependence.

For the infinitessimal probabilities, it will be convenient to parametrize the segment $[\vec{t}+Z,
\vec{t}+\alpha Z]$ differently than in phase B. In particular, we use $\vec{t}+Z/s$, for $s \in
[1,1/\alpha]$. From here, note that
\begin{equation}
\label{eq:phasec-inf-cross}
\Pr[(\lat+\partial\V) \cap [\vec{t}+Z/s,\vec{t}+Z/(s+\eps)] \neq \emptyset]
= \Pr[Z \in \cup_{\gamma \in [s,s+\eps]} \gamma(\lat-\vec{t}+\partial\V)] \text{ .}
\end{equation}
Taking the limit as $\eps \rightarrow 0$, we express the infinitessimal probability as a certain
weighted surface area integral over $s(\lat-\vec{t}+\partial\V)$ (see
Lemma~\ref{thm:laplace-cross-bnd}). 

In the same spirit as phase B, we will attempt to relate the surface integral to a nicely bounded
integral over all of space. To help us in this task, we will rely on a technical trick to ``smooth
out'' the distribution of $Z$. More precisely, we will replace the perturbation $Z \sim \unif(\V)$
by the perturbation $X \sim \laplace(\V,\theta)$, for an appropriate choice of $\theta$. For the
relationship between both types of perturbatoins, we will use the representation of $X$ as $rZ$, where $r
\sim \Gamma(n+1,\theta)$. We will choose $\theta$ so that $r$ is concentrated in the interval
$[1,1+\frac{1}{n}]$, which will insure that the number of crossings for $X$ and $Z$ are roughly the
same.  

The benefit of the Laplace perturbation for us will be as follows. Since it varies much more
smoothly than the uniform distribution (which has ``sharp boundaries''), it will allow us to make
the analysis of the surface integral entirely local by using the periodic structure of
$s(\lat-\vec{t}+\partial\V)$. In particular, we will be able to relate the surface integral over
each tile $s(\vec{y}-\vec{t}+\partial\V)$, $\vec{y} \in \lat$, to a specific integral over each cone
$s(\vec{y}+\conv(\vec{0},F_\vec{v}))$, $\forall \vec{v} \in \VR$, making up the tile. Under the
uniform distribution, the probability density over each tile can be challenging to analyze, since
the tile may only be partially contained in $\V$. However, under the Laplace distribution, we know
that over $s(\vec{y}+\V)$ the density can vary by at most $e^{\pm s/\theta}$ (see
Equation~\ref{lem:laplace-smooth} in the Preliminaries).  

The integral over $\R^n$ we end up using to control the surface integral over
$s(\lat-\vec{t}+\partial\V)$ turns out the be rather natural. At all scales, we simply use the
integral representation of $\E[\|X\|_\V] = n\theta$ (see Lemma~\ref{lem:surface-area-bnd}). In
particular, as $s \rightarrow \infty$, for the appropriate choice of $\theta$, this will allow us to
bound the surface integral over $s(\lat-\vec{t}+\partial\V)$ by $O(n/s)$. Integrating this bound
from $1$ to $1/\alpha$ yields the claimed $O(n \ln(1/\alpha))$ bound on the number of crossings.

This section is organized as follows. In subsection~\ref{sec:unif-to-lap}, we relate the number of
crossings for uniform and Laplace perturbations. In subsection~\ref{sec:lap-cross}, we bound the
number of crossings for Laplace perturbations. Lastly, in subsection~\ref{sec:phase-c-final-bnd}, we
combine the estimates from the previous subsections to give the full phase C in
Theorem~\ref{thm:phase-c}.

\subsubsection{Converting Uniform Perturbations to Laplace Perturbations}
\label{sec:unif-to-lap}

In this section, we show that the number of crossings induced by uniform perturbations can be
controlled by the number of crossings induced by Laplace perturbation.

We define $\theta_n = \frac{1}{(n+1)-\sqrt{2(n+1)}}$, $\gamma_n =
(1+\frac{2\sqrt{2}}{\sqrt{n+1}-\sqrt{2}})^{-1}$ for use in the rest of this section.

The following Lemma shows the $\Gamma(n+1,\theta)$ distribution is concentrated in a small interval
above $1$ for the appropriate choice of $\theta$. This will be used in Lemma~\ref{lem:unif-to-lap}
to relate the number of crossings between the uniform and Laplace perturbations.

\begin{lemma} For $r \sim \Gamma(n+1,\theta_n)$, $n \geq 2$, we have that
\[
\Pr[r \in [1,1 + \frac{2\sqrt{2}}{\sqrt{n+1}-\sqrt{2}}]] \geq \frac{1}{2}
\]
\label{lem:gamma-conc}
\end{lemma}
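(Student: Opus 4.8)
The plan is to prove concentration of $r \sim \Gamma(n+1,\theta_n)$ around its mean $\E[r] = (n+1)\theta_n$ via a two-sided Chebyshev-type argument, exploiting the fact that $\theta_n$ has been chosen precisely so that the interval $[1, 1+\tfrac{2\sqrt2}{\sqrt{n+1}-\sqrt2}]$ is a symmetric-in-the-right-units window around the mean of width proportional to the standard deviation. First I would record the basic moments: for $r \sim \Gamma(n+1,\theta_n)$ we have $\E[r] = (n+1)\theta_n$ and $\VAR[r] = (n+1)\theta_n^2$, so the standard deviation is $\sigma = \sqrt{n+1}\,\theta_n$. Plugging in $\theta_n = \frac{1}{(n+1)-\sqrt{2(n+1)}} = \frac{1}{\sqrt{n+1}(\sqrt{n+1}-\sqrt2)}$, a short computation gives $\E[r] = \frac{\sqrt{n+1}}{\sqrt{n+1}-\sqrt2}$ and $\sigma = \frac{1}{\sqrt{n+1}-\sqrt2}$, hence $\E[r] = 1 + \frac{\sqrt2}{\sqrt{n+1}-\sqrt2} = 1 + \frac{\sqrt2}{\sqrt{n+1}}\cdot\sqrt{n+1}\,\sigma \cdot (\dots)$ — more cleanly, $\E[r] - 1 = \sqrt2\,\sigma$ and the right endpoint of the target interval is $1 + 2\sqrt2\,\sigma$. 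So the interval is exactly $[\E[r] - \sqrt2\,\sigma,\ \E[r] + \sqrt2\,\sigma]$.

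Given that identification, the statement $\Pr[r \in [1, 1+2\sqrt2\,\sigma]] \geq \tfrac12$ becomes $\Pr[|r - \E[r]| \leq \sqrt2\,\sigma] \geq \tfrac12$, i.e. $\Pr[|r-\E[r]| > \sqrt2\,\sigma] < \tfrac12$. This is exactly what Chebyshev's inequality delivers: $\Pr[|r - \E[r]| \geq \sqrt2\,\sigma] \leq \frac{\VAR[r]}{2\sigma^2} = \frac12$. To get a strict inequality (or to be safe about the boundary), I would note that $\Gamma(n+1,\theta_n)$ is a continuous distribution, so $\Pr[|r-\E[r]| = \sqrt2\,\sigma] = 0$ and hence $\Pr[|r-\E[r]| > \sqrt2\,\sigma] \leq \tfrac12$, giving $\Pr[|r - \E[r]| \leq \sqrt2\,\sigma] \geq \tfrac12$ as required. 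The condition $n \geq 2$ enters only to ensure $\theta_n > 0$ (equivalently $\sqrt{n+1} > \sqrt2$, i.e. $n > 1$), so that the Gamma distribution is well-defined with positive scale parameter and the arithmetic above is valid.

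The main thing to get right — and the only place the proof could go wrong — is the algebraic bookkeeping verifying that $1 = \E[r] - \sqrt2\,\sigma$ and $1 + \frac{2\sqrt2}{\sqrt{n+1}-\sqrt2} = \E[r] + \sqrt2\,\sigma$ exactly, so that Chebyshev applies with constant precisely $\tfrac12$ and no slack is lost. I would double-check this by computing $\E[r] - \sqrt2\,\sigma = \frac{\sqrt{n+1}}{\sqrt{n+1}-\sqrt2} - \frac{\sqrt2}{\sqrt{n+1}-\sqrt2} = \frac{\sqrt{n+1}-\sqrt2}{\sqrt{n+1}-\sqrt2} = 1$ and similarly $\E[r] + \sqrt2\,\sigma = \frac{\sqrt{n+1}+\sqrt2}{\sqrt{n+1}-\sqrt2} = 1 + \frac{2\sqrt2}{\sqrt{n+1}-\sqrt2}$, which matches. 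So the choice of $\theta_n$ is reverse-engineered from Chebyshev, and the proof is essentially: compute mean and variance, observe the target interval is the $\sqrt2$-standard-deviation window, apply Chebyshev. No hard step; the only "obstacle" is being careful with the surd arithmetic and with the continuity remark at the interval endpoints.
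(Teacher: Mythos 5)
Your proposal is correct and follows essentially the same route as the paper: compute $\E[r]=(n+1)\theta_n$ and $\VAR[r]=(n+1)\theta_n^2$, check that the interval endpoints are exactly $\E[r]\pm\sqrt{2}\,\sigma$, and apply Chebyshev's inequality with deviation $\sqrt{2}\,\sigma$. The continuity remark about the endpoints is not even needed, since the complement of the closed interval is contained in the event $\{|r-\E[r]|\geq\sqrt{2}\,\sigma\}$, which Chebyshev already bounds by $1/2$.
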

\begin{proof}
Remember that $\E[r] = (n+1) \theta_n$ and that $\VAR[r] = (n+1)\theta_n^2$. Letting $\sigma =
\sqrt{\VAR[r]}$, by Chebyshev's inequality
\[
\Pr[|r-\E[r]| \geq \sqrt{2}\sigma] \leq \frac{\VAR[r]}{2\sigma^2} = \frac{1}{2}
\]
The result now follows from the identities
\begin{align*}
\E[r]-\sqrt{2}\sigma &= ((n+1)-\sqrt{2(n+1)})\theta_n = 1 \\ 
\E[r]+\sqrt{2}\sigma &= ((n+1)+\sqrt{2(n+1)})\theta_n = 1 + \frac{2\sqrt{2}}{\sqrt{n+1}-\sqrt{2}}
\end{align*}
\end{proof} 

\begin{lemma} Let $\lat$ be an $n$-dimensional lattice, $n \geq 2$, and $\vec{t} \in \R^n$. Then for
$\alpha \in [0,1]$, $Z \sim \unif(\V)$ and $X \sim \laplace(\V,\theta_n)$, we have that
\[
\E_Z[|(\lat + \partial\V) \cap [Z+\vec{t}, \alpha Z + \vec{t}]|] \leq
2\E_X[|(\lat + \partial\V) \cap [X+\vec{t}, \gamma_n \alpha X + \vec{t}]]
\] 
where $\gamma_n = (1+\frac{2\sqrt{2}}{\sqrt{n+1}-\sqrt{2}})^{-1}$. 
\label{lem:unif-to-lap}
\end{lemma}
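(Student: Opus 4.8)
The key idea is to use the representation $X \sim rZ$ with $r \sim \Gamma(n+1,\theta_n)$ and $Z \sim \unif(\V)$ independent (Lemma~\ref{lem:lap-unif-gamma}), together with the concentration of $r$ in the interval $I_n = [1, 1+\frac{2\sqrt{2}}{\sqrt{n+1}-\sqrt{2}}]$ from Lemma~\ref{lem:gamma-conc}. The plan is to show that, conditioned on $r \in I_n$, the phase-C segment $[X+\vec{t},\gamma_n\alpha X+\vec{t}] = [rZ+\vec{t},\gamma_n\alpha r Z+\vec{t}]$ \emph{contains} (a sub-segment covering all the crossings of) the scaled segment $[Z+\vec{t},\alpha Z+\vec{t}]$. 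Indeed, writing the phase-C path in the $\vec{t}+Z/s$ parametrization used later in the section, the segment $[rZ+\vec{t},\gamma_n\alpha rZ+\vec{t}]$ is $\{\vec{t}+rZ/s : s \in [1,1/(\gamma_n\alpha)]\}$, i.e. $\{\vec{t}+Z/s' : s' \in [1/r, 1/(r\gamma_n\alpha)]\}$. When $r \in I_n$ we have $r \geq 1$ so $1/r \leq 1$, and $r\gamma_n \leq 1$ (since $\gamma_n = (1 + \frac{2\sqrt2}{\sqrt{n+1}-\sqrt2})^{-1}$ is exactly the reciprocal of the right endpoint of $I_n$), so $1/(r\gamma_n\alpha) \geq 1/\alpha$. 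Hence $[1,1/\alpha] \subseteq [1/r, 1/(r\gamma_n\alpha)]$, which gives the pointwise (for each fixed realization of $r \in I_n$ and $Z$) set inclusion
\[
(\lat+\partial\V)\cap[Z+\vec{t},\alpha Z+\vec{t}] \subseteq (\lat+\partial\V)\cap[X+\vec{t},\gamma_n\alpha X+\vec{t}],
\]
and in particular $|(\lat+\partial\V)\cap[Z+\vec{t},\alpha Z+\vec{t}]| \leq |(\lat+\partial\V)\cap[X+\vec{t},\gamma_n\alpha X+\vec{t}]|$.

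From here the conclusion follows by a short conditioning argument. Since the left-hand count $N := |(\lat+\partial\V)\cap[Z+\vec{t},\alpha Z+\vec{t}]|$ depends only on $Z$ and not on $r$, and $r$ is independent of $Z$,
\[
\E_X[|(\lat+\partial\V)\cap[X+\vec{t},\gamma_n\alpha X+\vec{t}]|] \geq \E\big[\mathbf{1}[r\in I_n]\cdot|(\lat+\partial\V)\cap[X+\vec{t},\gamma_n\alpha X+\vec{t}]|\big] \geq \Pr[r\in I_n]\cdot\E_Z[N] \geq \tfrac{1}{2}\E_Z[N],
\]
where the middle inequality uses the set inclusion above valid on the event $\{r \in I_n\}$, and the last uses Lemma~\ref{lem:gamma-conc}. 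Rearranging gives $\E_Z[N] \leq 2\E_X[\cdots]$, which is exactly the claim.

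The main obstacle is making the ``set inclusion'' step fully rigorous at the level of the boundary set $\lat+\partial\V$: one must handle the edge case where an endpoint of one of the segments lands exactly on $\lat+\partial\V$ (a measure-zero event for $Z$, which one disposes of as in the proof of Theorem~\ref{thm:phase-b} by passing to half-open segments), and one must check that the parametrization change $s \mapsto s/r$ genuinely maps the crossing points of the shorter segment into those of the longer one — i.e. that extending the parameter range can only add crossings, never merge or lose any. The second point is immediate since $[1,1/\alpha]\subseteq[1/r,1/(r\gamma_n\alpha)]$ as closed intervals and the map $s' \mapsto \vec{t}+Z/s'$ is injective, so every point of $(\lat+\partial\V)$ on the smaller segment is a point of $(\lat+\partial\V)$ on the larger one. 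A secondary, purely arithmetic point to verify carefully is the identity $\gamma_n^{-1} = 1 + \frac{2\sqrt2}{\sqrt{n+1}-\sqrt2}$, which is exactly the right endpoint of the concentration interval in Lemma~\ref{lem:gamma-conc}, so that $r\gamma_n \le 1$ holds on $\{r \in I_n\}$; this is just a matter of unwinding the definitions of $\theta_n$ and $\gamma_n$.
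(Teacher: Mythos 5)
Your proposal is correct and follows essentially the same route as the paper's proof: represent $X$ as $rZ$ with $r \sim \Gamma(n+1,\theta_n)$ independent of $Z$, note the segment inclusion $[Z+\vec{t},\alpha Z+\vec{t}] \subseteq [rZ+\vec{t},\gamma_n\alpha rZ+\vec{t}]$ whenever $r \in [1,\gamma_n^{-1}]$, and combine with the concentration bound of Lemma~\ref{lem:gamma-conc} and independence to lose only a factor $2$. The only cosmetic difference is that the paper states the inclusion directly in the convex-combination parametrization (which also covers $\alpha=0$ without the $s\mapsto 1/s$ change of variables) and does not need the measure-zero endpoint caveat, since inclusion of closed segments already yields the desired inequality of crossing counts.
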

\begin{proof}
We shall use the fact that $X$ is identically distributed to $rZ$ where $r \sim
\Gamma(n+1,\theta_n)$ is sampled independently from $Z$. Conditioned on any value of $Z$, 
the following inclusion holds 
\[
[Z + \vec{t},\alpha Z + \vec{t}) \subseteq [rZ + \vec{t}, \gamma_n \alpha r Z + \vec{t})
\]
as long as $r \in [1,\gamma_n^{-1}] = [1, 1+\frac{2\sqrt{2}}{\sqrt{n+1}-\sqrt{2}}]$. 
By Lemma \ref{lem:gamma-conc}, we get that 
\begin{align*}
\E_X[|(\lat + \partial\V) \cap [X+\vec{t}, \gamma_n \alpha X + \vec{t}]|] 
&= \E_Z[~ \E_r[|(\lat + \partial\V) \cap [rZ+\vec{t}, \gamma_n \alpha rZ + \vec{t}]|] ~] \\ 
&\geq \E_Z[|(\lat + \partial\V) \cap [Z+\vec{t}, \alpha Z + \vec{t}]| \Pr[r \in [1, \gamma_n^{-1}]]] \\
&\geq \frac{1}{2} \E_Z[|(\lat + \partial\V) \cap [Z+\vec{t}, \alpha Z + \vec{t}]|] \text{,}
\end{align*}
as needed.
\end{proof}

\subsubsection{Bounding the Number of Crossing for Laplace Perturbations}
\label{sec:lap-cross}

In this section, we bound the number of crossings induced by Laplace perturbations.  The expression
for the infinitessimal crossing probabilities is given in Lemma~\ref{lem:cross-expr}, the bound
on the surface area integral over $s(\lat-\vec{t}+\partial\V)$ to $\E[\|X\|_\V]$ is given in
Lemma~\ref{lem:surface-area-bnd}, and the full phase C Laplace crossing bound is given in
Theorem~\ref{thm:laplace-cross-bnd}. 

For $\vec{t} \in \R^n$, and $s > 0$, the set $s(\lat-\vec{t}+\partial\V)$ is a shifted and scaled
version of the tiling boundary $\lat+\partial\V$. For $\vec{y} \in \lat$, and $\vec{v} \in \VR$, we
will call $s(\vec{y}-\vec{t}+F_{\vec{v}})$ a facet of $s(\lat-\vec{t}+\partial\V)$.

\begin{definition}[Tiling boundary normals]
We define the function $\eta: (\lat-\vec{t}+\partial\V) \rightarrow S^{n-1}$ as follows. For
$\vec{x} \in (\lat-\vec{t}+\partial\V)$, choose the lexicographically minimal $\vec{v} \in \VR$ such
that $\exists \vec{y} \in \lat$ satisfying $\vec{x} \in (\vec{y}-\vec{t}+F_{\vec{v}})$. Finally,
define $\eta(\vec{x}) = \vec{v}/\|\vec{v}\|_2$. 
\label{def:til-bnd-nor}
\end{definition}

Note that for $\vec{x} \in s(\lat-\vec{t}+\partial\V)$, $\eta(\vec{x}/s)$ is a unit normal to a
facet of $s(\lat-\vec{t}+\partial\V)$ containing $\vec{x}$. Furthermore, the subset of points in
$s(\lat-\vec{t}+F_{\vec{v}})$ having more than one containing facet has $n-1$ dimensional measure
$0$, and hence can be ignored from the perspective of integration over $s(\lat-\vec{t}+\partial\V)$.

The following lemma gives the expression for the infinitessimal crossing probabilities.

\begin{lemma} For $\alpha \in (0,1]$, and $X \sim \laplace(\V,\theta)$, we have that
\[
\E[|(\lat + \partial\V) \cap [X+\vec{t}, \alpha X + \vec{t}]|] = 
\int_1^{1/\alpha} \int_{s(\lat-\vec{t}+\partial\V)} 
     |\pr{\eta(\vec{x}/s)}{\vec{x}/s}|f_\V^\theta(\vec{x})  \d\vol_{n-1}(\vec{x}) {\rm ds} \text{ .}   
\]
\label{lem:cross-expr}
\end{lemma}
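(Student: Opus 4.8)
The plan is to write $\E[N]$, where $N\eqdef|(\lat+\partial\V)\cap[X+\vec{t},\alpha X+\vec{t}]|$, as an $(n-1)$-dimensional surface integral by applying the change-of-variables (area) formula to the radial ``cone map'' $\Phi(s,\vec{x})=s\vec{x}$. First, using the parametrization behind~\eqref{eq:phasec-inf-cross}, I would write the segment as $[\alpha X+\vec{t},X+\vec{t}]=\{\vec{t}+X/s:s\in[1,1/\alpha]\}$, so that a point of this segment lies on $\lat+\partial\V$ exactly when $X/s\in\lat-\vec{t}+\partial\V$. Conditioning on $X=\vec{z}$, the count $|(\lat+\partial\V)\cap[\alpha\vec{z}+\vec{t},\vec{z}+\vec{t}]|$ is the number of $s\in[1,1/\alpha]$ with $\vec{z}/s\in\lat-\vec{t}+\partial\V$, which (as $s\mapsto\vec{t}+\vec{z}/s$ is injective) equals $\#\Phi^{-1}(\vec{z})$ for $\Phi\colon[1,1/\alpha]\times(\lat-\vec{t}+\partial\V)\to\R^n$. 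I would then note that outside a Lebesgue-null set of $\vec{z}$ this fiber is finite and every intersection is transversal with the relative interior of a facet: the set of $\vec{z}$ whose radial segment meets the codimension-$\ge 2$ skeleton $\bigcup_{\vec{v}\neq\vec{v}'}\bigl(\lat-\vec{t}+(F_{\vec{v}}\cap F_{\vec{v}'})\bigr)$ or is tangent to a facet is a countable union of sets of dimension at most $n-1$, hence null, and since $f_{\V}^{\theta}$ is bounded it may be discarded.

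The core step is the area formula $\int_{\R^n} g(\vec{z})\,\#\Phi^{-1}(\vec{z})\,\d\vec{z}=\int_{[1,1/\alpha]\times(\lat-\vec{t}+\partial\V)} g(s\vec{x})\,J\Phi(s,\vec{x})\,\d\vol_{n-1}(\vec{x})\,\d s$ applied with $g=f_{\V}^{\theta}$, combined with the identity $\E[N]=\int_{\R^n} f_{\V}^{\theta}(\vec{z})\,\#\Phi^{-1}(\vec{z})\,\d\vec{z}$. This requires the Jacobian of $\Phi$ at a point $(s,\vec{x})$ with $\vec{x}$ in the relative interior of a facet $\vec{y}-\vec{t}+F_{\vec{v}}$, whose tangent space is $\eta(\vec{x})^{\perp}$ for $\eta(\vec{x})=\vec{v}/\|\vec{v}\|_2$ as in Definition~\ref{def:til-bnd-nor}: the differential sends $\partial_s\mapsto\vec{x}$ and each tangent direction $\vec{w}\perp\eta(\vec{x})$ to $s\vec{w}$, so its image is spanned by $\vec{x}$ together with $s\,\eta(\vec{x})^{\perp}$, and since only the $\eta(\vec{x})$-component of $\vec{x}$ contributes we get $J\Phi(s,\vec{x})=s^{n-1}\,|\pr{\eta(\vec{x})}{\vec{x}}|$. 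Substituting this in yields $\E[N]=\int_1^{1/\alpha}\!\int_{\lat-\vec{t}+\partial\V} f_{\V}^{\theta}(s\vec{x})\,s^{n-1}\,|\pr{\eta(\vec{x})}{\vec{x}}|\,\d\vol_{n-1}(\vec{x})\,\d s$.

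Finally I would substitute $\vec{u}=s\vec{x}$ in the inner integral: $\vec{u}$ then ranges over $s(\lat-\vec{t}+\partial\V)$, $\d\vol_{n-1}(\vec{u})=s^{n-1}\,\d\vol_{n-1}(\vec{x})$, and $\pr{\eta(\vec{x})}{\vec{x}}=\pr{\eta(\vec{u}/s)}{\vec{u}/s}$, so the two powers of $s^{n-1}$ cancel and the claimed identity follows. In the ``infinitesimal chunk'' language of the phase B argument this is the same computation: by~\eqref{eq:phasec-inf-cross} (with $Z$ replaced by $X$) the probability of a crossing between $s$ and $s+\eps$ is the $f_{\V}^{\theta}$-mass of the thin shell $\bigcup_{\gamma\in[s,s+\eps]}\gamma(\lat-\vec{t}+\partial\V)$, which to first order in $\eps$ equals $\eps\int_{s(\lat-\vec{t}+\partial\V)}|\pr{\eta(\vec{x}/s)}{\vec{x}/s}|\,f_{\V}^{\theta}(\vec{x})\,\d\vol_{n-1}(\vec{x})$, since the point $\gamma\vec{x}$ with $\vec{x}$ fixed on $\lat-\vec{t}+\partial\V$ moves with velocity $\vec{x}$, whose component along the facet normal has magnitude $|\pr{\eta(\vec{x})}{\vec{x}}|$; summing over a dyadic partition of $[1,1/\alpha]$ and invoking monotone convergence produces the outer integral over $s$. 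I expect the main obstacle to be the bookkeeping needed to apply the area formula to this only locally finite, piecewise-flat surface -- confirming finiteness and transversality of the generic fiber, checking that facets with $\pr{\eta(\vec{x})}{\vec{x}}=0$ contribute nothing to either side, and tracking the normalization of $\vol_{n-1}$ under the radial rescaling -- rather than any single computation.
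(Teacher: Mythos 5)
Your proposal is correct, and its core computation -- the radial change of variables over the piecewise-flat tiling boundary producing the factor $|\pr{\eta(\vec{x}/s)}{\vec{x}/s}|$ -- is the same as in the paper, but you package it differently. The paper first decomposes the expected count as $\tfrac{1}{2}\sum_{\vec{y}\in\lat,\vec{v}\in\VR}$ of per-facet counts, observes that with probability $1$ each facet is crossed at most once so each term is a crossing probability, writes that probability as the $f_\V^\theta$-mass of the cone $\bigcup_{s\in[1,1/\alpha]} s(\vec{y}-\vec{t}+F_{\vec{v}})$, and evaluates it by elementary Fubini along the direction $\vec{v}$ (the slices are parallel to the facet, and the substitution $h=rs$ produces the constant height $r=|\pr{\eta(\vec{x}/s)}{\vec{x}/s}|$), the factor $\tfrac12$ being absorbed because each facet of $\lat+\partial\V$ is listed twice. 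You instead encode the whole count as the multiplicity function $\#\Phi^{-1}$ of the cone map $\Phi(s,\vec{x})=s\vec{x}$ on $[1,1/\alpha]\times(\lat-\vec{t}+\partial\V)$, invoke the area formula with Jacobian $s^{n-1}|\pr{\eta(\vec{x})}{\vec{x}}|$ (your Jacobian computation is right), and rescale $\vec{u}=s\vec{x}$, which cancels the $s^{n-1}$ and lands exactly on the stated expression -- in effect reproducing the paper's per-facet identity \eqref{eq:ce-3} summed over facets. What your route buys is that the double-counting factor and the ``each facet is crossed at most once a.s.'' step disappear, being subsumed by the multiplicity function; the price is applying the area formula to a countable, only piecewise-smooth surface, with the attendant null-set bookkeeping (codimension-$2$ skeleton, facets with $\pr{\eta(\vec{x})}{\vec{x}}=0$, finiteness/transversality of generic fibers), all of which you identify and which indeed contribute zero to both sides. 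The paper's facet-by-facet argument is more elementary, needing no machinery beyond a one-dimensional substitution on each flat cone, but the two proofs compute the same quantity in essentially the same way.
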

\begin{proof}
Firstly, shifting by $-\vec{t}$ on both sides, we get that
\[
\E[|(\lat + \partial\V) \cap [X+\vec{t}, \alpha X + \vec{t}]|] =  
\E[|(\lat -\vec{t} + \partial\V) \cap [X, \alpha X]|]\text{.}
\]

From here, we first decompose the expected number of intersections by summing over all facets.
This yields
\begin{equation}
\label{eq:ce-1}
\E[|(\lat -\vec{t} + \partial\V) \cap [X, \alpha X]|] = 
\frac{1}{2} \sum_{\vec{y} \in \lat, \vec{v} \in \VR} \E[|(\vec{y} - \vec{t} + F_{\vec{v}}) \cap [X, \alpha X]|] \text{.}
\end{equation}
The factor $1/2$ above accounts for the fact that we count each facet twice, i.e.~$\vec{y}-\vec{t}+F_\vec{v}$ and
$(\vec{y}+\vec{v})-\vec{t}+F_{-\vec{v}}$. Secondly, note that the intersections we count more than twice in the above
decomposition correspond to a countable number of lines passing through at most $n-2$ dimensional faces, and hence have
$n$ dimensional Lebesgue measure $0$. The equality in Equation~\eqref{eq:ce-1} thus follows.  

If we restrict to one facet $\vec{y}-\vec{t}+F_\vec{v}$, for some $\vec{y} \in \lat$, $\vec{v} \in \VR$, we note that
the line segment $[X,\alpha X]$ crosses the facet $\vec{y}-\vec{t}+F_\vec{v}$ at most once with probability $1$.
Hence, we get that
\begin{align}
\label{eq:ce-2}
\begin{split}
\E[|(\vec{y} - \vec{t} + F_{\vec{v}}) \cap [X, \alpha X]|] 
 &= \Pr[(\vec{y} - \vec{t} + F_{\vec{v}}) \cap [X, \alpha X] \neq \emptyset] \\
 &= \Pr[X \in \cup_{s \in [1,\frac{1}{\alpha}]} ~s(\vec{y} - \vec{t} + F_{\vec{v}})] \\
 &= \int_{\cup_{s \in [1,\frac{1}{\alpha}]} s(\vec{y} - \vec{t} + F_{\vec{v}})} 
    f_\V^\theta(\vec{x}) \d\vec{x} \text{ .}
\end{split}
\end{align}

Let $r = \pr{\vec{v}/\|\vec{v}\|_2}{\vec{y} - \vec{t} + F_{\vec{v}}}$, noting the inner product with $\vec{v}$ (and hence
$\hat{\vec{v}}$) is constant over $F_{\vec{v}}$. By possibly switching $\vec{v}$ to $-\vec{v}$ and $\vec{y}$ to
$\vec{y}+\vec{v}$ (which maintains the facet), we may assume that $r \geq 0$. Notice that by construction, for any
$\vec{x}$ in the (relative) interior of $s(\vec{y} - \vec{t} + F_{\vec{v}})$, we get that $r =
|\pr{\eta(\vec{x}/s)}{\vec{x}/s}|$, since then there is a unique facet of $s(\lat-\vec{t}+\partial\V)$ containing
$\vec{x}$. Integrating first in the in the direction $\vec{v}$, we get that
\begin{align}
\label{eq:ce-3}
\begin{split}
& \int_{\cup_{s \in [1,\frac{1}{\alpha}]} s(\vec{y} - \vec{t} + F_{\vec{v}})} 
    f_\V^\theta(\vec{x}) \d\vec{x} = 
 \int_r^{r/\alpha} \int_{(s/r)(\vec{y} - \vec{t} + F_{\vec{v}})} f_\V^\theta(\vec{x})\d\vol_{n-1}(\vec{x}){\rm ds} = \\
& \int_1^{1/\alpha} \int_{s(\vec{y} - \vec{t} + F_{\vec{v}})} rf_\V^\theta(\vec{x})\d\vol_{n-1}(\vec{x}){\rm ds} = 
 \int_1^{1/\alpha} \int_{s(\vec{y} - \vec{t} + F_{\vec{v}})} 
                    |\pr{\eta(\vec{x}/s)}{\vec{x}/s}|f_\V^\theta(\vec{x})\d\vol_{n-1}(\vec{x}){\rm ds} \text{ .}
\end{split}
\end{align}
Note that we use the $n-1$ dimensional Lebesgue measure to integrate over $s(\vec{y} - \vec{t} + F_{\vec{v}})$ since it
is embedded in $\R^n$. If $r = 0$, note that the set $\cup_{s \in [1,\frac{1}{\alpha}]} s(\vec{y} - \vec{t} +
F_{\vec{v}})$ is $n-1$ dimensional and hence has measure (and probability) $0$. This is still satisfied by the last
expression in~\eqref{eq:ce-3}, and hence the identity is still valid in this case. 

Putting everything together, combining Equation~\eqref{eq:ce-1},\eqref{eq:ce-3}, we get that
\begin{align*}
\E[|(\lat -\vec{t} + \partial\V) \cap [X, \alpha X]|] &= 
\frac{1}{2} \sum_{\vec{y} \in \lat, \vec{v} \in \VR} \int_1^{1/\alpha} \int_{s(\vec{y} - \vec{t} + F_{\vec{v}})} 
                    |\pr{\eta(\vec{x}/s)}{\vec{x}/s}|f_\V^\theta(\vec{x})\d\vol_{n-1}(\vec{x}){\rm ds} = \\
 &= \int_1^{1/\alpha} \int_{s(\lat-\vec{t}+\partial\V)} 
     |\pr{\eta(\vec{x}/s)}{\vec{x}/s}|f_\V^\theta(\vec{x})  \d\vol_{n-1}(\vec{x}) {\rm ds} \text{ ,}   
\end{align*}
as needed.
\end{proof}

The lower bound given in the following Lemma will be needed in the proof of
Lemma~\ref{lem:surface-area-bnd}.

\begin{lemma} 
For $a,b,c,d \in \R$, $c \leq d$, we have that
\[
\int_c^d |a+bh|{\rm dh} \geq (\sqrt{2}-1)(d-c) \max \set{|(a+bc)|,|a+bd|} \text{.}
\]
\vspace{-2em}
\label{lem:simple-int-bnd}
\end{lemma}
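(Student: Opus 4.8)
The plan is to reduce everything to the one‑variable behavior of the affine function $h\mapsto a+bh$ on $[c,d]$. First I would normalize: under the change of variables $h\mapsto c+d-h$ the integral $\int_c^d|a+bh|\,dh$ is unchanged while the two endpoint values $|a+bc|$ and $|a+bd|$ swap, so without loss of generality I may assume $M:=\max\{|a+bc|,|a+bd|\}=|a+bd|$.

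Next I would split into two cases according to whether $a+bh$ vanishes in the open interval $(c,d)$. If it does not, then $a+bh$ has constant sign on $[c,d]$, so $|a+bh|$ is affine there and $\int_c^d|a+bh|\,dh=\tfrac{d-c}{2}\bigl(|a+bc|+|a+bd|\bigr)\ge\tfrac{d-c}{2}\,|a+bd|=\tfrac{d-c}{2}M$. Since $\tfrac12>\sqrt2-1$, this already beats the claimed bound; note this case also subsumes $b=0$ and the trivial case $M=0$.

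In the remaining case $b\neq 0$ and there is a unique root $h_0\in(c,d)$, so $|a+bh|=|b|\,|h-h_0|$. Writing $u:=h_0-c>0$ and $v:=d-h_0>0$, the normalization $|a+bd|\ge|a+bc|$ forces $v\ge u$, and a direct computation gives $\int_c^d|a+bh|\,dh=\tfrac{|b|}{2}(u^2+v^2)$ and $(d-c)M=|b|\,v(u+v)$. Thus the claim reduces to the elementary inequality $\tfrac12(u^2+v^2)\ge(\sqrt2-1)\,v(u+v)$ for $0<u\le v$. Dividing by $v^2$ and setting $t=u/v\in(0,1]$, this is $\phi(t):=\tfrac12 t^2-(\sqrt2-1)t+\bigl(\tfrac32-\sqrt2\bigr)\ge 0$; since $\phi$ is an upward parabola with vertex at $t^\ast=\sqrt2-1\in(0,1)$, it suffices to check $\phi(\sqrt2-1)=0$, which follows from $(\sqrt2-1)^2=3-2\sqrt2$. (Incidentally this shows the constant $\sqrt2-1$ is optimal.)

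I expect the only mild obstacle to be the case analysis — in particular being careful that the ``no interior zero'' branch genuinely covers $b=0$ and sign‑constant situations — together with the short arithmetic verifying $\phi(\sqrt2-1)=0$; the rest is bookkeeping.
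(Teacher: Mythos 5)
Your proof is correct and takes essentially the same route as the paper: both normalize so that the larger endpoint value is fixed and then minimize the integral over the position of the root of the affine function, finding the tight constant $\sqrt{2}-1$ at the same extremal configuration. The differences are cosmetic --- you dispose of the no-interior-root case via the trapezoid value $\tfrac{1}{2}(d-c)\bigl(|a+bc|+|a+bd|\bigr)$ and finish with a parabola-vertex computation in $t=u/v$, while the paper rescales to $[0,1]$, sets $a=1$, and minimizes the explicit expression $\tfrac12\bigl(\alpha+(1-\alpha)^2/\alpha\bigr)$ in the root location $\alpha$ by calculus.
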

\begin{proof}
Firstly, we note that
\[
\int_c^d |a + bh|{\rm dh} = (d-c) \int_0^1 |a + b(c+(d-c)h)|{\rm dh}
                          = (d-c) \int_0^1 |(a+bc) + b(d-c)h|{\rm dh} \text{ ,}
\]
hence it suffices to prove the inequality when $c=0,d=1$. After this reduction, by possibly applying
the change of variables $h \leftarrow 1-h$, we may assume that $|a| \geq |a+b|$. Next, by changing
the signs of $a,b$, we may assume that $a \geq 0$. Hence, it remains to prove the inequality
\begin{equation}
\label{eq:sit-1}
\int_0^1 |a + bh|{\rm dh} \geq a (\sqrt{2}-1)  
\end{equation}
under the assumption that $a \geq |a+b|$, or equivalently $a \geq 0$ and $-2a \leq b \leq 0$.
Notice that if $a = 0$ or $b = 0$, the above inequality is trivially true. If $a,b \neq 0$, then dividing
inequality~\ref{eq:sit-1} by $a$, we reduce to the case where $a = 1$, $-2 \leq b < 0$. Letting
$\alpha = -1/b$, we have that $\alpha \in [1/2,\infty)$. From here, we get that
\[
\int_0^1 |1 + hb|{\rm dh} = \int_0^1 |1 - h/\alpha|{\rm dh} = (1/2)(\alpha + (1-\alpha)^2/\alpha)
\]
The derivative of the above expression is $1-\frac{1}{2\alpha^2}$. The expression is thus minimized for
$\alpha = \frac{1}{\sqrt{2}} > 1/2$, and the result follows by plugging in this value.  
\end{proof}

We now prove the bound on the surface integral in terms of the expectation $\E[\|X\|_\V]$.

\begin{lemma} For $s \geq 1$ and $X \sim \laplace(\V,\theta)$, we have that
\begin{align*}
\int_{s(\lat-\vec{t}+\partial\V)} |\pr{\eta(\vec{x}/s)}{\vec{x}/s}|f_\V^\theta(\vec{x})\d\vol_{n-1}(\vec{x})
\leq c \max \set{\frac{n}{s^2}, \frac{1}{\theta s}} ~ \E[\|X\|_\V] 
= c \max \set{\frac{n^2\theta}{s^2},\frac{n}{s}} \text{,}
\end{align*}
for $c = \frac{e^2}{2(\sqrt{2}-1)} \leq 9$.
\label{lem:surface-area-bnd}
\end{lemma}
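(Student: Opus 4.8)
The plan is to bound the surface integral over each tile $s(\vec{y}-\vec{t}+\partial\V)$ separately, and then sum over $\vec{y} \in \lat$ to recover a full-space integral that we can identify with (a scaled piece of) the representation of $\E[\|X\|_\V]$. Concretely, fix a lattice point $\vec{y} \in \lat$ and a Voronoi relevant vector $\vec{v} \in \VR$, and consider the facet $s(\vec{y}-\vec{t}+F_{\vec{v}})$. Over a cone $s(\vec{y}-\vec{t}+\conv(\vec{0},F_{\vec{v}}))$, the density $f_\V^\theta$ varies by at most a multiplicative factor $e^{\pm s/\theta}$ by Equation~\eqref{lem:laplace-smooth} (the $\|\cdot\|_\V$-diameter of the cone is $s$). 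So I would compare the $(n-1)$-dimensional surface integral over the facet to the $n$-dimensional volume integral of $|\pr{\eta(\vec{x}/s)}{\vec{x}/s}| f_\V^\theta(\vec{x})$ over the cone behind it. Parametrizing the cone by the family of scaled facets $(h/s)\cdot s(\vec{y}-\vec{t}+F_{\vec{v}})$ for $h$ ranging over an appropriate interval (this is the analogue of "integrating in the direction $\vec{v}$" as in Lemma~\ref{lem:cross-expr}), the inner product $|\pr{\eta}{\vec{x}}|$ is linear in $h$, which is exactly the setting of Lemma~\ref{lem:simple-int-bnd}: the ratio of the value at an endpoint (the facet) to the integral over the whole cone is at most $\frac{1}{(\sqrt{2}-1)}$ times $1/(\text{length of the }h\text{-interval})$.

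Next I would quantify that $h$-interval. Since $s \geq 1$ and the tile lives at $\|\cdot\|_\V$-distance roughly between $s\|\vec{y}-\vec{t}\|$-ish values from the origin, the relevant interval has length at least $\min\{s, \text{something}\}$; more carefully, the cone $s(\vec{y}-\vec{t}+\conv(\vec{0},F_\vec{v}))$ when sliced radially from the origin has extent controlled by $\max\{n/s, 1/\theta\}$ after accounting for the $e^{\pm s/\theta}$ density distortion and the factor $n$ that comes from the $(n-1)$-versus-$n$ dimensional bookkeeping (a cone in $\R^n$ has volume $1/n$ times base times height). This is where the $\max\set{n/s^2, 1/(\theta s)}$ prefactor is born: the $n/s^2$ term dominates when $\theta$ is large (density nearly flat, the height-of-cone/$n$ effect wins) and the $1/(\theta s)$ term when $\theta$ is small (density decays fast enough to kill the tail). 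Summing the resulting bound over all $\vec{y}\in\lat$ and all $\vec{v}\in\VR$, the cones tile $\R^n$ (up to measure zero overlaps), and the integrand $|\pr{\eta(\vec{x}/s)}{\vec{x}/s}|f_\V^\theta(\vec{x})$ integrated over all of $s(\lat-\vec{t}+\cup_\vec{v}\conv(\vec{0},F_\vec{v})) = \R^n$ is, after the change of variables $\vec{x}\mapsto \vec{x}/s$ and using that $|\pr{\eta(\vec{u})}{\vec{u}}| \leq \|\vec{u}\|_\V \cdot (\text{const})$ on the boundary, bounded by a constant times $\E[\|X\|_\V]$. Finally I would invoke $\E[\|X\|_\V] = n\theta$ for $X\sim\laplace(\V,\theta)$ (immediate from Lemma~\ref{lem:lap-unif-gamma}, since $\|rU\|_\V = r\|U\|_\V$ and $\E[r]=n\theta+\theta$... actually $\E[\|X\|_\V]$ integrates to $n\theta$ directly from the density, as stated in the lemma), substitute, and collect the constant as $e^2/(2(\sqrt{2}-1))$, with the $e^2$ absorbing the worst-case $e^{\pm s/\theta}$ distortion factors across a cone (bounded since we will ultimately take $\theta = \theta_n = \Theta(1/n)$ and $s \geq 1$, but the lemma as stated keeps $\theta$ general so the $e^2$ must come from a uniform bound on $e^{2s/\theta}$ over the relevant regime — more likely the $e^2$ arises because the cone's $\|\cdot\|_\V$-extent relative to its "base radius" is a bounded ratio making $e^{s/\theta}\le e^{O(1)}$ precisely when $s/\theta$ is controlled by the $\max$; I'd check this carefully).

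The main obstacle I expect is making the comparison between the facet surface integral and the cone volume integral fully rigorous while tracking the density distortion. The subtlety is that $\eta(\vec{x}/s)$ is the normal to the facet, not a radial direction, so $|\pr{\eta(\vec{x}/s)}{\vec{x}/s}|$ is the "radial height" of the facet and its behavior as one moves radially inward into the cone is what Lemma~\ref{lem:simple-int-bnd} is tailor-made to exploit — but one has to set up the radial integration so that the integrand genuinely has the affine form $|a+bh|$, and simultaneously bound $f_\V^\theta$ along that radial ray by $f_\V^\theta$ at the facet times $e^{\pm s/\theta}$ (using that the ray stays within $\|\cdot\|_\V$-distance $\le s$ of the facet, which needs the convexity of the cone and $s\ge 1$). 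Getting the two cases of the $\max$ to emerge cleanly — rather than an additive bound — requires choosing, for each cone, whether to bound the $h$-interval length from below by the full cone height ($\gtrsim s/n$ after the $1/n$ cone factor, giving $n/s^2$) or by the scale on which the density decays ($\gtrsim \theta$, giving $1/(\theta s)$), and this dichotomy is the real content. Everything after the per-cone estimate — summing over the lattice, the change of variables, recognizing $\E[\|X\|_\V]=n\theta$ — should be routine.
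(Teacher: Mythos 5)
Your proposal follows essentially the same route as the paper's proof: decompose $\E[\|X\|_\V]=n\theta$ over the cones $s(\vec{y}-\vec{t})+\conv(\vec{0},sF_{\vec{v}})$, integrate in the direction $\vec{v}$ so that the relevant inner product is affine in the radial parameter, apply Lemma~\ref{lem:simple-int-bnd}, and restrict the radial parameter to an interval of length $\min\set{1/n,\theta/s}$ so that the density distortion and the Jacobian factor $(1-h)^{n-1}$ each cost only a factor $e$ --- which is exactly where the $e^2$ and the $\max\set{n/s^2,1/(\theta s)}$ arise, confirming the mechanism you guessed at the end rather than a global $e^{\pm s/\theta}$ bound over the whole cone. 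With that clarification, your sketch matches the paper's argument.
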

\begin{proof}
We first prove the equality on the right hand side. We remember that $X$ is identically distributed
to $rZ$ where $r \sim \Gamma(n+1,\theta)$ and $Z \sim \unif(\V)$. From here, we have that
\begin{align*}
\E[\|X\|_\V] &= \E[\|rZ\|_\V] = \E[r] \E[\|Z\|_\V] \\
             &= (n+1)\theta \int_0^1 \Pr[\|Z\|_\V \geq s] {\rm ds} \\
             &= (n+1)\theta \int_0^1 (1-s^n) {\rm ds} = (n+1)\theta\left(\frac{n}{n+1}\right) = n \theta \text{ ,}
\end{align*}
as needed.

We now prove the first inequality. To prove the bound, we write the integral expressing
$\E[\|X\|_\V]$ over the cells of $s(\lat-\vec{t}+\partial\V)$, and compare the integral over each
cell to the corresponding boundary integral. To begin
\begin{align}
\label{eq:sab-start}
\begin{split}
\E[\|X\|_\V] &= \int_{\R^n} \|\vec{x}\|_\V f_\V^\theta(\vec{x}) \d\vec{x} 
		= \sum_{\vec{y} \in \lat} 
			\int_{s(\vec{y}-\vec{t}) + s\V} \|\vec{x}\|_\V f_\V^\theta(\vec{x}) \d\vec{x} \\
	 &= \sum_{\vec{y} \in \lat, \vec{v} \in \VR} 
			\int_{s(\vec{y}-\vec{t}) + \conv(\vec{0}, sF_{\vec{v}})} \|\vec{x}\|_\V f_\V^\theta(\vec{x}) \d\vec{x} \text{ .}
\end{split}
\end{align}
Fix $\vec{y} \in \lat$ and $\vec{v} \in \V$ in the above sum.  Noting that
$\pr{\vec{v}/\|\vec{v}\|_2}{sF_{\vec{v}}} = s\|\vec{v}\|_2/2$ by construction, and integrating first in
the direction $\vec{v}$, we get that
\begin{align}
\label{eq:sab-1}
\begin{split}
&\int_{s(\vec{y}-\vec{t})+\conv(\vec{0},sF_{\vec{v}})} \|\vec{x}\|_\V f_\V^\theta(\vec{x}) \d\vec{x} = \\
&\int_0^{s\|\vec{v}\|_2/2} \int_{\frac{2h}{s\|\vec{v}\|_2}(sF_{\vec{v}})} 
      \|s(\vec{y}-\vec{t})+\vec{x}\|_\V f_\V^\theta(s(\vec{y}-\vec{t})+\vec{x}) \d\vol_{n-1}(\vec{x}){\rm dh} \text{ .}
\end{split}
\end{align}
In the above, we use the $n-1$ dimensional Lebesgue measure to integrate over
$\frac{2h}{s\|\vec{v}\|_2}F_{\vec{v}}$ since it is embedded in $\R^n$ (we also do this for ease of
notation). Setting $\beta = \frac{2h}{s\|\vec{v}\|_2}$, note that $\beta \in [0,1]$. In
equation~\eqref{eq:sab-1}, $\beta$ represents the convex combination between $\vec{0}$ and
$sF_{\vec{v}}$, that is $\conv(\vec{0},F_{\vec{v}}) = \bigcup_{\beta \in [0,1]} \beta sF_{\vec{v}}$.
Performing a change of variables, Equation~\eqref{eq:sab-1} simplifies to
\begin{align}
\label{eq:sab-2}
\begin{split}
&\int_0^1 \int_{h (sF_{\vec{v}})} (s\|\vec{v}\|_2/2)
		\|s(\vec{y}-\vec{t})+\vec{x}\|_\V f_\V^\theta(s(\vec{y}-\vec{t})+\vec{x}) \d\vol_{n-1}(\vec{x}){\rm dh} = \\ 
&\int_{sF_{\vec{v}}} \int_0^1 (s\|\vec{v}\|_2/2) 
	\|s(\vec{y}-\vec{t})+h\vec{x}\|_\V f_\V^\theta(s(\vec{y}-\vec{t})+h\vec{x}) h^{n-1}{\rm dh}~\d\vol_{n-1}(\vec{x}) = \\
&\int_{sF_{\vec{v}}} \int_0^1 (s\|\vec{v}\|_2/2) 
	\|s(\vec{y}-\vec{t})+(1-h)\vec{x}\|_\V f_\V^\theta(s(\vec{y}-\vec{t})+(1-h)\vec{x}) (1-h)^{n-1}
          {\rm dh}~\d\vol_{n-1}(\vec{x}) \text{ .}
\end{split}
\end{align}
From here we note that
\begin{align}
\label{eq:sab-3}
\begin{split}
(s\|\vec{v}\|_2/2) \|s(\vec{y}-\vec{t})+(1-h)\vec{x}\|_\V
&= (s\|\vec{v}\|_2/2) \max_{\vec{w} \in \VR} 
~\left|\frac{2\pr{\vec{w}}{s(\vec{y}-\vec{t})+(1-h)\vec{x}}}{\pr{\vec{w}}{\vec{w}}}\right| \\
&\geq (s\|\vec{v}\|_2/2) ~ \left|\frac{2\pr{\vec{v}}{s(\vec{y}-\vec{t})+(1-h)\vec{x}}}{\pr{\vec{v}}{\vec{v}}}\right| \\
&= s^2 \left|\pr{\vec{v}/\|\vec{v}\|_2}{(\vec{y}-\vec{t})+(1-h)\vec{x}/s}\right| \text{.}
\end{split}
\end{align}
From inequality~\eqref{eq:sab-3}, we have that the expression in equation~\eqref{eq:sab-2} is greater than or equal to
\begin{equation}
\label{eq:sab-4}
\int_{sF_{\vec{v}}} s^2 \int_0^1 \left|\pr{\vec{v}/\|\vec{v}\|_2}{(\vec{y}-\vec{t})+(1-h)\vec{x}/s}\right| 
       f_\V^\theta(s(\vec{y}-\vec{t})+(1-h)\vec{x}) (1-h)^{n-1}{\rm dh}~\d\vol_{n-1}(\vec{x}) \text{ .}
\end{equation}
To compare to the surface integral, we now lower bound the inner integral.
\begin{claim} For $\|\vec{x}\|_\V \leq s$, we have that
\begin{align*}
&\int_0^1 \left|\pr{\vec{v}/\|\vec{v}\|_2}{(\vec{y}-\vec{t})+(1-h)\vec{x}/s}\right|
       f_\V^\theta(s(\vec{y}-\vec{t})+(1-h)\vec{x}) (1-h)^{n-1}{\rm dh} \\
&\geq ~ e^{-2}(\sqrt{2}-1) \min \set{\frac{1}{n},\frac{\theta}{s}} ~
|\pr{\vec{v}/\|\vec{v}\|_2}{(\vec{y}-\vec{t})+\vec{x}/s}| f_\V^\theta(s(\vec{y}-\vec{t})+\vec{x})
\end{align*}
\label{cl:sab-inner}
\end{claim}
\begin{proof}
Note that for $0 \leq h \leq \min \set{\frac{1}{n},\frac{\theta}{s}}$, we have that
\begin{align*}
f_\V^\theta(s(\vec{y}-\vec{t})+(1-h)\vec{x})(1-h)^{n-1} 
&\geq f_\V^\theta(s(\vec{y}-\vec{t})+\vec{x})e^{-\|h\vec{x}\|_\V/\theta}(1-h)^{n-1} \\
&\geq f_\V^\theta(s(\vec{y}-\vec{t})+\vec{x})e^{-1}(1-1/n)^{n-1} \geq e^{-2}f_\V^\theta(s(\vec{y}-\vec{t})+\vec{x})
\text{ .} 
\end{align*}
Hence, using the above and Lemma~\ref{lem:simple-int-bnd}, we have that
\begin{align*}
&\int_0^1 \left|\pr{\vec{v}/\|\vec{v}\|_2}{(\vec{y}-\vec{t})+(1-h)\vec{x}/s}\right|
       f_\V^\theta(s(\vec{y}-\vec{t})+(1-h)\vec{x}) (1-h)^{n-1}{\rm dh} \\
&\geq e^{-2}f_\V^\theta(s(\vec{y}-\vec{t})+\vec{x})\int_0^{\min \set{\frac{1}{n},\frac{\theta}{s}}}
\left|\pr{\vec{v}/\|\vec{v}\|_2}{(\vec{y}-\vec{t})+(1-h)\vec{x}/s}\right|{\rm dh} \\ 
&\geq e^{-2}(\sqrt{2}-1) \min \set{\frac{1}{n},\frac{\theta}{s}} ~ 
\left|\pr{\vec{v}/\|\vec{v}\|_2}{(\vec{y}-\vec{t})+\vec{x}/s}\right|f_\V^\theta(s(\vec{y}-\vec{t})+\vec{x})\text{ ,}
\end{align*}
as needed.
\end{proof}

Given Claim~\ref{cl:sab-inner}, we get that expression~\eqref{eq:sab-4} is greater than or equal to 
\begin{align*}
& \int_{sF_{\vec{v}}} s^2 e^{-2}(\sqrt{2}-1) \min \set{\frac{1}{n},\frac{\theta}{s}} ~ 
|\pr{\vec{v}/\|\vec{v}\|_2}{(\vec{y}-\vec{t})+\vec{x}/s}| f_\V^\theta(s(\vec{y}-\vec{t})+\vec{x})\d\vol_{n-1}(\vec{x}) = \\
& e^{-2}(\sqrt{2}-1) \min \set{\frac{s^2}{n},s\theta} \int_{s(\vec{y}-\vec{t}+F_{\vec{v}})}
|\pr{\eta_s(\vec{x}/s)}{\vec{x}/s}|f_\V^\theta(\vec{x})\d\vol_{n-1}(\vec{x}) \text{ .}
\end{align*}
Putting everything together, combining the above with equation~\eqref{eq:sab-start}, we get that
\begin{align*}
\E[\|X\|_\V] &= \sum_{\vec{y} \in \lat, \vec{v} \in \VR} 
			\int_{s(\vec{y}-\vec{t}) + \conv(\vec{0}, sF_{\vec{v}})} \|\vec{x}\|_\V f_\V^\theta(\vec{x}) \d\vec{x} \\
&\geq e^{-2}(\sqrt{2}-1) \min \set{\frac{s^2}{n},s\theta} \sum_{\vec{y} \in \lat, \vec{v} \in \VR} 
\int_{s(\vec{y}-\vec{t}+F_{\vec{v}})} 
|\pr{\eta_s(\vec{x}/s)}{\vec{x}/s}|f_\V^\theta(\vec{x})\d\vol_{n-1}(\vec{x}) = \\
&= 2e^{-2}(\sqrt{2}-1) \min \set{\frac{s^2}{n},s\theta} 
\int_{s(\lat-\vec{t}+\partial\V)} 
|\pr{\eta_s(\vec{x}/s)}{\vec{x}/s}|f_\V^\theta(\vec{x})\d\vol_{n-1}(\vec{x}) \text{ ,}
\end{align*}
where the last equality follows since each facet in $s(\lat-\vec{t}+\partial\V)$ is counted twice. The lemma thus
follows.
\end{proof}

The following gives the full phase C bound for Laplace perturbations.

\begin{theorem} For $\alpha \in (0,1]$ and $X \sim \laplace(\V,\theta)$, we have that
\[
\E[|(\lat + \partial\V) \cap [X+\vec{t},\alpha X+\vec{t}]|]
\leq cn\left(\frac{n\theta}{2}\left(1-\frac{1}{s^*}\right) + \ln\left(\frac{1}{\alpha s^*}\right)\right)
\]
where $c = \frac{e^2}{2(\sqrt{2}-1)} \leq 9$ and $s^* = \max \set{1,n\theta}$.
\label{thm:laplace-cross-bnd}
\end{theorem}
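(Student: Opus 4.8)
The plan is to substitute the integral representation of Lemma~\ref{lem:cross-expr} into the surface‑area estimate of Lemma~\ref{lem:surface-area-bnd}, reducing the whole statement to a single elementary integral in the scale parameter $s$. Concretely, Lemma~\ref{lem:cross-expr} gives
\[
\E\bigl[\,|(\lat+\partial\V)\cap[X+\vec t,\alpha X+\vec t]|\,\bigr]=\int_1^{1/\alpha} g(s)\,\d s,\qquad g(s):=\int_{s(\lat-\vec t+\partial\V)}\bigl|\pr{\eta(\vec x/s)}{\vec x/s}\bigr|\,f_\V^\theta(\vec x)\,\d\vol_{n-1}(\vec x),
\]
which is a legitimate integral since $\alpha\in(0,1]$ forces $1/\alpha\ge 1$. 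For every $s\ge 1$, Lemma~\ref{lem:surface-area-bnd} bounds the weighted surface integral by $g(s)\le c\max\{n^2\theta/s^2,\ n/s\}$ with $c=e^2/(2(\sqrt2-1))$ (using $\E[\|X\|_\V]=n\theta$). So it suffices to evaluate $\int_1^{1/\alpha} c\max\{n^2\theta/s^2,\ n/s\}\,\d s$.

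Next I would carry out the elementary bookkeeping. The two branches of the maximum agree at $s=n\theta$: for $s\le n\theta$ the dominant branch is $n^2\theta/s^2$, and for $s\ge n\theta$ it is $n/s$. Writing $s^*=\max\{1,n\theta\}$, this means $g(s)\le cn^2\theta/s^2$ on $[1,s^*]$ and $g(s)\le cn/s$ on $[s^*,\infty)$, with $[1,s^*]$ empty precisely when $n\theta\le 1$ (in which case only the second branch ever occurs). Splitting $[1,1/\alpha]$ at $s^*$: the piece over $[1,\min\{s^*,1/\alpha\}]$ is handled by $\int s^{-2}\,\d s=-s^{-1}$ and produces the term built from $n\theta\,(1-1/s^*)$, while the piece over $[s^*,1/\alpha]$ (present only when $1/\alpha\ge s^*$) is handled by $\int s^{-1}\,\d s=\ln s$ and produces the term $\ln(1/(\alpha s^*))$. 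Treating the degenerate cases separately — $n\theta\le 1$ (where $s^*=1$ and the first term vanishes, giving simply $cn\ln(1/\alpha)$), and, when $n\theta>1$, whether or not $1/\alpha$ exceeds $s^*=n\theta$ — and collecting the absolute constant $c$, one reads off a bound of the stated form $cn\bigl((\text{const}\cdot n\theta)(1-1/s^*)+\ln(1/(\alpha s^*))\bigr)$.

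There is no serious obstacle here: all the analytic content — turning the expected crossing count into a weighted surface integral over the scaled tiling boundary, and controlling that surface integral against $\E[\|X\|_\V]$ via the cone decomposition, Brunn–Minkowski, and Laplace smoothing — has already been established in Lemmas~\ref{lem:cross-expr} and~\ref{lem:surface-area-bnd}. The only thing requiring care is tracking which branch of the $\max$ is active on each subinterval and arranging the boundary cases ($n\theta\le 1$, $1/\alpha<s^*$) so that the single expression in the statement covers all of them uniformly; each resulting integral is a one‑line computation. This simultaneously yields the $O(n\ln(1/\alpha))$ growth advertised before Theorem~\ref{thm:phase-c}, since the only $\alpha$‑dependence is through the logarithmic term.
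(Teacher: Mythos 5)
Your plan is exactly the paper's proof: plug the surface-integral bound of Lemma~\ref{lem:surface-area-bnd} into the crossing representation of Lemma~\ref{lem:cross-expr} and integrate $c\max\set{n^2\theta/s^2,\,n/s}$ over $s\in[1,1/\alpha]$, splitting at $s^*=\max\set{1,n\theta}$; no new idea is needed and your case analysis ($n\theta\le 1$, and $1/\alpha<s^*$, the latter handled via $\ln u\le u-1$) is the right bookkeeping. One wrinkle, which your hedge ``$(\text{const}\cdot n\theta)$'' correctly anticipates: the direct evaluation gives $c\int_1^{s^*}n^2\theta s^{-2}\,\d s = cn^2\theta\left(1-\frac{1}{s^*}\right)$, i.e.\ a first term $cn\cdot n\theta\left(1-\frac{1}{s^*}\right)$ rather than the stated $cn\cdot\frac{n\theta}{2}\left(1-\frac{1}{s^*}\right)$. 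The paper's own displayed computation contains the same slip (it writes $cn^2\theta\left(1/2-\frac{1}{s^*}\right)$ for that integral and then folds it into the $\frac{n\theta}{2}$ form), and in fact the statement with the $1/2$ cannot be recovered this way: for instance at $\alpha=1$ and $n\theta=2$ the claimed right-hand side is negative while the left-hand side is a nonnegative expectation, whereas with the factor $1$ the bound holds in all cases. So you should state and prove the theorem with $n\theta\left(1-\frac{1}{s^*}\right)$ in place of $\frac{n\theta}{2}\left(1-\frac{1}{s^*}\right)$; this changes nothing downstream in Theorem~\ref{thm:phase-c} beyond the absolute constant.
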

\begin{proof}
Using Lemmas~\ref{lem:cross-expr} and~\ref{lem:surface-area-bnd}, we have that
\begin{align*}
\E[|(\lat + \partial\V) \cap [X+\vec{t},\alpha X+\vec{t}]|] &=
\int_1^{1/\alpha} \int_{s(\lat-\vec{t}+\partial\V)} 
     |\pr{\eta(\vec{x}/s)}{\vec{x}/s}|f_\V^\theta(\vec{x})  \d\vol_{n-1}(\vec{x}) {\rm ds} \\   
&\leq c \int_1^{1/\alpha} \max \set{\frac{n^2\theta}{s^2}, \frac{n}{s}} {\rm ds} 
= c \int_1^{s^*} \frac{n^2\theta}{s^2} {\rm ds} + c \int_{s^*}^{1/\alpha} \frac{n}{s} {\rm ds} \\
&= cn^2\theta\left(1/2-\frac{1}{s^*}\right) + cn \ln\left(\frac{1}{\alpha s^*}\right) \\ 
&= cn\left(\frac{n\theta}{2}\left(1-\frac{1}{s^*}\right) + \ln\left(\frac{1}{\alpha s^*}\right)\right) \text{ ,} 
\end{align*}
as needed.
\end{proof}

\subsubsection{Proof of Theorem~\ref{thm:phase-c} (Phase C crossing bound)}
\label{sec:phase-c-final-bnd}

\begin{proof}
We recall that $\theta_n = \frac{1}{(n+1)-\sqrt{2(n+1)}}$ and $\gamma_n = \left(1 +
\frac{2\sqrt{2}}{\sqrt{n+1}-\sqrt{2}}\right)^{-1}$. Note that $n\theta_n > 1$.
 
By Lemma~\ref{lem:unif-to-lap} and Theorem~\ref{thm:laplace-cross-bnd}, for $X \sim \laplace(\V,\theta_n)$, 
we have that
\begin{align*}
\E[|(\lat + \partial\V) \cap [Z+\vec{t},\alpha Z+\vec{t}]|] 
&\leq 2\E[|(\lat + \partial\V) \cap [X+\vec{t},\gamma_n\alpha X+\vec{t}]|] \\
&\leq 2cn\left(\frac{n\theta_n}{2}\left(1-\frac{1}{n\theta_n}\right) + 
               \ln\left(\frac{1}{\alpha\gamma_n n\theta_n}\right)\right) \\
&\leq \frac{e^2}{\sqrt{2}-1} n (2 + \ln(4/\alpha)) \text{ , for $n \geq 2$ ,}
\end{align*}
as needed.
\end{proof}

\section{Missings proofs from Section~\ref{sec:path-length}}
\label{sec:appendix-path-length}

\begin{proof}[Proof of Lemma~\ref{lem:rsl-complexity} (Randomized Straight Line Complexity)]
We recall the three phases of the randomized straight line algorithm:
\begin{enumerate}[label=(\Alph*)]
\item Move from $\vec{x}$ to $\vec{x}+Z$.
\item Follow the sequence of Voronoi cells from $\vec{x}+Z$ to $\vec{y}+Z$.
\item Follow the sequence of Voronoi cells from $\vec{y}+Z$ to $\vec{y}$.
\end{enumerate}

\paragraph{{\bf Characterizing the Path Length:}} We will show that with probability $1$, the length
of the path on $\cal{G}$ induced by the three phases is
\[
|(\lat + \partial\V) \cap [\vec{x}+Z,\vec{t}+Z]| + |(\lat + \partial\V) \cap [\vec{t}+Z,\vec{t})| \text{ .}
\]
Firstly, note that since $Z \in \V$ and $\vec{x} \in \lat$, $\vec{x}+Z$ and $\vec{x}$ lie in the
same Voronoi cell $\vec{x}+Z$, and hence phase A corresponds to the trivial path $\vec{x}$.  Hence,
we need only worry about the number of edges induced by phases B and C.

The following claim will establish the structure of a generic intersection pattern with the tiling
boundary, which will be necessary for establishing the basic properties of the path. 

\begin{claim} With probability $1$, the path $[\vec{x}+Z,\vec{t}+Z] \cup [\vec{t}+Z,\vec{t})$ only
intersects $\lat+\partial\V$ in the relative interior of its facets. Furthermore, with probability
$1$, the intersection consists of isolated points, and $\vec{x}+Z,\vec{t}+Z \notin \lat+\partial\V$.
\label{cl:gen-path}
\end{claim}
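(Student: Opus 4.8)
The plan is to deduce all three assertions from one principle: the set of ``bad'' perturbations $Z$ is Lebesgue-null in $\V$, and since $Z \sim \unif(\V)$ has a bounded density, any null event has probability $0$. First I would set up notation for the low-dimensional skeleton of the tiling. Recall that $\lat+\partial\V = \bigcup_{\vec y \in \lat,\, \vec v \in \VR}(\vec y + F_{\vec v})$ and that distinct facets meet only in affine subspaces of dimension at most $n-2$. Let $\mathcal{S}$ denote the union of all lattice translates of all faces of $\V$ of dimension at most $n-2$; this is a \emph{countable} union of convex sets, each of dimension at most $n-2$, it is contained in $\lat+\partial\V$, and a point of $\lat+\partial\V$ fails to lie in the relative interior of some facet $\vec y+F_{\vec v}$ precisely when it lies in $\mathcal S$. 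Writing $P_Z := [\vec x+Z,\vec t+Z] \cup [\vec t+Z,\vec t)$, the first assertion of the claim is therefore equivalent to: with probability $1$, $P_Z \cap \mathcal S = \emptyset$.

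Next I would dispatch the endpoints and the skeleton-avoidance. Since $\vec x \in \lat$, we have $\vec x + Z \in \lat+\partial\V$ iff $Z \in \lat+\partial\V$, which is impossible once $Z \in {\rm int}(\V)$ (an event of probability $1$, as $\partial\V$ has $n$-dimensional measure $0$) by the tiling property; and $\vec t + Z \in \lat+\partial\V$ iff $Z \in (\lat+\partial\V)-\vec t$, a translate of a set of measure $0$. For phase B, $[\vec x+Z,\vec t+Z] = \{\vec u + Z : \vec u \in [\vec x,\vec t]\}$ meets $\mathcal S$ iff $Z \in \mathcal S + (-[\vec x,\vec t])$, a countable union of Minkowski sums of an at-most-$(n-2)$-dimensional convex set with a segment, each of dimension at most $n-1$, hence null. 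For phase C, $[\vec t+Z,\vec t) = \{\vec t + \beta Z : \beta \in (0,1]\}$ meets $\mathcal S$ iff $Z \in \bigcup_{\lambda \geq 1}\lambda(\mathcal S-\vec t)$; for each face $G$ appearing in $\mathcal S$, the set $\bigcup_{\lambda \geq 1}\lambda(G-\vec t)$ is the image of the at-most-$(n-1)$-dimensional set $[1,\infty)\times G$ under $(\lambda,g)\mapsto \lambda g$, hence null, and one takes a countable union. This gives the first assertion together with $\vec x+Z,\vec t+Z \notin \lat+\partial\V$ with probability $1$.

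Finally I would argue that $P_Z \cap (\lat+\partial\V)$ consists of isolated points. Since $Z \in \V \subseteq \mu(\lat)B_2^n$ by Lemma~\ref{lem:vor-props}, the path $P_Z$ is bounded, so it meets only finitely many tiles and hence finitely many facets; it therefore suffices to rule out that $P_Z$ contains a nondegenerate subsegment lying in some facet-defining hyperplane, i.e.\ a translate of $\partial H_{\vec v}$ with $\vec v \in \VR$. For a phase-B subsegment this forces $\pr{\vec t-\vec x}{\vec v}=0$ and then constrains $\pr{Z}{\vec v}$ to one of countably many values (one per lattice translate of $\partial H_{\vec v}$), a countable union of hyperplane slices of $\V$; for a phase-C subsegment it forces $\pr{Z}{\vec v}=0$, a single hyperplane slice of $\V$. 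Taking the union over the finitely many $\vec v \in \VR$ keeps the bad set null, which finishes the proof.

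The whole argument is essentially bookkeeping with null sets; the one step needing a moment of care is phase C, where $Z$ enters the path by scaling rather than by translation, so the relevant bad set is a cone over the $(n-2)$-skeleton of the tiling rather than a Minkowski sum — but such a cone still has dimension at most $n-1$, so the conclusion is unchanged. I expect that to be the only genuine subtlety.
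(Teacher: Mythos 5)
Your proposal is correct and follows essentially the same route as the paper: reduce each assertion to showing a bad set of perturbations $Z$ is a countable union of at-most-$(n-1)$-dimensional sets (Minkowski sums with the segment for phase B, cones/scalings for phase C, and hyperplane slices coming from the orthogonality conditions for non-degenerate subsegments), hence null under $\unif(\V)$. The only cosmetic difference is that you make the local-finiteness/boundedness step explicit before deducing isolated points, which the paper leaves implicit.
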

\begin{proof}
We prove the first part. Let $C_1,\dots,C_k$ denote the $n-2$ dimensional faces of $\V$. Note that
the probability of not hitting $\lat+\partial\V$ in the relative interior of it facets, can be
expressed as
\begin{align}
\label{eq:rsl-c-1}
\begin{split}
&\Pr[\left([\vec{x}+Z,\vec{t}+Z] \cup [\vec{t}+Z,\vec{t})\right) 
    \cap \left(\cup_{i \in [k]} \lat + C_i\right) \neq \emptyset] \leq \\
&\sum_{\vec{y} \in \lat, i \in [k]} \Pr[[\vec{x}+Z,\vec{t}+Z] \cap (\vec{y}+C_i) \neq \emptyset] 
                                 + \Pr[[\vec{t}+Z,\vec{t}) \cap (\vec{y}+C_i) \neq \emptyset]  \text{ .} 
\end{split}
\end{align}
Here the last inequality is valid since $\lat$ is countable.  Analyzing each term separately, we
see that 
\[
\Pr[[\vec{x}+Z,\vec{t}+Z] \cap (\vec{y}+C_i) \neq \emptyset] = \Pr[Z \in
\vec{y}+C_i-[\vec{x},\vec{t}]] = 0 \text{.}
\]
To justify the last equality, note that since $C_i$ is $n-2$ dimensional, $\vec{y}+C_i-[\vec{x},\vec{t}]$ is
at most $n-1$ dimensional (since the line segment can only add $1$ dimension). Therefore
$\vec{y}+C_i-[\vec{x},\vec{t}]$ has $n$ dimensional Lebesgue measure $0$, and in particular
probability $0$ with respect to $\unif(\V)$. Next, we have that
\[
\Pr[[\vec{t}+Z,\vec{t}) \cap (\vec{y}+C_i) \neq \emptyset] = \Pr[Z \in \cup_{s > 1}
s(\vec{y}+C_i-\vec{t})] = 0 \text{ ,}
\]
where the last equality follows since $\cup_{s > 1} s(\vec{y}+C_i-\vec{t})$ is at most $n-1$ dimensional.
Hence the probability in~\eqref{eq:rsl-c-1} is $0$, as needed.

We now prove the second part. Note that if the path $[\vec{x}+Z,\vec{t}+Z] \cup
[\vec{t}+Z,\vec{t})$ does not intersect $\lat+\partial\V$ in isolated points (i.e.~the intersection
contains a non-trivial interval), then either $[\vec{x}+Z,\vec{t}+Z]$ or $[\vec{t}+Z,\vec{t})$ must
intersect some facet of $\lat+\partial\V$ in a least $2$ distinct points. 

Let $F_\vec{v}$ be the facet of $\V$ induced by $\vec{v} \in \VR$. If $[\vec{t}+Z,\vec{t})$
intersects $\vec{y} + F_{\vec{v}}$, for some $\vec{y} \in \lat$, in two distinct points then we must
have that $\pr{\vec{v}}{Z}=0$. Since $\Pr[\cup_{\vec{v} \in \VR} \set{\pr{\vec{v}}{Z} = 0}] = 0$,
this event happens with probability $0$. Next, note that $[\vec{x}+Z,\vec{t}+Z]$ intersects
$\vec{y}+F_{\vec{v}}$ in two distinct points, if and only if $\pr{\vec{v}}{\vec{x}-\vec{t}}=0$ and
$\pr{\vec{v}}{\vec{Z}} = \pr{\vec{v}}{\vec{y}+\vec{v}/2}$. But then, the probability of this
happening for any facet can be bounded by
\[
\Pr[\cup_{\vec{y} \in \lat, \vec{v} \in \VR} \set{\pr{\vec{v}}{Z} = \pr{\vec{v}}{\vec{y}+\vec{v}/2}}] = 0
\]   
since $\lat \times \VR$ is countable. 

For the last part, note that since $\lat+\partial\V$ is the union of $n-1$ dimensional pieces,
$\Pr[\vec{x}+Z \in \lat+\partial\V] + \Pr[\vec{t}+Z \in \lat+\partial\V] = 0$.

The claim thus follows.
\end{proof}

Conditioning on the intersection structure given in claim~\ref{cl:gen-path}, we now describe the
associated path on $\cal{G}$. Let $\vec{p}_1,\dots,\vec{p}_k$ denote the points in
$([\vec{x}+Z,\vec{t}+Z] \cup [\vec{t}+Z,\vec{t})) \cap (\lat+\partial\V)$ ordered in order of
appearance on the path $[\vec{x}+Z,\vec{t}+Z] \cup [\vec{t}+Z,\vec{t})$ from left to right. Letting
$\vec{p}_{k+1} = \vec{t}$, let $\vec{y}_i \in \lat$, $1 \le i \leq k$, denote the center of the
unique Voronoi cell in $\lat+\V$ containing the interval $[\vec{p}_i,\vec{p}_{i+1}]$. Note that the
existence of $\vec{y}_i$ is guaranteed since the Voronoi cells in the tiling $\lat+\V$ are interior
disjoint, and the open segment $(\vec{p}_i,\vec{p}_{i+1})$ lies in the interior of some Voronoi cell
by convexity of the cells.  

Letting $\vec{y}_0 = \vec{x}$, we now claim that $\vec{y}_0,\vec{y}_1,\dots,\vec{y}_k$ form a valid
path in $\cal{G}$. To begin, we first establish that $\vec{y}_i \neq \vec{y}_{i+1}$, $0 \leq i \leq
k$.  Firstly, since $\vec{x}+Z \notin \lat+\partial\V$, we have that $Z$ is in the interior of $\V$,
and hence the ray starting at $Z$ in the direction of $\vec{p}_1$ exits $\vec{x}+\V$ at $\vec{p}_1$
and never returns (by convexity of $\V$). Furthermore, since $\vec{p}_1 \neq \vec{t}+Z$, the Voronoi
cell $\vec{y}_1+\V$ must contain a non-trivial interval on this ray starting at $\vec{p}_1$,
i.e.~$[\vec{p}_1,\vec{p}_2]$, and hence $\vec{y}_1 \neq \vec{x}$. Indeed, for the remaining cases,
the argument follows in the same way as long as the Voronoi cell $\vec{y}_{i+1}+\V$ contains a
non-trivial interval of the ray exiting $\vec{y}_i+\V$.  Note that this is guaranteed by the
assumption that $\vec{t}+Z \notin \partial\V$ and by the fact that none of the $\vec{p}_i$s equals
$\vec{t}$. Hence $\vec{y}_i \neq \vec{y}_{i+1}$, $0 \leq i \leq k$, as needed.

Next, note that each $p_i$, $i \in [k]$, belongs to the relative interior of some facet of
$\lat+\partial\V$. Furthermore, by construction $p_i \in \vec{y}_{i-1} + \partial\V$ and $p_i \in
\vec{y}_i + \partial\V$. Since the relative interior of facets of $\lat+\partial\V$ touch exactly
two adjacent Voronoi cells, and since $\vec{y}_{i-1} \neq \vec{y}_i$, we must have that $\vec{p}_i
\in \vec{y}_{i-1} + F_{\vec{v}}$, where $\vec{v} = \vec{y}_i-\vec{y}_{i-1} \in \VR$. Hence the path
$\vec{y}_0,\vec{y}_1,\dots,\vec{y}_k$ is valid in $\cal{G}$ as claimed.

From here, note that the length of is indeed $k = |([\vec{x}+Z,\vec{t}+Z] \cup [\vec{t}+Z,\vec{t})
\cap (\lat+\partial\V)|$. Since this holds with probability $1$, we get that the expected path
length is
\[
\E[|(\lat + \partial\V) \cap [\vec{x}+Z,\vec{t}+Z]|] + 
\E[|(\lat + \partial\V) \cap [\vec{t}+Z,\vec{t})|] \text{ .}
\]
as needed.

\paragraph{{\bf Computing the Path}:} We now explain how to compute each edge of the path
using $O(n|\VR|)$ arithmetic operations, conditioning on the conclusions of Claim~\ref{cl:gen-path}.

In constructing the path, we will in fact compute the intersection points
$\vec{p}_1,\dots,\vec{p}_k$ as above, and the lattice points $\vec{y}_1,\dots,\vec{y}_k$. As one
would expect, this computation is broken up in phase B and C, corresponding to computing the
intersection / lattice points for $[\vec{x}+Z,\vec{t}+Z]$ in phase B, followed by the intersection
/ lattice points from $[\vec{t}+Z, \vec{t})$ in Phase C.  

For each phase, we will use a generic line following procedure that given vectors $\vec{a},\vec{b}
\in \R^n$, and a starting lattice point $\vec{z} \in \lat$, such that $\vec{a} \in \vec{z} + \V$,
follows the path of Voronoi cells along the line segment $[\vec{a},\vec{b})$, and outputs a lattice
vector $\vec{w} \in \lat$ satisfying $\vec{b} \in \vec{w}+\V$. To implement phase B, we initialize
the procedure with $\vec{x}+Z,\vec{t}+Z$ and starting point $\vec{x}$. For phase C, we give it
$\vec{t}+Z,\vec{t}$ and the output of phase B as the starting point. 

We describe the line following procedure. Let $\ell(\alpha) = (1-\alpha) \vec{a} + \alpha \vec{b}$,
for $\alpha \in [0,1]$, i.e.~the parametrization of $[\vec{a}+Z,\vec{b}+Z]$ as a function of time.
The procedure will have a variable for $\alpha$, which will be set at its bounds at the beginning
and end of the procedure, starting at $0$ ending at $\geq 1$, and in intermediate steps will correspond
to an intersection point.  We will also have a variable $\vec{w} \in \lat$, corresponding to the
current Voronoi cell center.  We will maintain the invariant that $\ell(\alpha) \in \vec{w} + \V$,
and furthermore that $\ell(\alpha) \in \vec{w}+\partial\V$ for $\alpha \in (0,1)$.  

The line following algorithm is as follows:

\begin{algorithm}
\DontPrintSemicolon
\KwData{$\vec{a},\vec{b} \in \R^n$, $\vec{z} \in \lat$, $\vec{a} \in \vec{z}+\V$}
\KwResult{$\vec{w} \in \lat$ such that $\vec{b} \in \vec{w}+\V$}
$\vec{w} \leftarrow \vec{z}$, $\vec{e} \leftarrow 0$, $\alpha \leftarrow 0$\; 
$\VR' \leftarrow \set{\vec{v} \in \VR: \pr{\vec{v}}{\vec{b}-\vec{a}} > 0}$\;
\Repeat{$\alpha \geq 1$}{
  $\vec{w} \leftarrow \vec{w} + \vec{e}$\;
	$\vec{e} \leftarrow \argmin_{\vec{v} \in \VR'}
	\frac{\pr{\vec{v}}{\vec{v}/2+\vec{w}-\vec{a}}}{\pr{\vec{b}-\vec{a}}{\vec{v}}}$\; 
	$\alpha \leftarrow \frac{\pr{\vec{e}}{\vec{e}/2+\vec{w}-\vec{a}}}{\pr{\vec{b}-\vec{a}}{\vec{e}}}$\;
}
\Return{$\vec{w}$}
\end{algorithm}

Described in words, each loop iteration does the following: given the current Voronoi cell
$\vec{w}+\V$, and the entering intersection point $\ell(\alpha)$ of the line segment
$[\vec{a},\vec{b}]$ with respect to $\vec{w}+\V$, we first compute the exiting intersection point
$\ell(\alpha')$, $\alpha' > \alpha$, and the exiting facet $\vec{w}+F_{\vec{e}}$. If $\alpha' \geq
1$, we know that $\vec{b} \in [\ell(\alpha),\ell(\alpha')] \subseteq \vec{w}+\V$, and hence we may
return $\vec{w}$. Otherwise, we move to the center of the Voronoi cell sharing the facet
$\vec{w}+F_{\vec{e}}$ opposite $\vec{w}$.   

To verify the correctness, we need only show that the line $[\vec{a},\vec{b}]$ indeed exits
$\vec{w}+\V$ through the facet $\vec{w}+F_{\vec{e}}$ at the end of each iteration. Note that by our
invariant $\ell(\alpha) \in \vec{w}+\V$ at the beginning of the iteration, and hence
\begin{align}
\label{eq:rsl-c-2}
\begin{split}
\pr{\vec{v}}{\ell(\alpha)-\vec{w}} \leq \frac{1}{2}\pr{\vec{v}}{\vec{v}}, 
\quad \forall \vec{v} \in \VR \quad \Leftrightarrow \\ 
\pr{\vec{v}}{(1-\alpha)\vec{a}+\alpha \vec{b}-\vec{w}} \leq \frac{1}{2}\pr{\vec{v}}{\vec{v}}, 
\quad \forall \vec{v} \in \VR \quad \Leftrightarrow \\ 
\alpha \pr{\vec{v}}{\vec{b}-\vec{a}} \leq \pr{\vec{v}}{\vec{v}/2-\vec{a}+\vec{w}},
\quad \forall \vec{v} \in \VR
\end{split}
\end{align}

Since we move along the line segment $[\vec{a},\vec{b}]$ by increasing $\alpha$, i.e.~going from $\vec{a}$
to $\vec{b}$, note that the only constraints that can be eventually violated as we increase $\alpha$
are those for which $\pr{\vec{v}}{\vec{b}-\vec{a}} > 0$. Hence, in finding the first violated
constraint (i.e.~exiting facet), we may restrict our attention to the subset of Voronoi relevant vectors $\VR' =
\set{\vec{v} \in \VR: \pr{\vec{v}}{\vec{b}-\vec{a}} > 0}$ as done in the algorithm.

From~\eqref{eq:rsl-c-2}, we see that we do not to cross any facet $\vec{w}+F_{\vec{v}}$, $\vec{v}
\in \VR'$, as long as
\[
\alpha \leq \frac{\pr{\vec{v}}{\vec{v}/2-\vec{a}+\vec{w}}}{\pr{\vec{v}}{\vec{b}-\vec{a}}} ,
\quad \forall \vec{v} \in \VR'\text{ .}
\] 
Hence the first facet we violate must be induced by
\begin{equation}
\label{eq:rsl-c-3}
\vec{e} = \argmin_{\vec{v} \in \VR'} \frac{\pr{\vec{v}}{\vec{v}/2-\vec{a}+\vec{w}}}{\pr{\vec{v}}{\vec{b}-\vec{a}}}\text{ .}
\end{equation}

Letting $\alpha' = \frac{\pr{\vec{e}}{\vec{e}/2-\vec{a}+\vec{w}}}{\pr{\vec{e}}{\vec{b}-\vec{a}}}$,
we see that $\ell(\alpha') \in \vec{w}+F_{\vec{e}}$ is the correctly computed exiting point
(corresponding to $\ell(\alpha)$ at the end of the loop iteration), and that $\vec{w}+F_{\vec{e}}$
is the exiting facet. Since the facet $\vec{w}+F_{\vec{e}}$ is shared by $(\vec{w}+\vec{e})+\V$, we
see that $\ell(\alpha') \in (\vec{w}+\vec{e})+\partial\V$, and hence the invariant is maintained in
the next iteration. The line following algorithm is thus correct.

Notice that each iteration of the line following procedure clearly requires at most $O(n|\VR|)$
arithmetic operations. We note that the conclusions of Claim~\ref{cl:gen-path} are only needed to
ensure that each iteration of the path finding procedure can be associated with exactly one
intersection point in $([\vec{x}+Z,\vec{t}+Z] \cup [\vec{t}+Z,\vec{t})) \cap (\lat+\partial\V)$. In
particular, it assures that the minimizer in~\eqref{eq:rsl-c-3} is unique. This concludes the proof
of the Lemma.
\end{proof}

\begin{proof}[Proof of Lemma~\ref{lem:bit-bnd} (Bit length bound)]
Clearly, 
\begin{equation}
\label{eq:bb-1}
\bar{q} \leq (\prod_{ij} q^B_{ij})(\prod_i q^{\vec{t}}_i) \Rightarrow 
\log_2{\bar{q}} \leq \sum_{ij} \log_2(q^B_{ij}) + \sum_i q^{\vec{t}}_i \text{ .}
\end{equation}
Hence $\log_2 \bar{q}$ is smaller than the sum of encoding sizes of the denominators of the entries
of $B$ and $\vec{t}$. Next, it is well known that $\mu(\lat) \leq \frac{1}{2} \sqrt{\sum_{ij}
B_{ij}^2}$ (see for example~\cite{DBLP:journals/combinatorica/Babai86}). From here, we get that
\begin{align}
\label{eq:bb-2}
\begin{split}
\log_2 \mu(\lat) &\leq \log_2\left(\sqrt{\sum_{ij} B_{ij}^2}\right) \leq \log_2\left(\sqrt{\sum_{ij}
  (p^B_{ij})^2}\right) \leq \log_2\left(\sqrt{\prod_{ij} ((p^B_{ij})^2+1})\right) \\ &= \sum_{ij}
\log_2\left(\sqrt{(p^B_{ij})^2+1}\right) \leq \sum_{ij} \log_2\left(|p^B_{ij}| + 1\right)
\end{split}
\end{align}
Hence $\log_2 \mu(\lat)$ is less than the sum of encoding sizes of the numerators of the entries in
$B$. The bound $\log_2(\bar{q}\mu(\lat)) \leq \enc{B}+\enc{\vec{t}}$ now follows by
adding~\eqref{eq:bb-1},\eqref{eq:bb-2}.

We now bound $\log_2(\mu(\lat)/\lambda_1(\lat))$. Letting $\tilde{q} = \prod_{ij} q^B_{ij}$, note
that
\[
\tilde{q}\lambda_1(\lat) = \lambda_1(\tilde{q}\lat) \geq \lambda_1(\Z^n) = 1 \text{.}
\]
Therefore $1/\lambda_1(\lat) \leq \tilde{q}$. Since $\log_2 \tilde{q} \leq \sum_{ij}
\log_2(q^B_{ij})$ and $\log_2 (\mu(\lat)/\lambda_1(\lat)) \leq \log_2(\tilde{q} \mu(\lat))$,
combining with~\eqref{eq:bb-2} we get that $\log_2 (\mu(\lat)/\lambda_1(\lat)) \leq \enc{B}$ as
needed.
\end{proof}

\section{Open Problems}
    
Our work here raises a number of natural questions. Firstly, given the improvement for $\CVPP$, it
is natural to wonder whether any of the insights developed here can be used to improve the
complexity upper bound for $\CVP$. As mentioned previously, this would seem to require new
techniques, and we leave this as an open problem.

Secondly, while we now have a number of methods to navigate over the Voronoi graph, we have no lower
bounds on the lengths of the path they create. In particular, it is entirely possible that either
the ${\rm MV}$ path or the simple deterministic straight line path, also yield short paths on the
Voronoi graph. Hence, showing either strong lower bounds for these methods or new upper bounds is an
interesting open problem. In this vein, as mentioned previously, we do not know whether the expected
number of iterations for the randomized straight line procedure is inherently weakly polynomial. We 
leave this as an open problem.


\bibliographystyle{plain}
\bibliography{bib/lattices,bib/acg}


\end{document}